\newcommand{\CSP}[0]{\ensuremath{\textsc{CSP}}}
\newcommand{\QCSP}[0]{\ensuremath{\textsc{QCSP}}}
\newcommand{\bA}[0]{\mathcal{A}}
\newcommand{\bB}[0]{\mathcal{B}}
\newcommand{\bC}[0]{\mathcal{C}}
\newcommand{\bK}[0]{\mathcal{K}}
\newcommand{\bH}[0]{\mathcal{H}}
\newcommand{\bP}[0]{\mathcal{P}}
\renewcommand{\qed}{$\Box$}
\renewenvironment{proof}
{\noindent{\bf Proof.}\ }
{\hfill\qed\par\bigskip}
\newenvironment{proofof}[1]
{\noindent{\bf Proof of #1.}\ }
{\hfill\qed\par\bigskip}
\begin{document}

\title{Constraint Satisfaction with Counting Quantifiers\thanks{The authors
gratefully acknowledge the facilities of the Fields Institute in Toronto where
the some of the work on this project was done during the Thematic Program of the
institute in Mathematics of Constraint Satisfaction in August
2011.}\vspace{-1ex}}

\author{
Barnaby Martin\inst{1}\thanks{Author supported by EPSRC grant EP/G020604/1.},
Florent Madelaine\inst{2} and
Juraj Stacho\inst{3}\thanks{Author supported by EPSRC grant EP/I01795X/1.}
}

\institute{
School of Engineering and Computing Sciences, Durham University\\ 
Science Laboratories, South Road, Durham DH1 3LE, UK.\\
%(\email{barnaby.martin@durham.ac.uk})
\and
Clermont Universit\'{e}, Universit\'{e} d'Auvergne,\\
LIMOS, BP 10448, F-63000 Clermont-Ferrand, France.\\
%(\email{florent.madelaine@u-clermont1.fr})
\and
DIMAP and Mathematics Institute,\\
University of Warwick, Coventry CV4 7AL, UK.\\
%(\email{stacho@cs.toronto.edu})
\vspace{-3ex}
}

\maketitle

\begin{abstract}
We initiate the study of \,{\em constraint satisfaction problems}\, (CSPs) in the
presence of counting quantifiers, which may be seen as variants of CSPs in the
mould of \emph{quantified CSPs} (QCSPs).

\quad We show that a \textbf{single} counting
quantifier strictly between $\exists^{\geq 1}:=\exists$ and $\exists^{\geq
n}:=\forall$ (the domain being of size $n$) already affords the maximal possible
complexity of QCSPs (which have \textbf{both} $\exists$ and $\forall$), being
Pspace-complete for a suitably chosen template. 

\quad Next, we focus on the complexity of subsets of counting quantifiers on
clique and cycle templates. For cycles we give a full trichotomy -- all such
problems are in L, NP-complete or Pspace-complete.  For cliques we come close
to a similar trichotomy, but one case remains outstanding.

\quad Afterwards, we consider the generalisation of CSPs in which we augment the
extant quantifier $\exists^{\geq 1}:=\exists$ with the quantifier $\exists^{\geq
j}$ ($j \neq 1$). Such a CSP is already NP-hard on non-bipartite graph
templates. We explore the situation of this generalised CSP on bipartite
templates, giving various conditions for both tractability and hardness --
culminating in a classification theorem for general graphs. 

\quad Finally, we use
counting quantifiers to solve the complexity of a concrete QCSP whose complexity
was previously open.
\end{abstract}

\section{Introduction}
The \emph{constraint satisfaction problem} $\CSP(\mathcal{B})$, much studied in
artificial intelligence, is known to admit several equivalent formulations, two
of the best known of which are the query evaluation of primitive positive (pp)
sentences -- those involving only existential quantification and conjunction --
on $\mathcal{B}$, and the homomorphism problem to $\mathcal{B}$ (see, e.g.,
\cite{KolaitisVardiBook05}). The problem $\CSP(\mathcal{B})$ is NP-complete in
general, and a great deal of effort has been expended in classifying its
complexity for certain restricted cases. Notably it is conjectured
\cite{FederVardi,JBK} that for all fixed ${\cal B}$, the problem
$\CSP(\mathcal{B})$ is in P or NP-complete.  While this has not been settled in
general, a number of partial results are known -- e.g.  over structures of size
at most three \cite{Schaefer,Bulatov} and over smooth digraphs
\cite{HellNesetril,barto:1782}.  

A popular generalisation of the CSP involves considering the query evaluation
problem for \emph{positive Horn} logic -- involving only the two quantifiers,
$\exists$ and $\forall$, together with conjunction. The resulting
\emph{quantified constraint satisfaction problems} $\QCSP(\mathcal{B})$  allow
for a broader class, used in artificial intelligence to capture non-monotonic
reasoning, whose complexities rise to Pspace-completeness.

In this paper, we study counting quantifiers of the form $\exists^{\geq j}$,
which allow one to assert the existence of at least $j$ elements such that the
ensuing property holds. Thus on a structure $\mathcal{B}$ with domain of size
$n$, the quantifiers $\exists^{\geq 1}$ and $\exists^{\geq n}$ are precisely
$\exists$ and $\forall$, respectively. Counting quantifiers have been
extensively studied in finite model theory (see \cite{EbbinghausFlum,OttoBook}),
where the focus is on supplementing the descriptive power of various logics. Of
more general interest is the majority quantifier $\exists^{\geq n/2}$ (on a
structure of domain size $n$), which sits broadly midway between $\exists$ and
$\forall$. Majority quantifiers are studied across diverse fields of logic and
have various practical applications, \mbox{e.g.} in cognitive appraisal and
voting theory \cite{Voting}. They have also been studied in computational
complexity, \mbox{e.g.},~in~\cite{Lan04}.

We study variants of $\CSP(\mathcal{B})$ in which the input sentence to be
evaluated on $\mathcal{B}$ (of size $|B|$) remains positive conjunctive in its
quantifier-free part, but is quantified by various counting quantifiers from
some non-empty set. 

For $X \subseteq \{1,\ldots,|B|\}$, $X\neq \emptyset$, the $X$-CSP$(\bB)$ takes as input a sentence
given by a conjunction of atoms quantified by quantifiers of the form
$\exists^{\geq j}$ for $j \in X$. It then asks whether this sentence is true on
$\bB$. The idea to study $\{1,\ldots,|B|\}$-CSP$(\mathcal{B})$ is originally due
to Andrei Krokhin.

In Section~\ref{sec:single}, we consider the power of a single quantifier
$\exists^{\geq j}$. We prove that for each $n \geq 3$, there is a template
$\bB_n$ of size $n$, such that $\exists^{\geq j}$ ($1<j<n$) already has the full
complexity of QCSP, \mbox{i.e.}, $\{j\}$-CSP$(\bB_n)$ is Pspace-complete. 

In Section~\ref{sec:cliques_and_cycles}, we go on to study the complexity of
subsets of our quantifiers on clique and cycle templates, $\mathcal{K}_n$ and
$\mathcal{C}_n$, respectively. We derive the following classification theorems.
\vspace{-1ex}

\begin{theorem} \label{thm:cliques}
For $n\in \mathbb{N}$ and $X \subseteq \{1,\ldots,n\}$:\vspace{-1.5ex}
\begin{enumerate}[(i)]
\item
$X$-CSP$(\bK_n)$ is in L if $n \leq 2$ or $X\cap \big\{1,\ldots,\lfloor n/2
\rfloor\big\}=\emptyset$.
\item
$X$-CSP$(\bK_n)$ is NP-complete if $n>2$ and $X=\{1\}$.
\item
$X$-CSP$(\bK_n)$ is Pspace-complete if $n>2$ and either $j \in X$ for $1<j <
n/2$ or $\{1,j\} \subseteq X$ for $j\in\big\{\lceil n/2\rceil,\ldots
,n\big\}$.
\end{enumerate}
\end{theorem}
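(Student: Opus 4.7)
The plan is to prove each of the three parts separately, splitting part (iii) into its two sub-cases, and to re-use tools from Section~\ref{sec:single} where possible.

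Part (ii) is essentially classical. When $X=\{1\}$ and $n\geq 3$, a $\{1\}$-CSP$(\bK_n)$ instance is a primitive positive sentence whose atom graph must admit a homomorphism to $\bK_n$; this is exactly graph $n$-colourability, hence NP-complete, with the upper bound being standard.

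For part (i), I would handle the case $n\leq 2$ by direct evaluation on a constant-size template. In the main case $n\geq 3$ with $X\cap\{1,\ldots,\lfloor n/2\rfloor\}=\emptyset$, every allowed quantifier $\exists^{\geq j}$ satisfies $j>n/2$, so at any quantifier level strictly fewer than $n/2$ values may fail the body. Combined with the fact that on $\bK_n$ the atom $E(x,y)$ is simply disequality, the set of domain values ``forbidden'' to a variable by previously quantified variables is small. This locality should permit evaluation in $\L$ via a careful space-bounded combinatorial procedure that only needs to remember bounded-size forbidden sets at each level.

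For part (iii), I would treat the two sub-cases separately. When $\{1,j\}\subseteq X$ with $j\in\{\lceil n/2\rceil,\ldots,n\}$, I reduce from \QCSP$(\bK_n)$ (known Pspace-complete for $n\geq 3$) by replacing each subformula $\forall x\,\phi(x)$ with the gadget
\[
\exists y_1\cdots\exists y_{n-j}\,\bigwedge_{i<i'}E(y_i,y_{i'})\,\wedge\,\bigwedge_i\phi(y_i)\,\wedge\,\exists^{\geq j}x\Big(\bigwedge_i E(x,y_i)\wedge\phi(x)\Big).
\]
On $\bK_n$ this forces the $n-j$ distinct values $y_i$ together with the $\geq j$ disjoint witnesses for $x$ to exhaust the domain, so all $n$ values satisfy $\phi$ and $\forall$ is correctly simulated. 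When instead some $j\in X$ satisfies $1<j<n/2$, I would obtain Pspace-hardness by adapting the single-quantifier construction of Section~\ref{sec:single} to the clique template, possibly via a pp-interpretation of $\bB_n$ in $\bK_n$.

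The main obstacle is likely part (i): proving logspace rather than merely polynomial-time membership demands a specifically space-bounded argument, for which I would probably need to establish a structural normal form for high-threshold counting sentences on cliques. A secondary difficulty is the $1<j<n/2$ sub-case of part (iii), since transferring hardness from the tailor-made template of Section~\ref{sec:single} to the concrete template $\bK_n$ may require a more delicate reduction or a direct ad hoc construction.
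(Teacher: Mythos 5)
Your gadget for the $\{1,j\}\subseteq X$, $j\geq\lceil n/2\rceil$ sub-case of part~(iii) is semantically correct as a characterisation of $\forall$ on $\bK_n$, but the reduction it generates is not polynomial: the body $\phi$ appears $n-j+1$ times in your replacement, so processing the universal quantifiers of a \QCSP\ instance from the inside out multiplies the matrix by a factor of $n-j+1$ at each level, and $k$ nested universals yield $(n-j+1)^k$ copies of the innermost formula --- exponential blow-up whenever $j<n$. The duplication is forced by your choice to quantify the $y_i$'s with plain $\exists$: Prover then controls them, so you must add the conjuncts $\phi(y_i)$ to stop her from dodging $\phi$. The paper (Proposition~\ref{prop:ijCSPn}) avoids this with a single change: quantify the $n-j$ auxiliary variables $v^1,\ldots,v^{n-j}$ by $\exists^{\geq j}$ rather than $\exists$, form a clique with $v$, and impose nothing else on them. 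Since $j\geq\lceil n/2\rceil$, Adversary can force $n-j$ distinct values onto the auxiliaries, which pins Prover's $j$-element witness set for $v$ to exactly the $j$ remaining values, so Adversary effectively controls $v$ directly and the body $\phi$ needs to be written only once. This one idea is what keeps the reduction polynomial.

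On part~(i) you correctly identify the source of tractability (every quantifier has threshold $>n/2$, so forbidden sets stay small) but, as you acknowledge, you stop short of an \L\ algorithm. The paper (Proposition~\ref{prop:easy-cliques}) supplies the missing normal form: $\bK_n\models\Psi$ iff $\mathcal{D}_\psi$ contains no $(n-\lambda_i+1)$-star whose leaves all precede its centre $x_i$ in the quantifier order --- a local condition checkable in \L. For the $1<j<n/2$ sub-case of part~(iii), ``adapting the Section~\ref{sec:single} template via a pp-interpretation'' does not go through: counting quantifiers do not transfer under pp-interpretations that change the dimension, and $\bK_n$ cannot pp-define a ternary NAE relation with the needed properties. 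The paper instead gives a tailor-made reduction from $\QCSP\big(\bK_{\binom{2j+1}{j}}\big)$ via a gadget $\mathcal{G}_j$ encoding inequality of $j$-subsets of $\{1,\ldots,2j+1\}$ (Proposition~\ref{prop:jCSP2j+1}), followed by a padding argument (Corollary~\ref{cor:jCSPn}) to pass from $\bK_{2j+1}$ to general $\bK_n$. Part~(ii) you handle exactly as the paper does, by reduction to $\bK_n$-colourability.
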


\noindent This is a near trichotomy -- only the cases where $n$ is even and we
have the quantifier $\exists^{\geq n/2}$ remain open. For cycles, however, the
trichotomy is complete. 

\begin{theorem}\label{thm:cycles}
For $n\geq 3$ and $X \subseteq \{1,\ldots,n\}$, the problem $X$-CSP$(\bC_n)$ is
either in L, is NP-complete or is Pspace-complete. Namely:\vspace{-1.5ex}
\begin{enumerate}[(i)]
\item $X$-CSP$(\bC_n)\in{\rm L}$ if $n=4$, or $1 \notin X$, or $n$ is even and
$X\cap \big\{2,\ldots,n/2\big\}=\emptyset$.
\item $X$-CSP$(\bC_n)$ is NP-complete if $n$ is odd and $X=\{1\}$.
\item $X$-CSP$(\bC_n)$ is Pspace-complete in all other cases.
\end{enumerate}
\end{theorem}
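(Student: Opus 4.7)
Start with the simpler bounds. Containment in Pspace comes for free from QCSP, and case (ii) is immediate since $\{1\}$-$\CSP(\bC_n)=\CSP(\bC_n)$ is the graph homomorphism problem to an odd cycle, NP-complete by the Hell--Nesetril dichotomy (NP-membership is trivial).

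For the L-tractability in case (i), the crucial feature is the $2$-regularity of $\bC_n$: every vertex has exactly two neighbours. I would split into three scenarios. When $1 \notin X$, every quantifier is $\exists^{\geq j}$ with $j \geq 2$; any variable occurring in an edge atom has at most two admissible values, so $j \geq 3$ forces either failure or that the variable is unused in any edge atom, while $j=2$ degenerates to a tight local condition (both neighbours of the anchoring vertex must simultaneously satisfy the body). Either case reduces to an L-checkable combinatorial test. When $n=4$, the graph $\bC_4$ coincides with $K_{2,2}$, whose rich symmetry (including a swap of the two bipartition classes and the full symmetric group acting on each class) makes the counting condition depend only on the bipartite skeleton of the input. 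When $n$ is even and $X \cap \{2,\ldots,n/2\}=\emptyset$, the only available quantifiers are $\exists$ and $\exists^{\geq j}$ for $j > n/2$; edge constraints again cap a variable's admissible set at size $2 < j$, so such large quantifiers force their variables to be edge-free, and the remainder collapses to $\CSP(\bC_n)$, i.e., to an L-checkable bipartiteness test.

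For Pspace-hardness (case (iii)), the input has $1 \in X$ together with some $j \in X$ with $j \geq 2$ and, for even $n$, additionally $j \leq n/2$. The plan is to reduce from a known Pspace-complete QCSP, most naturally $\QCSP(\bC_n)$ itself in its Pspace-hard regime, by simulating $\forall = \exists^{\geq n}$ via $\exists$ and $\exists^{\geq j}$. Since $\exists^{\geq j} x\,\varphi(x)$ forces $\varphi$ on at least $j$ points, one pre-conjoins existentially quantified ``blocker'' gadgets that automatically discharge $\varphi$ at the remaining $n-j$ specific points, thereby amplifying $\exists^{\geq j}$ into a full $\forall$; for even $n$ the bound $j \leq n/2$ is critical so that the blocker fits within a single bipartition class and cannot be trivialised by the parity of $\bC_n$. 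The main obstacle is constructing these blockers uniformly across the whole $(n,j)$ range of case (iii): odd and even cycles need distinct gadgetry, and within each, different ratios $j/n$ may demand separate constructions. A plausible strategy is to establish Pspace-hardness of a single base case, say $\{1,2\}$-$\CSP(\bC_n)$ for suitable $n$, by a direct reduction from QCSP on a known Pspace-complete template, and then bootstrap the remaining cases via quantifier-simulation reductions between different $X$-$\CSP$ problems on $\bC_n$.
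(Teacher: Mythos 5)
Your structure mirrors the paper's (split into (i)/(ii)/(iii), dispatch (ii) by Hell--Nešetřil, simulate $\forall$ for (iii)), and (ii) is fine as stated. But there are two genuine gaps.

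First, in your argument for (i) you repeatedly claim that a counting quantifier $\exists^{\geq j}$ with $j\geq 3$ (or $j>n/2$ in the even case) forces its variable to be ``edge-free'' / ``unused in any edge atom.'' That is too strong and is in fact false: such a variable may perfectly well have edges to \emph{later} variables, because the Prover presents $j$ candidates before any edge constraint is tested, and for each candidate chosen by the Adversary the later neighbour still has the two neighbours of the chosen vertex available as witnesses. The correct structural restriction (which the paper isolates as claims (1a)--(1c) in Proposition~\ref{prop:cycles-i}) is that such a variable must have \emph{no predecessors} --- i.e., no neighbour that is quantified earlier. Distinguishing ``no predecessors'' from ``no edges'' is exactly what makes the Prover strategy in the tractable cases nontrivial: after eliminating impossible patterns, Prover still has to route values along the quantifier order through a graph that genuinely has edges. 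Your $n=4$ discussion also skips the preliminary step of disposing of $\exists^{\geq 3},\exists^{\geq 4}$ via this ``no-predecessor'' observation before the bipartiteness test can take over.

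Second, and more seriously, in case (iii) you propose reducing from $\QCSP(\bC_n)$ ``in its Pspace-hard regime.'' For \emph{even} $n$ there is no such regime: $\QCSP(\bH)$ is tractable for every bipartite $\bH$ (in L), so $\QCSP(\bC_n)$ cannot serve as a hardness source when $n$ is even --- which is precisely the part of (iii) where the constraint $j\leq n/2$ bites. The paper handles odd $n$ by reduction from $\QCSP(\bC_n)$ but must switch, for even $n$, to a reduction from $\QCSP(\bK_{n/2})$ built on the Feder--Hell--Huang gadget for retraction to even cycles; this is a substantially different and more delicate construction. Relatedly, your ``existentially quantified blocker'' idea for amplifying $\exists^{\geq j}$ into $\forall$ is not what works here: on a cycle, the paper's mechanism is a \emph{chain} of $\exists^{\geq j}$ variables joined by paths of length $j-1$, so that by additive-combinatorics considerations (Lemma~\ref{lem:additive}) the Adversary can steer the terminal variable to any desired value after enough links, while the Prover is forced at each link to offer exactly the $j$ reachable vertices. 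Existential ``blockers'' cannot do this on a sparse template like $\bC_n$, since an existential witness commits to one value and cannot simultaneously ``discharge'' the missing $n-j$ targets.

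Your intuitions about parity, bipartiteness, and quantifier simulation are in the right neighbourhood, but as written the proposal would prove (i) of a false claim and would fail entirely for even $n$ in (iii).
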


In Section~\ref{sec:extensions}, we consider $\{1,j\}$-CSP$(\bH)$, for $j \neq
1$ on graphs. The CSP is already NP-hard on non-bipartite graph templates. We
explore the situation of this generalised CSP on bipartite graph templates,
giving various conditions for both tractability and hardness, using and
extending results of Section~\ref{sec:cliques_and_cycles}. We are most
interested here in the distinction between P and NP-hard. To understand which of
these cases are Pspace-complete would include as a subclassification the
Pspace-complete cases of QCSP$(\mathcal{H})$, a question which has remained open
for five years \cite{CiE2006}.  We give a classification theorem for graphs in
fragments of the logic involving bounded use of $\exists^{\geq 2}$ followed by
unbounded use of $\exists$. In the case of QCSP ($\exists^{\geq n}$ instead of
$\exists^{\geq 2}$), this is perfectly natural and is explored with bounded
alternations in, e.g., \cite{HubieExRes,HubieNew,EdithH}, and with bounded use of
$\forall=\exists^{\geq n}$~in~\cite{hubie-sicomp}. We~prove that either there
exists such a fragment in which the problem is NP-hard or for all such
fragments the problem is in P.

Afterwards in Section~\ref{sec:new}, we use counting quantifiers to solve the
complexity of QCSP$(\bC^*_4)$, where $\bC^*_4$ is the
reflexive $4$-cycle, whose complexity was previously open.
Finally, in Section~\ref{sec:conclusion}, we give some closing remarks and
open problems.\vspace{-0.5ex}

\section{Preliminaries}\label{sec:prelims}

Let $\bB$ be a finite structure over a finite signature $\sigma$ whose domain
$B$ is of cardinality $|B|$. For $1 \leq j \leq |B|$, the formula $\exists^{\geq
j} x \ \phi(x)$ with \emph{counting quantifier} should be interpreted on $\bB$
as stating that there exist at least $j$ distinct elements $b \in B$ such that
$\bB \models \phi(b)$. Counting quantifiers generalise existential
($\exists:=\exists^{\geq 1}$), universal ($\forall:=\exists^{\geq |B|}$) and
(weak) majority ($\exists^{\geq |B|/2}$) quantifiers. Counting
quantifiers do not in general commute with themselves, viz $\exists^{\geq j} x
\exists^{\geq j} y \neq \exists^{\geq j} y \exists^{\geq j} x$ (in contrast,
$\exists$ and $\forall$ do commute with themselves, but not with one another). 

For $\emptyset \neq X \subseteq \{1,\ldots,|B|\}$, the $X$-CSP$(\bB)$ takes as input a sentence
of the form $\Phi:=Q_1 x_1 Q_2 x_2 \ldots Q_m x_m \ \phi(x_1,x_2,\ldots,x_m)$,
where $\phi$ is a conjunction of positive atoms of $\sigma$ and each $Q_i$ is of
the form $\exists^{\geq j}$ for some $j \in X$. The set of such sentences
forms the logic $X$-pp (recall the pp is primitive positive). The
yes-instances are those for which $\bB \models \Phi$. Note that all problems
$X$-CSP$(\bB)$ are trivially in Pspace, by cycling through all possible
evaluations for the variables.

The problem $\{1\}$-CSP$(\bB)$ is better-known as just CSP$(\bB)$, and
$\{1,|B|\}$-CSP$(\bB)$ is better-known as QCSP$(\bB)$. We will consider also the
logic $[2^m1^*]$-pp and restricted problem $[2^m1^*]$-CSP$(\bB)$, in which the
input $\{1,2\}$-pp sentence has prefix consisting of no more than $m$
$\exists^{\geq 2}$ quantifiers followed by any number of $\exists$ quantifiers
(and nothing else).

A homomorphism from $\bA$ to $\bB$, both $\sigma$-structures, is a function $h:A
\rightarrow B$ such that $(a_1,\ldots,a_r) \in R^\bA$ implies
$(h(a_1),\ldots,h(a_r)) \in R^\bB$, for all relations $R$ of~$\sigma$. A
frequent role will be played by the \emph{retraction} problem
Ret$(\mathcal{\bB})$ in which one is given a structure $\bA$ containing $\bB$,
and one is asked if there is a homomorphism from $\bA$ to $\bA$ that is the
identity on $\bB$. It is well-known that retraction problems are special
instances of CSPs in which the constants of the template are all named
\cite{FederHell98}.

In line with convention we consider the notion of hardness reduction in proofs
to be polynomial many-to-one (though logspace is sufficient for our results).

\subsection{Game characterisation}
\vspace{-1ex}

There is a simple game characterisation for the truth of sentences of the logic
$X$-pp on a structure $\bB$. Given a sentence $\Psi$ of $X$-pp, and a structure
$\bB$, we define the following game $\mathscr{G}(\Psi,\bB)$. Let $\Psi:=Q_1 x_1
Q_2 x_2 \ldots Q_m x_m \ \psi(x_1,x_2,\ldots,x_m)$. Working from the outside in,
coming to a quantified variable $\exists^{\geq j} x$, the \emph{Prover} (female)
picks a subset $B_x$ of $j$ elements of $B$ as witnesses for $x$, and an
\emph{Adversary} (male) chooses one of these, say $b_x$, to be the value of $x$.
Prover wins iff $\bB \models \psi(b_{x_1},b_{x_2},\ldots,b_{x_m})$. The
following comes immediately from the definitions.
\vspace{-0.7ex}

\begin{lemma}\label{lem:game}
Prover has a winning strategy in the game $\mathscr{G}(\Psi,\bB)$ iff $\bB
\models \Psi$.
\end{lemma}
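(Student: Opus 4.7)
The plan is to prove the lemma by induction on the number $m$ of quantifiers in $\Psi$. To make the induction go through cleanly, I would first strengthen the statement to cope with open formulas: for every $X$-pp formula $\Phi(y_1,\ldots,y_k)$ and every assignment $\bar{a}=(a_1,\ldots,a_k) \in B^k$, Prover has a winning strategy in the natural game $\mathscr{G}(\Phi,\bB,\bar{a})$ (where the free variables have already been fixed to $\bar{a}$) iff $\bB \models \Phi(\bar{a})$. The original lemma is then just the case $k=0$.

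In the base case $m=0$, the formula $\psi(\bar{y})$ is a conjunction of atoms, the game has no remaining moves, and Prover is declared the winner iff $\bB \models \psi(\bar{a})$; this is exactly what is required. For the inductive step, write $\Phi(\bar{y}) = \exists^{\geq j} x \ \Phi'(\bar{y},x)$ where $\Phi'$ has $m-1$ quantifiers. By the semantics of the counting quantifier, $\bB \models \Phi(\bar{a})$ iff the set $S := \{b \in B : \bB \models \Phi'(\bar{a},b)\}$ satisfies $|S| \geq j$.

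For the forward implication, assume $|S| \geq j$. Prover's strategy is to play any $j$-subset $B_x \subseteq S$ as her witness set; whatever $b_x \in B_x$ Adversary picks, we have $b_x \in S$, so $\bB \models \Phi'(\bar{a},b_x)$, and the induction hypothesis supplies a winning continuation for $\mathscr{G}(\Phi',\bB,(\bar{a},b_x))$. Concatenating these produces a winning strategy in $\mathscr{G}(\Phi,\bB,\bar{a})$. For the converse, suppose Prover has a winning strategy and let $B_x$ be her prescribed first move, a $j$-subset of $B$. For each $b \in B_x$, Adversary may choose $b_x := b$, and Prover must still win; by the induction hypothesis this forces $\bB \models \Phi'(\bar{a},b)$. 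Hence $B_x \subseteq S$ and $|S| \geq |B_x| = j$, so $\bB \models \Phi(\bar{a})$.

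There is really no substantial obstacle: the lemma is in essence a game-theoretic repackaging of the clause ``at least $j$ elements satisfy $\phi$'' in the definition of $\exists^{\geq j}$. The only thing to be careful about is to state the induction over open formulas with partial assignments, since during play the earlier-quantified variables accumulate values that must be threaded through to subsequent subgames; phrasing it this way avoids any awkwardness with substitution.
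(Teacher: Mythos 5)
Your proof is correct and is exactly the standard inductive argument that the paper has in mind when it says the lemma ``comes immediately from the definitions'' (the paper gives no explicit proof). Strengthening the induction to open formulas with partial assignments is the right way to make the argument airtight, and both directions of the inductive step are handled properly.
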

\vspace{-0.7ex}

We will often move seemlessly between the two characterisations of
Lemma~\ref{lem:game}.  One may alternatively view the game in the language of
homomorphisms. There is an obvious bijection between $\sigma$-structures with
domain $\{1,\ldots,m\}$ and conjunctions of positive atoms in variables
$\{v_1,\ldots,v_m\}$. From a structure $\bB$ build the conjunction $\phi_\bB$
listing the tuples that hold on $\bB$ in which element $i$ corresponds to
variable $v_i$. Likewise, for a conjunction of positive atoms $\psi$, let
$\mathcal{D}_\psi$ be the structure whose relation tuples are listed by $\psi$,
where variable $v_i$ corresponds to element $i$. The relationship of $\bB$ to
$\phi_\bB$ and $\psi$ to $\mathcal{D}_\psi$ is very similar to that of
\emph{canonical query} and \emph{canonical database} (see
\cite{KolaitisVardiBook05}), except there we consider the conjunctions of atoms
to be existentially quantified. For example, $\mathcal{K}_3$ on domain
$\{1,2,3\}$ gives rise to $\phi_{\mathcal{K}_3}:= \exists v_1,v_2,v_3 \
E(v_1,v_2) \wedge E(v_2,v_1) \wedge E(v_2,v_3) \wedge$ $E(v_3,v_2) \wedge
E(v_3,v_1) \wedge E(v_3,v_1)$.  The Prover-Adversary game
$\mathscr{G}(\Psi,\bB)$ may be seen as Prover giving $j$ potential maps for
element $x$ in $\mathcal{D}_\psi$ ($\psi$ is quantifier-free part of $\Psi$) and
Adversary choosing one of them. The winning condition for Prover is now that the
map given from $\mathcal{D}_\psi$ to $\bB$ is a homomorphism.

In the case of QCSP, \mbox{i.e.} $\{1,|B|\}$-pp, each move of a game
$\mathscr{G}(\Psi,\bB)$ is trivial for one of the players. For $\exists^{\geq
1}$ quantifiers, Prover gives a singleton set, so Adversary's choice is forced.
In the case of $\exists^{\geq |B|}$ quantifiers, Prover must advance all of $B$.
Thus, essentially, Prover alone plays $\exists^{\geq 1}$ quantifiers and
Adversary alone plays $\exists^{\geq |B|}$ quantifiers.
\vspace{-1.5ex}

\section{Complexity of a single quantifier}\label{sec:single}\vspace{-1ex}

In this section we consider the complexity of evaluating $X$-pp sentences when
$X$ is a singleton, \mbox{i.e.}, we have at our disposal only a single
quantifier.\vspace{-0.5ex}

{\em
\begin{enumerate}[(1)]
\item $\{1\}$-CSP$(\bB)$ is in $\mathrm{NP}$ for all $\bB$. For each $n \geq 2$,
there exists a template $\bB_n$ of size $n$ such that $\{1\}$-CSP$(\bB_n)$ is
$\mathrm{NP}$-complete. 

\item $\{|B|\}$-CSP$(\bB)$ is in $\mathrm{L}$ for all $\bB$. 

\item For each $n \geq 3$, there exists a template $\bB_n$ of size $n$ such that
$\{j\}$-CSP$(\bB_n)$ is $\mathrm{Pspace}$-complete for all $1<j<n$.
\end{enumerate}
}\vspace{-0.5ex}

\begin{proof}
Parts (1) and (2) are well-known (see \cite{Papa}, resp. \cite{CiE2008}).  For
(3), let $\bB_{\mathrm{NAE}}$ be the Boolean structure on domain $\{0,1\}$ with
a single ternary not-all-equal relation $R_{\mathrm{NAE}}:=\{0,1\}^3 \setminus
\{(0,0,0),(1,1,1)\}$.  To show Pspace-completeness, we reduce from
QCSP$(\bB_{\mathrm{NAE}})$, the {\em quantified
not-all-equal-$3$-satisfiability} (see \cite{Papa}).

We distinguish two cases.

\noindent{\bf Case I:} $j \leq \lfloor n/2 \rfloor$. Define $\bB_{n}$ on domain
$\{0,\ldots,n-1\}$ with a single unary relation $U$ and a single ternary
relation $R$. Set $U:=\{0,\ldots,j-1\}$ and set 
\smallskip

$R:=\{0,\ldots,n-1\}^3
\setminus \{(a,b,c) : \mbox{ $a,b,c$ either all odd or all even}\}.$
\smallskip

\noindent The even numbers will play the role of false $0$ and odd numbers
the role~of~true~$1$.\smallskip

\noindent{\bf Case II:} $j > \lfloor n/2 \rfloor$. Define $\bB_{n}$ on domain
$\{0,\ldots,n-1\}$ with a single unary relation $U$ and a single ternary
relation $R$. Set $U:=\{0,\ldots,j-1\}$ and set\smallskip

\noindent\hfill$R:=\{0,\ldots,n-1\}^3
\setminus \{(a,b,c) : \mbox{ $a,b,c\leq n-j$ and either all odd or all
even}\}.$\hfill\mbox{}\smallskip

\noindent In this case even numbers $\leq n-j$ play the role of false $0$ and
odd numbers $\leq n-j$ play the role of true $1$. The $j-1$ numbers
$n-j+1,\ldots,n-1$ are somehow universal and will always satisfy any $R$
relation.

The reduction we use is the same for Cases I and II. We reduce
QCSP$(\bB_{\mathrm{NAE}})$ to $\{j\}$-CSP$(\bB_{n})$. Given an input $\Psi:=Q_1
x_1 Q_2 x_2 \ldots Q_m x_m \ \psi(x_1,x_2,$ $\ldots,x_m)$ to the former
(\mbox{i.e.} each $Q_i$ is $\exists$ or $\forall$) we build an instance $\Psi'$
for the latter.  From the outside in, we convert quantifiers $\exists x$ to
$\exists^{\geq j} x$.  For quantifiers $\forall x$, we convert also to
$\exists^{\geq j} x$, but we add the conjunct $U(x)$ to the quantifier-free part
$\psi$.

We claim $\bB_{\mathrm{NAE}} \models \Psi$ iff $\bB_{n} \models \Psi'$.  For the
$\exists$ variables of $\Psi$, we can see that any $j$ witnesses from the domain
$B_{n}$ for $\exists^{\geq j}$ must include some element playing the role of
either false $0$ or true $1$ (and the other $j-1$ may always be found
somewhere). For the $\forall$ variables of $\Psi$, $U$ forces us to choose both
$0$ and $1$ among the $\exists^{\geq j}$ (and the other $j-2$ will come from
$2,\ldots,j-1$).~The~result~follows.\end{proof}\vspace{-3ex}

\section{Counting quantifiers on cliques and cycles}
\label{sec:cliques_and_cycles}
\vspace{-0.5ex}

\subsection{Cliques: proof of Theorem~\ref{thm:cliques}}
\vspace{-1ex}
\label{sec:cliques}

Recall that $\bK_n$ is the complete irreflexive graph on $n$ vertices. 

\begin{figure}[h!]
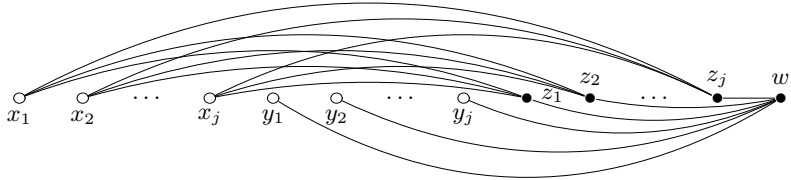

\vspace{-3ex}
\mbox{}\hfill
$\xy/r2pc/:
(0,0)*[o][F]{\phantom{s}}="x1";
(1,0)*[o][F]{\phantom{s}}="x2";
(2,0)*{\ldots};
(3,0)*[o][F]{\phantom{s}}="x4";
(4,0)*[o][F]{\phantom{s}}="y1";
(5,0)*[o][F]{\phantom{s}}="y2";
(6,0)*{\ldots};
(7,0)*[o][F]{\phantom{s}}="y4";
(8,0)*[o]{\bullet}="z1";
(9,0)*[o]{\bullet}="z2";
(10,0)*{\ldots};
(11,0)*[o]{\bullet}="z4";
(12,0)*[o]{\bullet}="w";
{\ar@{-}@/^1.5pc/ "x1";"z1"};
{\ar@{-}@/^2pc/ "x1";"z2"};
{\ar@{-}@/^3pc/ "x1";"z4"};
{\ar@{-}@/^1pc/ "x2";"z1"};
{\ar@{-}@/^1.5pc/ "x2";"z2"};
{\ar@{-}@/^2.5pc/ "x2";"z4"};
{\ar@{-}@/^0.5pc/ "x4";"z1"};
{\ar@{-}@/^1pc/ "x4";"z2"};
{\ar@{-}@/^2pc/ "x4";"z4"};
{\ar@{-}@/^2.5pc/ "w";"y1"};
{\ar@{-}@/^1.7pc/ "w";"y2"};
{\ar@{-}@/^1.1pc/ "w";"y4"};
{\ar@{-}@/^0.7pc/ "w";"z1"};
{\ar@{-}@/^0.3pc/ "w";"z2"};
{\ar@{-}@/^0pc/ "w";"z4"};
"x1"+(0,-0.3)*{x_1};
"x2"+(0,-0.3)*{x_2};
"x4"+(0,-0.3)*{x_j};
"y1"+(0,-0.3)*{y_1};
"y2"+(0,-0.3)*{y_2};
"y4"+(0,-0.3)*{y_j};
"z1"+(0.4,0.04)*{z_1};
"z2"+(0,0.3)*{z_2};
"z4"+(0,0.3)*{z_j};
"w"+(0,0.3)*{w};
\endxy
$\hfill\mbox{}
\vspace{-1ex}
\caption{The gadget $\mathcal{G}_j$.\label{fig:juraj}}
\vspace{-6ex}
\end{figure}

\begin{proposition}\label{prop:jCSP2j+1}
If $1<j$, then $\{j\}$-CSP$(\bK_{2j+1})$ is $\mathrm{Pspace}$-complete.
\end{proposition}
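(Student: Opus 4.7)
Membership in Pspace is generic (as noted in Section~\ref{sec:prelims}), so the task is Pspace-hardness. The natural approach is to reduce from a standard Pspace-complete QCSP --- most naturally QCSP$(\bB_{\mathrm{NAE}})$ as used in Section~\ref{sec:single} --- to $\{j\}$-CSP$(\bK_{2j+1})$, using the gadget $\mathcal{G}_j$ of Figure~\ref{fig:juraj} as the key building block.

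\textbf{Game-theoretic heart of the reduction.} By Lemma~\ref{lem:game}, each quantifier $\exists^{\geq j} x$ on $\bK_{2j+1}$ corresponds to Prover committing to a $j$-subset $W_x \subseteq B$, after which Adversary picks $b_x \in W_x$; the complement $B \setminus W_x$ has size $j{+}1$. Hence if $x$ is forced by edge atoms to differ from $j$ distinct previously chosen values, its witness set is pinned to a $j$-subset of a $(j{+}1)$-set and Prover may exclude exactly one designated value from Adversary's reach. This ``exclude-one-of-$(j{+}1)$'' primitive is the heart of the simulation. The point of $\mathcal{G}_j$ is precisely to manufacture such a pinned situation for each quantifier of the source QCSP: pessimistic play on $z_1, \ldots, z_j$ forces Prover (in any winning strategy) to have them realise a full $j$-subset $Z \subseteq B \setminus \{b_w\}$, after which each $x_i$ must be drawn from the $(j{+}1)$-set $B \setminus Z$. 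Combined with further edge atoms from the $x_i$'s to fixed ``colour'' variables held constant elsewhere in $\Psi'$, this lets Prover cut Adversary's effective range down to a designated two-element set representing the Booleans, thereby simulating a $\forall$ of the source QCSP. An $\exists$ of the source QCSP is simulated by Prover picking her $j$-set to consist of one designated value together with $j{-}1$ dummies that further edges render semantically equivalent, and the NAE relation is then encoded by edge atoms on the designated Boolean positions.

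\textbf{Main obstacle.} The principal difficulty is correctness of the simulation, specifically the $(\Leftarrow)$ direction: extracting from a winning Prover strategy for $\Psi'$ on $\bK_{2j+1}$ a winning strategy for $\Psi$ on $\bB_{\mathrm{NAE}}$. The worry is that Prover might coerce Adversary onto dummy values where downstream constraints hold vacuously, and so win in $\mathscr{G}(\Psi', \bK_{2j+1})$ despite the underlying QCSP being false. One must therefore verify that the architecture of $\mathcal{G}_j$ guarantees that Adversary always has a reply landing on one of the designated Boolean elements, forcing Prover to honour the intended semantics. Checking this across all Prover strategies and Adversary replies, and confirming that the $(\Rightarrow)$ direction is indeed the routine translation of an NAE strategy into the $j$-subset discipline described above, is where the main case analysis lies.
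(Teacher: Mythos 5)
Your proposal diverges from the paper's proof in a way that, as described, does not go through. The paper reduces from QCSP$\big(\mathcal{K}_{\binom{2j+1}{j}}\big)$, quantified $\binom{2j+1}{j}$-colouring, not from QCSP$(\bB_{\mathrm{NAE}})$. This choice of source problem is not incidental: $\bB_{\mathrm{NAE}}$ carries a \emph{ternary} relation, whereas the target template $\bK_{2j+1}$ has only a binary edge relation, so there is no direct way to render an NAE atom by edge atoms in the target. Your closing remark that ``the NAE relation is then encoded by edge atoms on the designated Boolean positions'' papers over exactly this mismatch. The templates $\bB_n$ of Section~\ref{sec:single} were hand-built with a unary relation $U$ and a ternary relation $R$ specifically so that NAE could be simulated; none of that machinery exists on a bare clique. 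Likewise there are no ``fixed colour variables held constant'' available in the target --- $X$-pp sentences over $\bK_{2j+1}$ have no constants and no unary predicates, so you cannot pin a designated two-element set to represent $\{0,1\}$.

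You have also inverted the quantifier structure of the gadget $\mathcal{G}_j$. In the paper, $x_1,\ldots,x_j$ and $y_1,\ldots,y_j$ are \emph{free} variables of the gadget (quantified earlier in $\Psi'$) and $z_1,\ldots,z_j,w$ are quantified \emph{innermost}, as existential checkers. The gadget's purpose is the observation that $\exists^{\geq j} z_1,\ldots,z_j,w\ \phi_{\mathcal{G}_j}$ holds on $\bK_{2j+1}$ iff $|\{x_1,\ldots,x_j\}\cap\{y_1,\ldots,y_j\}|<j$; when both sides realise genuine $j$-subsets, this is precisely $\{x_1,\ldots,x_j\}\neq\{y_1,\ldots,y_j\}$. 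That is, the gadget is an \emph{inequality test on $j$-subsets}, which is exactly the edge relation of $\bK_{\binom{2j+1}{j}}$ once vertices are identified with $j$-subsets of $\{1,\ldots,2j+1\}$. Your reading of the gadget --- $z$'s forcing the later $x_i$ to lie in a $(j{+}1)$-set so that Prover can ``exclude one'' value --- has the $x$'s coming after the $z$'s, and recasts a symmetric subset-inequality test as an asymmetric exclusion primitive; neither matches what $\mathcal{G}_j$ actually does. The reduction then proceeds by blowing up each vertex $v$ of the source into $j$ variables $v^1,\ldots,v^j$, using $\mathcal{G}_j$ for edges, and simulating universal quantification by extra clique-joined auxiliaries $v^{i,1},\ldots,v^{i,j+1}$ that allow Adversary to force any $j$-subset on $\{v^1,\ldots,v^j\}$. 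None of this appears in your sketch, and the piece you would need for NAE (encoding a ternary relation) has no realisation on $\bK_{2j+1}$. The correct move, which you are missing, is to pick a source QCSP whose sole relation is binary and whose semantics matches what the gadget can test.
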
\vspace{-1ex}

\begin{proof}
By reduction from QCSP$\big(\mathcal{K}_{\binom{2j+1}{j}}\big)$, \emph{quantified
$\binom{2j+1}{j}$-colouring}, which is Pspace-complete by
\cite{OxfordQuantifiedConstraints}. The key part of our proof involves the
gadget $\mathcal{G}_j$, in Figure \ref{fig:juraj}, having vertices
$x_1,\ldots,x_j,y_1,\ldots,y_j,z_1,\ldots,z_j,w$ and all possible edges between
$\{x_1,\ldots,x_j\}$ and $\{z_1,\ldots,z_j\}$, and between $w$ and
$\{y_1,\ldots,y_j,z_1,\ldots,z_j\}$.  The left $2j$ vertices represent free
variables $x_1,\ldots,x_j,$ $y_1,\ldots,y_j$.  Observe that $\exists^{\geq j}
z_1,\ldots,z_j,w \ \phi_{\mathcal{G}_j}$ is true on $\bK_{2j+1}$ iff
$|\{x_1,\ldots,x_j\} \cap \{y_1,\ldots,y_j\}|<j$.  If $|\{x_1,\ldots,x_j\}|=
|\{y_1,\ldots,y_j\}|=j$, this is equivalent~to $\{x_1\ldots x_j\} \neq
\{y_1\ldots y_j\}$. Thus this gadget will help us to encode the edge relation on
$\mathcal{K}_{\binom{2j+1}{j}}$ where we represent vertices by sets
$\{a_1,\ldots,a_j\} \subset \{1,\ldots,2j+1\}$ with $|\{a_1,\ldots,a_j\}|=j$.

Consider an instance $\Psi$ of QCSP$\big(\mathcal{K}_{\binom{2j+1}{j}}\big)$. We
construct the instance $\Psi'$ of $\{j\}$-CSP$(\bK_{2j+1})$ as follows. From the
graph $\mathcal{D}_\psi$, build $\mathcal{D}_{\psi'}$ by transforming each
vertex $v$ into an independent set of $j$ vertices $\{v^1,\ldots,v^j\}$, and
each edge $uv$ of ${\cal D}_{\psi}$ to an instance of the gadget $\mathcal{G}_j$
in which the $2j$ free variables correspond to $u^1,\ldots,u^j,v^1,\ldots,v^j$.
The other variables of the gadget $\{z_1,\ldots,z_j,w\}$ are unique to each edge
and are quantified innermost in $\Psi'$ in the order $z_1,\ldots,z_j,w$. 

It remains to explain the quantification of the variables of the form
$v^1,\ldots,v^j$. We follow the quantifier order of $\Psi$.  Existentially
quantified variables $\exists v$ of $\Psi$ are quantified as $\exists^{\geq j}
v^1,\ldots,v^j$ in $\Psi'$.  Universally quantified variables $\forall v$ of
$\Psi$ are also quantified $\exists^{\geq j} v^1,\ldots,v^j$ in $\Psi'$, but we
introduce additional variables $v^{1,1},\ldots,v^{1,j+1},$ $\ldots,$
$v^{j,1},\ldots,v^{j,j+1}$ before $v^1,\ldots,v^j$ in the quantifier order of
$\Psi'$, and for each $i \in \{1,\ldots,j\}$, we join $v^{i,1},\ldots,v^{i,j+1}$
into a clique with $v^i$.

It is now not difficult to verify that $\bK_{\binom{2j+1}{j}} \models \Psi$ iff
$\bK_{2j+1} \models \Psi'$.  \\(For lack of space, we omit further details;
please consult the appendix.)
\end{proof}
\vspace{-3ex}

\begin{corollary}\label{cor:jCSPn}
If $1<j<n/2$, then $\{j\}$-CSP$(\bK_{n})$ is $\mathrm{Pspace}$-complete.
\end{corollary}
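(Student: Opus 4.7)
The plan is to reduce the already-Pspace-hard problem $\{j\}$-\CSP$(\bK_{2j+1})$ of Proposition~\ref{prop:jCSP2j+1} to $\{j\}$-\CSP$(\bK_n)$ for every $n \geq 2j+2$. Combined with the base case $n=2j+1$, this will cover the entire range $1 < j < n/2$ of the corollary.

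Given an input $\Psi$ for $\{j\}$-\CSP$(\bK_{2j+1})$, I would build $\Psi'$ by prepending a block $\exists^{\geq j} u_1 \exists^{\geq j} u_2 \cdots \exists^{\geq j} u_{n-2j-1}$ of $n-2j-1$ fresh ``scapegoat'' variables, and enlarging the quantifier-free matrix by the atoms $E(u_i,u_k)$ for all $i<k$ together with $E(u_i,v)$ for every original variable $v$ of $\Psi$. Because the edge relation of $\bK_n$ is exactly inequality, these atoms force the values taken by $u_1,\ldots,u_{n-2j-1}$ (i.e.\ Adversary's choices from Prover's $j$-subsets) to be pairwise distinct and to differ from the value of every variable of $\Psi$. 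In the game $\mathscr{G}(\Psi',\bK_n)$ of Lemma~\ref{lem:game}, at the $i$-th outer move at least $n-(i-1) \geq 2j+2$ domain elements are still unused, comfortably exceeding $j$, so Prover can always produce a legal $j$-subset of candidates for $u_i$.

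After the outer block, exactly $2j+1$ vertices of $\bK_n$ remain available for the inner variables; by vertex-transitivity of $\bK_n$ this residual set is isomorphic to $\bK_{2j+1}$ regardless of the particular $u_i^*$'s that Adversary chose. This symmetry should yield $\bK_n \models \Psi'$ iff $\bK_{2j+1} \models \Psi$: a winning Prover strategy for $\Psi$ on $\bK_{2j+1}$ transports along any bijection with the surviving clique to a winning continuation of $\Psi'$ on $\bK_n$, and conversely any Prover strategy for $\Psi'$ on $\bK_n$ must keep its candidates for every variable $v$ of $\Psi$ inside that surviving $(2j+1)$-set (otherwise Adversary would play $v = u_i^*$ and violate an $E(u_i,v)$ atom), so it restricts to a winning $\bK_{2j+1}$-strategy for $\Psi$.

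The construction is clearly polynomial in $|\Psi|$ since $n-2j-1$ is a fixed constant, and Pspace membership of $\{j\}$-\CSP$(\bK_n)$ is immediate by cycling through valuations as noted in Section~\ref{sec:prelims}. The step I expect to require the most care is the symmetry argument identifying the ``remaining $2j+1$ vertices'' with an honest copy of $\bK_{2j+1}$ uniformly across Adversary's choices; the rest is a routine counting check combined with the game characterisation.
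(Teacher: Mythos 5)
Your proposal is correct and follows essentially the same route as the paper: both reduce from $\{j\}$-\CSP$(\bK_{2j+1})$ by prepending an outermost $(n-2j-1)$-clique on fresh $\exists^{\geq j}$-quantified variables, joined by edges to every original variable, so that Adversary's (necessarily pairwise-distinct) choices for those variables carve out a residual $\bK_{2j+1}$ on which the original game is played. Your write-up just spells out the details the paper leaves implicit (Prover keeping the $u_i$-sets disjoint from earlier choices, the inequality semantics of $E$, and the symmetry argument).
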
\vspace{-1ex}

\begin{proof}
We reduce from $\{j\}$-CSP$(\bK_{2j+1})$ and appeal to
Proposition~\ref{prop:jCSP2j+1}. Given an input $\Psi$ for
$\{j\}$-CSP$(\bK_{2j+1})$, we build an instance $\Psi'$ for
$\{j\}$-CSP$(\bK_{n})$ by adding an $(n-2j-1)$-clique on new variables,
quantified outermost in $\Psi'$, and link by an edge each variable of this
clique to every other variable. Adversary chooses $n-2j-1$ elements of the
domain for this clique, effectively reducing the domain size to $2j+1$ for the
rest. Thus $\bK_n \models \Psi'$ iff $\bK_{2j+1} \models \Psi$ follows.
\end{proof}
\vspace{-3ex}

\begin{proposition}\label{prop:ijCSPn}
If $1<j\leq n$, then $\{1,j\}$-CSP$(\bK_{n})$ is $\mathrm{Pspace}$-complete.
\end{proposition}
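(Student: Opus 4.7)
The plan is to reduce QCSP$(\bK_n)=\{1,n\}$-CSP$(\bK_n)$ (Pspace-complete for $n\geq 3$) to $\{1,j\}$-CSP$(\bK_n)$. For $j=n$ there is nothing to do, and in every case the upper bound is inherited from the general Pspace membership of $X$-CSP$(\bB)$. So assume $1<j<n$. Given an instance $\Psi = Q_1 x_1 \cdots Q_m x_m\,\psi$ of QCSP$(\bK_n)$, I build a $\{1,j\}$-pp instance $\Psi'$ by retaining every $\exists x_i$ and replacing each $\forall x_i$ by the following gadget. Setting $k := n-j+1$, introduce fresh variables $y^{(i)}_1,\ldots,y^{(i)}_k$, quantify them $\exists^{\geq j} y^{(i)}_1 \cdots \exists^{\geq j} y^{(i)}_k$, add the conjuncts $E(y^{(i)}_a, y^{(i)}_b)$ for $1\leq a<b\leq k$ forcing them to span a $k$-clique (realisable in $\bK_n$ since $k\leq n-1$), and substitute $y^{(i)}_k$ for $x_i$ throughout $\psi$. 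The transformation is of polynomial size.

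The heart of correctness is a gadget lemma: on $\bK_n$, for every formula $\chi(y)$, the sentence $\exists^{\geq j} y_1 \cdots \exists^{\geq j} y_k\,\bigl(\bigwedge_{1\leq a<b\leq k} E(y_a,y_b) \wedge \chi(y_k)\bigr)$ is equivalent to $\forall y\,\chi(y)$. I argue this through the game of Lemma~\ref{lem:game}. For the direction $\forall y\,\chi(y)\Rightarrow\text{gadget}$, Prover can always respect the clique constraint: in round $r$ she picks a set $S_r$ of size $j$ disjoint from the previous Adversary picks, which is possible because the remaining universe has size $n-(r-1)\geq j$ for $r\leq k$; she then wins the terminal $\chi(b_k)$ sub-game by the strategy granted by $\forall y\,\chi(y)$. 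For the reverse direction, the key observation is that for any prescribed $v\in[n]$, Adversary can force $b_k=v$: in each round $r<k$ he picks $b_r\in S_r\setminus\{v\}$, which is nonempty since $|S_r\setminus\{v\}|\geq j-1\geq 1$; this ensures $v\notin\{b_1,\ldots,b_{k-1}\}$, so in round $k$ Prover's set is forced to $S_k=[n]\setminus\{b_1,\ldots,b_{k-1}\}$, which contains $v$. Hence if some $\chi(v)$ failed, Prover would lose against this Adversary strategy.

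It remains to compose the gadget lemma through the quantifier prefix of $\Psi$. The forward direction $\bK_n\models\Psi\Rightarrow\bK_n\models\Psi'$ is straightforward: Prover reads each $b^{(i)}_k$ as the value of the $i$-th originally universal variable and plays the gadgets naively on top of her $\Psi$-strategy for the existentials. The main obstacle, typical of such game-based reductions, is the converse: Prover's existential responses in $\Psi'$ may a priori depend on the entire history of Adversary's gadget picks, not only on the diagonals $b^{(i)}_k$. I handle this by fixing, for each target tuple $(v_1,\ldots,v_m)\in[n]^m$, the canonical Adversary strategy above, which realises $b^{(i)}_k=v_i$ in gadget $i$; the responses of a winning Prover $\Psi'$-strategy against this canonical Adversary then define a valid $\Psi$-strategy, since her $z_i$ ends up depending only on $(v_1,\ldots,v_i)$. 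A routine induction on the quantifier prefix of $\Psi$, alternating use of the gadget lemma with the trivial existential step, then yields the equivalence and completes the reduction.
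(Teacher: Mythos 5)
Your proof is correct and takes essentially the same approach as the paper: both simulate each $\forall v$ by a clique of $n-j+1$ fresh $\exists^{\geq j}$-quantified variables with the last one standing in for $v$, so that Adversary can steer the final choice to any prescribed value while Prover can always satisfy the clique constraint. The paper only sketches this gadget in two sentences and defers the correctness argument to Proposition~\ref{prop:jCSP2j+1}; your explicit game analysis supplies the details that the paper elides.
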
\vspace{-1ex}

\begin{proof}
By reduction from QCSP$(\bK_n)$. We simulate existential quantification $\exists
v$ by itself, and universal quantification $\forall v$ by the
introduction of $(n-j+1)$ new variables $v^1,\ldots,v^{n-j}$, joined in a clique
with $v$, and quantified by $\exists^{\geq j}$ before $v$ (which
is also quantified by $\exists^{\geq j}$). The argument follows as in
Proposition~\ref{prop:jCSP2j+1}.
\end{proof}
\vspace{-1ex}

Define the \mbox{$n$-star} $\bK_{1,n}$ to be the graph on vertices
$\{0,1,\ldots,n\}$ with edges \mbox{$\{(0,j),(j,0)~:~j\geq 1\}$} where $0$ is
called the \emph{centre} and the remainder are \emph{leaves}. 

\begin{proposition}\label{prop:easy-cliques}
If $X\cap\{1,\ldots,\lfloor n/2 \rfloor\} = \emptyset$, then $X$-CSP$(\bK_n)$ is
in $\mathrm{L}$.
\end{proposition}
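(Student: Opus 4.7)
The plan is to reduce Prover's winning condition in the game $\mathscr{G}(\Psi,\bK_n)$ of Lemma~\ref{lem:game} to two local properties of the input sentence $\Psi$ that are plainly checkable in logspace. Write $\ell_j := n - j$; the hypothesis $X \cap \{1,\ldots,\lfloor n/2 \rfloor\} = \emptyset$ forces $\ell_j < n/2 < j$ for every $j \in X$. For a variable $y$ in the prefix of $\Psi$ let $j_y \in X$ denote the counting index of its quantifier, and let $N(y)$ be the set of variables quantified strictly earlier than $y$ that share an edge atom with $y$ in the quantifier-free part $\phi$. I will prove that $\bK_n \models \Psi$ if and only if (a) $\phi$ contains no self-loop atom $E(v,v)$, and (b) $|N(y)| \leq \ell_{j_y}$ for every variable $y$.

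For the `if' direction Prover uses the greedy strategy: upon reaching $y$ she has already observed the Adversary's values $b_x$ for each $x\in N(y)$, and she picks $B_y$ to be any $j_y$-element subset of the domain of $\bK_n$ disjoint from $\{b_x : x \in N(y)\}$. Such a set exists because $n - |N(y)| \geq n - \ell_{j_y} = j_y$. Adversary's subsequent move $b_y$ then automatically differs from every $b_x$ with $x \in N(y)$, discharging all edge atoms that link $y$ to an earlier variable; since every atom of $\phi$ is settled at the later of its two endpoints, and (a) excludes self-loops, the resulting assignment is a homomorphism from $\mathcal{D}_\phi$ to $\bK_n$.

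For the `only if' direction, if (a) fails then $\Psi$ is trivially false on $\bK_n$; otherwise suppose (b) fails at some $y^*$, so $|N(y^*)| \geq \ell_{j_{y^*}} + 1$. Enumerate $N(y^*)$ in quantifier order as $x_1, \ldots, x_k$; Adversary's strategy, for $i \leq \ell_{j_{y^*}} + 1$, is to pick $b_{x_i} \in B_{x_i}$ distinct from $b_{x_1},\ldots,b_{x_{i-1}}$, and to play arbitrarily at every other point (whether at an $x_i$ with $i > \ell_{j_{y^*}} + 1$ or at a variable outside $N(y^*)$). This greedy move is feasible at step $i$ because he only has to avoid $i - 1 \leq \ell_{j_{y^*}} < n/2 < j_{x_i} \leq |B_{x_i}|$ values, so $B_{x_i}$ always contains a fresh element. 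When Prover subsequently picks $B_{y^*}$ of size $j_{y^*}$, the forbidden set $F := \{b_x : x \in N(y^*)\}$ satisfies $|F| \geq \ell_{j_{y^*}} + 1$, so $|B_{y^*}| + |F| \geq j_{y^*} + \ell_{j_{y^*}} + 1 = n + 1$ and pigeonhole forces $B_{y^*} \cap F \neq \emptyset$; Adversary selects $b_{y^*}$ in that intersection, violating some atom $E(x_i, y^*)$ or $E(y^*, x_i)$ of $\phi$.

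Both conditions are computable in $\L$: one scans $\phi$ for a self-loop atom, and for each $y$ tallies $|N(y)|$ by sweeping the earlier portion of the prefix and counting, with only $O(\log|\Psi|)$ bits, the variables that share an edge atom with $y$, then comparing against $n - j_y$. The crux of the argument is the `only if' direction, where one must verify that Adversary's spreading strategy defeats \emph{any} Prover strategy; this rests precisely on the gap $j > n/2 > \ell_j$ supplied by the hypothesis on $X$, which simultaneously guarantees that Adversary has a fresh element available at every spreading step and that he eventually accumulates more forbidden values than Prover can dodge at $y^*$.
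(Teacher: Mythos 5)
Your proposal is correct and follows essentially the same route as the paper: both characterize yes-instances by the condition that every variable $y$ has at most $n - j_y$ adjacent predecessors (your condition (b) is precisely the negation of the paper's ``$(n-\lambda_i+1)$-star with leaves before the centre'' condition), and both directions are proved by the same greedy-avoidance Prover strategy and distinct-value-spreading Adversary strategy. You are in fact slightly more careful than the paper in making explicit the self-loop exclusion (a) and the feasibility of Adversary's spreading step, both of which the paper only asserts in passing.
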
\vspace{-1ex}

\begin{proof}
Instance $\Psi$ of $X$-CSP$(\bK_n)$ of the form $\exists^{\geq \lambda_1} x_1
\ldots \exists^{\geq \lambda_m} x_m $ $\psi(x_1,\ldots,x_m)$ induces the graph
$\mathcal{D}_\psi$, which we may consider totally ordered (the order is given
left-to-right ascending by the quantifiers). We claim that $\bK_n \models \Psi$
iff $\mathcal{D}_\psi$ does not contain as a subgraph (not necessarily induced)
a $(n-\lambda_i+1)$-star in which the $n-\lambda_i+1$ leaves all come before the
centre $x_i$ in the ordering.

($\Rightarrow$) If $\mathcal{D}_\psi$ contains such a star, then $\Psi$
is a no-instance, as we may give a winning strategy for Adversary in the game
$\mathscr{G}(\Psi,\bK_n)$. Adversary should choose distinct values for the
variables associated with the $n-\lambda_i+1$ leaves of the star (can always be
done as each of the possible quantifiers assert existence of $> n/2$ elements
and $n-\lambda_i<n/2$), whereupon there is no possibility for Prover to
choose $\lambda_i$ witnesses to the variable $x_i$ associated with the centre.

($\Leftarrow$) If $\mathcal{D}_\psi$ does not contain such a star, then we give the
following winning strategy for Prover in the game $\mathscr{G}(\Psi,\bK_n)$.
Whenever a new variable comes up, its corresponding vertex in $\mathcal{D}_\psi$
has $l<n-\lambda_i+1$ adjacent predecessors, which were answered with
$b_1,\ldots,b_l$. Prover suggests any set of size $\lambda_i$ from $B \setminus
\{b_1,\ldots,b_l\}$ (which always exists) and the result follows.
\end{proof}
\vspace{-2ex}

\begin{proofof}{Theorem~\ref{thm:cliques}}
For $n\leq 2$ see \cite{CiE2006}, and for (ii) see \cite{HellNesetril}.  The
remainder of (i) is proved as Proposition~\ref{prop:easy-cliques} while
Corollary~\ref{cor:jCSPn} and Proposition \ref{prop:ijCSPn} give
(iii).~\end{proofof}
\vspace{-4.5ex}

\subsection{Cycles: proof of Theorem~\ref{thm:cycles}}
\vspace{-0.5ex}

Recall that $\bC_n$ denotes the irreflexive symmetric cycle on $n$ vertices.
We consider $\bC_n$ to have vertices $\{0,1,\ldots,n-1\}$ and edges
$\big\{(i,j)~:~|i-j|\in\{1,n-1\}\big\}$.

In the forthcoming proof, we use the following elementary observation from
additive combinatorics.  Let $n\geq 2$, $j\geq 1$, and $A,B$ be sets of
integers. Define:\vspace{0.3ex}

$\bullet~~A+_nB=\big\{(a+b)~{\rm mod}~n~\big|~a\in A,\,b\in B\big\}\qquad\bullet~~ j\times_n
A=\underbrace{A+_n\ldots+_n A}_{j~{\rm times}}$
\vspace{-4ex}

\begin{lemma}\label{lem:additive}
Let $n\geq 3$ and $2\leq j<n$. Then

\noindent\hfill
$\Big|\,j\times_n\{-1,+1\}\,\Big|=\left\{\begin{array}{l@{\quad}l}j+1&n~{\rm
is~odd}\\ \min\big\{j+1,n/2\big\}&n~{\rm is~even}\end{array}\right.$\hfill\mbox{}

~~\,$\Big|\,n\times_n\{-1,+1\}\,\Big|=
\Big|\,n\times_n\{-2,0,+2\}\,\Big|
=\left\{\begin{array}{l@{\quad}l}n&n~{\rm
is~odd}\\ n/2&n~{\rm is~even}\end{array}\right.$\hfill\mbox{}
\end{lemma}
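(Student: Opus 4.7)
The plan is to reduce each identity to an elementary count of residues modulo $n$ in an explicit arithmetic progression. The starting observation is that, viewed as a multiset of integers before reducing mod $n$, the sums of $j$ terms each in $\{-1,+1\}$ are exactly $\{j - 2k \,:\, 0 \leq k \leq j\}$, an arithmetic progression of length $j+1$ with common difference $2$. Two such terms $j - 2k$ and $j - 2k'$ coincide modulo $n$ iff $n \mid 2(k - k')$, equivalently iff $(n/\gcd(n,2)) \mid (k-k')$. So the count of distinct residues in $j \times_n \{-1,+1\}$ is determined by how the window $\{0,1,\ldots,j\}$ meets cosets of this step.

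For the first displayed identity (range $2 \leq j < n$): if $n$ is odd then $\gcd(n,2)=1$, so the step is $n$; since $|k-k'| \leq j < n$ no nontrivial collisions occur and there are exactly $j+1$ residues. If $n$ is even the step is $n/2$, and the window of size $j+1$ contributes $\min\{j+1, n/2\}$ residues, matching the claim.

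For the second displayed identity, the $\{-1,+1\}$ clause is the $j=n$ specialisation of the same argument: in the odd case $|k-k'| \leq n$ admits exactly one collision, namely $k=0$ against $k=n$ (both yielding residue $0$), leaving $n$ residues; in the even case $n \geq n/2$ gives $n/2$. For the $\{-2,0,+2\}$ clause, any sum of $n$ such terms equals $2(a-b)$ where $a,b \geq 0$ count the $+2$'s and $-2$'s and $a+b \leq n$; the integer image is therefore $\{2m \,:\, -n \leq m \leq n\}$, and modulo $n$ this is exactly the image of the doubling map $m \mapsto 2m$ on $\mathbb{Z}_n$ (the extra values only repeat residues), i.e.\ the subgroup $2\mathbb{Z}_n$, of order $n/\gcd(n,2)$ --- which is $n$ if $n$ is odd and $n/2$ if $n$ is even.

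The main obstacle is purely bookkeeping: controlling the $\min$ in the even-$n$ case and handling the boundary $j=n$ cleanly in the first argument, plus recognising the $\{-2,0,+2\}$-sumset as (the preimage of) the doubling subgroup. There is no content beyond elementary modular arithmetic.
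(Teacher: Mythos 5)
Your argument is correct. The paper states Lemma~\ref{lem:additive} without proof, calling it an elementary observation from additive combinatorics, so there is no paper proof to compare against; your reduction --- writing the integer sums as the arithmetic progression $\{\,j-2k : 0\le k\le j\,\}$ and counting residues of $k$ modulo $n/\gcd(n,2)$ via the collision criterion $n\mid 2(k-k')$ --- is precisely the standard computation one would supply. Each step checks out, including the boundary case $j=n$ and the identification of $n\times_n\{-2,0,+2\}$ with the image of the doubling map $m\mapsto 2m$ on $\mathbb{Z}_n$ (since $a-b$ really does realise every value in $[-n,n]$ subject to $a+b\le n$).
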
\vspace{-3ex}

\begin{proposition}\label{prop:cycles-i}
If $n\geq 3$, then $X$-CSP$(\bC_n)$ is in {\rm L} if $n=4$, or $1\not\in X$, or
$n$ is even and $X\cap\{2,3\ldots, n/2\}=\emptyset$,
\end{proposition}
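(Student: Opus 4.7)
The plan is to treat the three sufficient conditions of the proposition separately, in each case aiming at a combinatorial criterion on $\mathcal{D}_\psi$ (together with the quantifier prefix of $\Psi$) equivalent to $\bC_n\models\Psi$ and decidable in \L. The tool throughout is the game $\mathscr{G}(\Psi,\bC_n)$ of Lemma~\ref{lem:game}; Prover's strategies will be built using the reachable-position counts of Lemma~\ref{lem:additive} for the $\pm 1$ walks on $\bC_n$ induced when Prover tables the two neighbours of a previously-committed value.

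For the case $1\notin X$ every quantifier is $\exists^{\geq j}$ with $j\geq 2$. For $n\geq 5$, any two distinct vertices of $\bC_n$ share at most one common neighbour, and since every Prover witness set has size $\geq 2$, Prover can never force an earlier variable to a prescribed value. The target criterion is therefore: $\bC_n\models\Psi$ iff every $\exists^{\geq 2}$-variable of $\Psi$ has at most one adjacent predecessor in $\mathcal{D}_\psi$ and every $\exists^{\geq j}$-variable with $j\geq 3$ has none. Sufficiency is Prover's walk strategy---a non-root $x$ with unique adjacent predecessor $y$ already carrying the value $b_y$ is answered by $\{b_y-1,b_y+1\}\pmod n$, and a root gets any $j$ witnesses. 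Necessity comes from an Adversary strategy that splits the values on any two distinct adjacent predecessors of a later variable, exploiting the fact that Prover must always offer two distinct options.

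For the bipartite cases ($n=4$ and $n$ even with $X\cap\{2,\ldots,n/2\}=\emptyset$) the key is that $\bC_n$ retracts onto $K_2$, so if $\mathcal{D}_\psi$ is bipartite Prover can route every $\exists$-variable to the corresponding vertex of a chosen $K_2\subseteq\bC_n$. The target criterion requires $\mathcal{D}_\psi$ bipartite together with local conditions on every ``large'' counting quantifier: an $\exists^{\geq j}$-variable with $j>n/2$ (and with $j\in\{3,4\}$ when $n=4$) is admissible only when its adjacent neighbours in $\mathcal{D}_\psi$ can be accommodated within Prover's $K_2$-strategy after each Adversary move that sends the large-quantifier variable off the planned $K_2$; Prover then adapts by walking one step of $\bC_n$, which is consistent because $n$ even means $\bC_n$ preserves parities under $\pm 1$ walks. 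Since $\bC_4\cong K_{2,2}$, the $n=4$ sub-case is additionally lenient at $\exists^{\geq 2}$: opposite vertices of $K_{2,2}$ share two common neighbours, so same-side multiple adjacent predecessors cause no problem.

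The main obstacle is the necessity direction: for each forbidden structural defect (extra adjacent predecessor, odd cycle in $\mathcal{D}_\psi$, illegal configuration of large-quantifier neighbours) I must exhibit a concrete Adversary strategy defeating \emph{every} Prover response. This is precisely where Lemma~\ref{lem:additive} enters---it quantifies the sizes of the position sets reachable under induced $\pm 1$ walks and shows they already exceed the common-neighbour capacity of $\bC_n$ at the offending variable, so some Prover-witness set must admit an Adversary choice that breaks consistency. Granted the three criteria, decidability in \L\ follows: it amounts to a bipartiteness test of $\mathcal{D}_\psi$ together with local per-variable counts of adjacent predecessors (and, for the bipartite cases, of large-quantifier adjacent neighbours) against the quantifier index, all computable in logarithmic space.
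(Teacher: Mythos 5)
Your overall blueprint — translate $\bC_n\models\Psi$ into a local combinatorial condition on $\mathcal{D}_\psi$ plus bipartiteness, prove both directions via the game of Lemma~\ref{lem:game}, and note L-decidability — matches the paper's. Your handling of the $1\notin X$, $n\neq 4$ case is essentially the paper's: the criterion ``$\exists^{\geq j}$-variables with $j\geq 3$ have no predecessors, $\exists^{\geq 2}$-variables have at most one'' is exactly claims (1a) and (1c) specialised to $1\notin X$, and your walk strategy ($\{b_y-1,b_y+1\}$ for a child of $y$) is the paper's. Note that under that criterion $\mathcal{D}_\psi$ is a forest, so bipartiteness is automatic, which you implicitly rely on.

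The genuine gap is in the bipartite cases. You say the criterion requires $\mathcal{D}_\psi$ bipartite together with a large-quantifier variable being ``admissible only when its adjacent neighbours ... can be accommodated within Prover's $K_2$-strategy after each Adversary move'' — but this is not a combinatorial condition that can be tested; it is a restatement of the winning-strategy property itself, so the L-decidability claim doesn't follow from it. The clean criterion the paper isolates (claim 1b) is: for $n$ even, a variable quantified by $\exists^{\geq j}$ with $j>n/2$ must be the \emph{first} vertex of its connected component in the quantifier order; otherwise $\Psi$ is a no-instance. The reason is parity: any path in $\mathcal{D}_\psi$ from an earlier vertex $y$ to $x$ fixes the parity that $x$ must take, but a set of $j>n/2$ values in $\bC_n$ necessarily contains both parities, so Adversary can pick the wrong one. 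Your proposed fix — ``Prover then adapts by walking one step of $\bC_n$'' — cannot work when $x$ has a predecessor: $y$'s value is already committed and cannot be walked. Once you have claim (1b), the right move is the paper's: if every large-quantifier variable is first in its component, \emph{replace} every $\exists^{\geq j}$ with $j>n/2$ (and, for $n=4$, with $j\geq 3$) by $\exists^{\geq 1}$ without changing the truth value, reducing the bipartite cases to $\{1\}$-CSP$(\bC_n)$ (resp.\ $\{1,2\}$-CSP$(\bC_4)$), which is exactly a bipartiteness test. Finally, a smaller point: Lemma~\ref{lem:additive} is not really what drives the necessity arguments here; the proofs of (1a), (1b), (1c) only need that a vertex of $\bC_n$ has exactly two neighbours, that $\bC_n$ ($n\neq 4$) contains no $K_{2,2}$, and parity of walks in an even cycle — the additive lemma is deployed elsewhere (Proposition~\ref{prop:cycles-iii}).
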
\vspace{-1ex}

\begin{proof}
Let $\Psi$ be an instance of $X$-CSP($\bC_n$).  Recall that ${\cal D}_\psi$ is
the graph corresponding to the quantifier-free part of $\Psi$.  We write $x\prec
y$ if $x,y$ are vertices of ${\cal D}_\psi$ (i.e., variables of $\psi$) such
that $x$ is quantified before $y$ in $\Psi$.  For an edge $xy$ of ${\cal
D}_\psi$ where $x\prec y$, we say that $x$ is a {\em predecessor} of $y$.  Note
that a vertex can have several predecessors.

The following claims restrict the yes-instances of $X$-CSP($\bC_n$).\vspace{1ex}

\noindent{\em Let $x$ be a vertex of ${\cal D}_\psi$ quantified in $\Psi$ by
$\exists^{\geq j}$ for some $j$. If $\bC_n \models \Psi$ then
\vspace{-1.5ex}
\begin{enumerate}[({1}a)]
\item if $j\geq 3$, then $x$ has no predecessors,
\item if $n$ is even and $j>n/2$, then $x$ is the first vertex (w.r.t.
$\prec$) of some connected component of ${\cal D}_\psi$, and
\item if $n\neq 4$ and $j=2$, then all predecessors of $x$
except for its first
predecessor (w.r.t. $\prec$) are quantified by~$\exists^{\geq 1}$.
\end{enumerate}\vspace{-1.5ex}
}

\noindent (We omit the proof for lack of space; please consult the
appendix.)\vspace{0.5ex}

Using these claims, we prove the proposition. First, we consider the case
\mbox{$n=4$.} We show that $\{1,2,3,4\}$-CSP($\bC_4$) is in L. This will imply
that $X$-CSP($\bC_4$) is in L for every $X$. Observe that if ${\cal D}_\psi$
contains a vertex $x$ quantified by $\exists^{\geq 3}$ or $\exists^{\geq 4}$,
then by (1b) this vertex is the first in its component (if $\Psi$ is not a
trivial no-instance). Thus replacing its quantification by $\exists^{\geq 1}$
does not change the truth of $\Psi$. So we may assume that $\Psi$ is an instance
of $\{1,2\}$-CSP($\bC_4$).  We now claim that $\bC_4\models\Psi$ if and only if
${\cal D}_\psi$ is bipartite.  Clearly, if ${\cal D}_\psi$ is not bipartite, it
has no homomorphism to $\bC_4$ and hence $\bC_4\not\models\Psi$.  Conversely,
assume that ${\cal D}_\psi$ is bipartite with bipartition $(A,B)$. Our strategy
for Prover offers the set $\{0,2\}$ or its subsets for the vertices in $A$ and
offers $\{1,3\}$ or its subsets for every vertex in $B$. It is easy to verify
that this is a winning strategy for Prover. Thus $\bC_4\models\Psi$.  The
complexity now follows as checking (1b) and checking
if~a~graph~is~bipartite~is~in~L~by~\cite{RheingoldJACM}.

Now, we may assume $n\neq 4$, and next we consider the case $1\not\in X$.  If
also $2\not\in X$, then by (1a) the graph ${\cal D}_\psi$ contains no edges
(otherwise $\Psi$ is a trivial no-instance).  This is clearly easy to check in
L. Thus $2\in X$. We claim that if we satisfy (1a) and (1c), then
$\bC_n\models\Psi$.  We provide a winning strategy for Prover.  Namely, for a
vertex $x$, if $x$ has no predecessors, offer any set for $x$. If $x$ has a
unique predecessor $y$ for which the value $i$ was chosen, then $x$ is
quantified by $\exists^{\geq 2}$ (or $\exists$) by (1a) and we offer
$\{i-1,i+1\}$ (mod $n$) for $x$ . There are no other cases by (1a) and (1c).  It
follows that Prover always wins with this strategy.  In terms of complexity, it
suffices to check (1a) and~(1c)~which~is~in~L.

Finally, suppose that $n$ is even and $X\cap\{2\ldots n/2\}=\emptyset$.  Note
that every vertex of ${\cal D}_\psi$ is either quantified by $\exists^{\geq 1}$
or by $\exists^{\geq j}$ where $j>n/2$. Thus, using (1b), unless $\Psi$ is a
trivial no-instance, we can again replace every $\exists^{\geq j}$ in $\Psi$ by
$\exists^{\geq 1}$ without changing the truth of $\Psi$. Hence, we may assume
that $\Psi$ is an instance of $\{1\}$-CSP($\bC_n$).  Thus, as $n$ is even,
$\bC_n\models \Psi$ if and only if ${\cal D}_\psi$ is bipartite. The complexity
again follows from \cite{RheingoldJACM}. That concludes the proof.
\end{proof}
\vspace{-3ex}

\begin{proposition}\label{prop:cycles-iii}
Let $n\geq 3$. Then $X$-CSP$(\bC_n)$ is Pspace-complete if $n\neq 4$ and
$\{1,j\}\subseteq X$: where $j \in \{2,\ldots,n\}$ if $n$ is odd and $j \in
\{2,\ldots,n/2\}$ if $n$ is even.
\end{proposition}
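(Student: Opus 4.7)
~ The plan is to prove Pspace-hardness of $\{1,j\}$-CSP$(\bC_n)$, from which the stated result follows since $\{1,j\}\subseteq X$. The case $j=n$ for odd $n$ is precisely QCSP$(\bC_n)$, which is Pspace-complete by the classification of QCSP on non-bipartite graphs. For the remaining cases, the key tool is a \emph{walk gadget} that simulates universal quantification using only $\exists$ and $\exists^{\geq j}$, reducing QCSP$(\bC_n)$ (for odd $n$) or $\{1,j\}$-CSP$(\bC_{n/2})$ (for even $n$) to $\{1,j\}$-CSP$(\bC_n)$.

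For $j=2$ and odd $n$, I introduce a chain $\exists^{\geq 2} v_0\,\exists^{\geq 2} v_1\,\ldots\,\exists^{\geq 2} v_n$ with edges $E(v_i,v_{i+1})$: since each $v_i$ has exactly two neighbours in $\bC_n$, Prover is forced to offer both of them as witnesses for $v_{i+1}$, and Adversary selects one, yielding $v_{i+1}=v_i\pm 1$. Lemma~\ref{lem:additive} ensures that, as Adversary varies his choices, the endpoint $v_n$ ranges over all of $\bC_n$. For $j\geq 3$ (odd $n$), I replace each edge $E(v_i,v_{i+1})$ by a length-$(j{-}1)$ path of fresh existential intermediates from $v_i$ to $v_{i+1}$: for $j\leq n$ there are exactly $j$ distinct walk-endpoints reachable from $v_i$, so Prover's $j$ offered witnesses are forced, and Adversary's step size is drawn from $\{-(j{-}1),-(j{-}3),\ldots,j{-}1\}$, a set containing $\{-1,+1\}$ or $\{-2,0,+2\}$ according to the parity of $j$; Lemma~\ref{lem:additive} again yields full coverage of $\bC_n$ after $n$ iterations. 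Each $\forall v$ in the QCSP input is then replaced by a fresh walk gadget whose endpoint is~$v$.

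For even $n\neq 4$ the same walks only reach one bipartition class of $\bC_n$, of size $n/2$. Here I exploit the observation that the length-two walk relation $\exists z\,(E(x,z)\wedge E(z,y))$ on one bipartition class of $\bC_n$ coincides with the edge relation of $\bC_{n/2}$. This pp-interpretation gives a reduction $\{1,j\}$-CSP$(\bC_{n/2})\to\{1,j\}$-CSP$(\bC_n)$: replace every edge of the input by a length-two walk through a fresh existential intermediate, and anchor each original variable to a fixed class via an outermost anchor $a$ together with a length-two walk from $a$ to that variable. Iterating this descent handles every $n=2^k\cdot m$ with $m\geq 3$ odd, reducing to the odd-cycle case for $\bC_m$.

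The main obstacle is the residual case $n=2^k$ with $k\geq 3$, where the descent bottoms out at $\bC_4\in\mathsf{L}$ and so yields no hardness. This case requires a bespoke gadget: I would combine within-class walks of $\bC_n$ with cross-class edges through $\exists$ intermediates so that Adversary's sequence of binary $\pm 1$ walk choices, together with Prover's $\exists$ responses, directly encodes the evaluation of a quantified Boolean formula; because the within-class structure on $\bC_{2^k}$ carries odd-length cross-class walks tying together both classes, this richer structure is not collapsed by the length-two interpretation. Correctness throughout is verified by the game characterisation (Lemma~\ref{lem:game}), and the tightness of Prover's forced offerings in every walk-gadget step is guaranteed by claims~(1a)--(1c) from the proof of Proposition~\ref{prop:cycles-i}.
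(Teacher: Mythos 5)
Your odd-$n$ construction — a chain of $\exists^{\geq j}$ variables with paths of $\exists$-quantified intermediates of length $j-1$ between consecutive ones, invoking Lemma~\ref{lem:additive} to force Prover's offered set and to show that Adversary's $\pm 1$ (resp.\ $\{-2,0,+2\}$) step choices cover all of $\bC_n$ after $n$ iterations — is essentially the paper's own gadget $Q_x, \pi_x$, and that part is sound.

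The even case, however, is where your argument genuinely breaks, and it breaks in more places than the single residual case $n=2^k$ you flag. First, the pp-interpretation by length-two walks is wrong on its face: on one bipartition class of $\bC_n$, the relation $\exists z\,(E(x,z)\wedge E(z,y))$ is \emph{reflexive} — it holds when $x=y$ via the walk out-and-back — so what you interpret is $\bC^*_{n/2}$, not $\bC_{n/2}$. A homomorphism into the reflexive cycle can collapse an edge of the source onto a single looped vertex, so your backward direction ($\bC_n\models\Psi' \Rightarrow \bC_{n/2}\models\Psi$) fails, and for the plain $\{1\}$-fragment this already trivialises every instance. Second, the anchoring claim is false: a single length-two walk from an anchor $a$ reaches only $\{a, a-2, a+2\}$, not the whole class of size $n/2$, so your anchored variables are massively over-constrained. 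Third, even ignoring these, the recursion does not preserve hardness: when $n/2$ is even, $\{1,j\}$-CSP$(\bC_{n/2})$ is in L by Proposition~\ref{prop:cycles-i} whenever $j>n/4$, yet the statement requires hardness for all $j$ up to $n/2$; so the descent has a gap for every $n$ divisible by $4$ and every $j\in(n/4,n/2]$, not just for $n=2^k$. Finally, the ``bespoke gadget'' for the powers of two is described only in spirit, with no construction or correctness argument. The paper sidesteps all of this by not recursing at all: for even $n\neq 4$ it reduces directly from QCSP$(\bK_{n/2})$ using a gadget built from the Cartesian product of $\bC_n$ with a long path (a variant of the Feder--Hell--Huang retraction gadget), plus the same $Q_x,\pi_x$ universal simulation as in the odd case to pin the anchor cycle $v_0,\ldots,v_{n-1}$; this avoids both the reflexivity problem and the shrinking-$j$ problem in one stroke. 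You would need to replace your even-$n$ scheme entirely with something of that kind.
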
\vspace{-1ex}

\begin{proof}
By reduction, namely a reduction from QCSP($\bC_n$) for odd $n$, and from
QCSP($\bK_{n/2}$) for even $n$. Both problems are known to be Pspace-hard
\cite{OxfordQuantifiedConstraints}.

First, consider the case of odd $n$. Let $\Psi$ be an instance of QCSP($\bC_n$).
In other words, $\Psi$ is an instance of $\{1,n\}$-CSP($\bC_n$).  Clearly, $j<n$
otherwise~we~are~done. 

We modify $\Psi$ by replacing each universally-quantified variable $x$ of $\Psi$
by a path. Namely, let $\pi_x$ denote the pp-formula that encodes
that\smallskip

\qquad $x^1_1,x^1_2,\ldots,x^1_{j-1},~x^2_1,x^2_2,\ldots,x^2_{j-1},\quad
\ldots\quad,x^{n}_1,x^{n}_2,\ldots,x^{n}_{j-1},~x$
\smallskip

\noindent is a path in that order (all but $x$ are new variables).
We replace $\forall x$  by \smallskip

\noindent\hfill$Q_x~=~\exists^{\geq j}x^1_1~\exists^{\geq j}x^2_1\ldots~\exists^{\geq j}x^{n}_1
~\exists^{\geq j}x~\exists^{\geq 1} x^1_2\ldots\exists^{\geq 1}
x^1_{j-1}~\ldots~\exists^{\geq 1} x^{n}_{2}\ldots\exists^{\geq 1}
x^{n}_{j-1}$\hfill\mbox{}
\smallskip

\noindent and append $\pi_x$ to the quantifier-free part of the formula. Let
$\Psi'$ denote the final formula after considering all universally quantified
variables. Note that $\Psi'$ is an instance of $\{1,j\}$-CSP($\bC_n$).  We claim
that $\bC_n\models\Psi$ if and only if $\bC_n\models\Psi'$.

To do this, it suffices to show that $\Psi'$ correctly simulates the universal
quantifiers of $\Psi$.  Namely, we prove that $\bC_n\models Q_x\pi_x$,
and for each $\ell\in\{0\ldots n-1\}$, Adversary has a strategy on
$Q_x\pi_x$ that evaluates $x$ to $\ell$.\vspace{0.5ex}\\
(We omit further details for lack of space; please consult the appendix.)
\medskip

It remains to investigate the case of even $n$.  Recall that $n\geq 6$ and
$j\leq n/2$. We show a reduction from QCSP($\bK_{n/2}$) to $\{1,j\}$-QCSP($\bC_n$).
The reduction is a variant of the construction from \cite{FederHellHuang99} for
the problem of retraction to even cycles.

Let $\Psi$ be an instance of QCSP($\bK_{n/2}$), and define $r=(-n/2-2)~{\rm
mod~}(j-1)$.  We construct a formula $\Psi'$ from $\Psi$ as follows.  First, we
modify $\Psi$ by replacing universal quantifiers exactly as in the case of odd
$n$.  Namely, we define $Q_x$ and $\pi_x$ as before, replace each $\forall x$ by
$Q_x$, and append $\pi_x$ to the quantifier-free part of the formula.  After
this, we append to the formula a cycle on $n$ vertices $v_0,v_1,\ldots,v_{n-1}$
with a path on $r+1$ vertices $w_0,w_1,\ldots,w_r$.  (See the black vertices in
Figure \ref{fig:even-case}.) Then, for each edge $xy$ of ${\cal D}_{\psi}$, we
replace $E(x,y)$ in $\Psi$ by the gadget depicted in Figure~\ref{fig:even-case}
(consisting of the cartesian product of $\bC_n$ and a path on $3n/2$ vertices
together with two attached paths on $n/2-2$, resp. $r+1$ vertices).  The
vertices $x$ and $y$ represent the variables $x$ and $y$ while all other white
vertices are new variables, and the black vertices are identified with
$v_0,\ldots,v_{n-1},w_0,\ldots,w_r$ introduced in the previous step.

Finally, we prepend the following quantification to the formula:
\smallskip

\hfill$\exists^{\geq 1} w_0\,\exists^{\geq j} v_{j-r-2}\,\exists^{\geq j}
v_{2j-r-3}\ldots \exists^{\geq j} v_{(k\cdot j-r-k-1)} \ldots \exists^{\geq j}
v_{n/2+1}$\hfill\mbox{}
\smallskip

\noindent followed by $\exists^{\geq 1}$ quantification of all the remaining
variables of the gadgets.

\begin{figure}[t!]
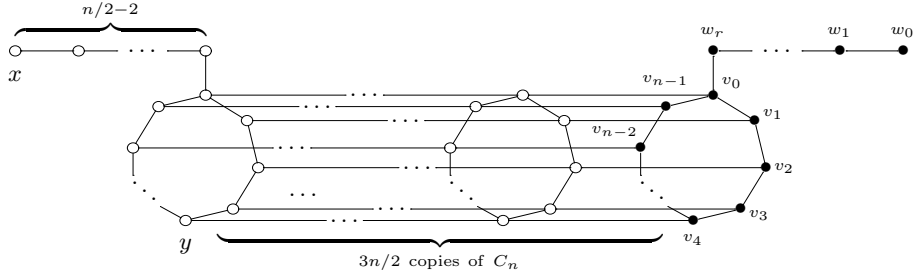

\mbox{}\hfill$
\xy/r2pc/:
(9,0);(9,0);{\xypolygon8"X"{~>{}~={81}[o]{\phantom{s}}}};
(6,0);(6,0);{\xypolygon8"Y"{~>{}~={81}[o]{\phantom{s}}}};
(1,0);(0,0);{\xypolygon8"Z"{~>{}~={81}[o]{\phantom{s}}}};
"X1"*[o]{\bullet};
"X2"*[o]{\bullet};
"X3"*[o]{\bullet};
"X4"+(0,0.2)*{\ddots};
"X5"*[o]{\bullet};
"X6"*[o]{\bullet};
"X7"*[o]{\bullet};
"X8"*[o]{\bullet};
"Y1"*[o][F]{\phantom{s}};
"Y2"*[o][F]{\phantom{s}};
"Y3"*[o][F]{\phantom{s}};
"Y4"+(0,0.2)*{\ddots};
"Y5"*[o][F]{\phantom{s}};
"Y6"*[o][F]{\phantom{s}};
"Y7"*[o][F]{\phantom{s}};
"Y8"*[o][F]{\phantom{s}};
"Z1"*[o][F]{\phantom{s}};
"Z2"*[o][F]{\phantom{s}};
"Z3"*[o][F]{\phantom{s}};
"Z4"+(0,0.2)*{\ddots};
"Z5"*[o][F]{\phantom{s}};
"Z6"*[o][F]{\phantom{s}};
"Z7"*[o][F]{\phantom{s}};
"Z8"*[o][F]{\phantom{s}};
{\ar@{-} "X1";"X2"};
{\ar@{-} "X2";"X3"};
{\ar@{-} "X3";"X3"+(0,-0.43)};
{\ar@{-} "X5";"X5"+(-0.4,0.22)};
{\ar@{-} "X5";"X6"};
{\ar@{-} "X6";"X7"};
{\ar@{-} "X7";"X8"};
{\ar@{-} "X8";"X1"};
{\ar@{-} "Y1";"Y2"};
{\ar@{-} "Y2";"Y3"};
{\ar@{-} "Y3";"Y3"+(0,-0.43)};
{\ar@{-} "Y5";"Y5"+(-0.4,0.22)};
{\ar@{-} "Y5";"Y6"};
{\ar@{-} "Y6";"Y7"};
{\ar@{-} "Y7";"Y8"};
{\ar@{-} "Z1";"Z2"};
{\ar@{-} "Z2";"Z3"};
{\ar@{-} "Z3";"Z3"+(0,-0.43)};
{\ar@{-} "Z5";"Z5"+(-0.4,0.22)};
{\ar@{-} "Z5";"Z6"};
{\ar@{-} "Z6";"Z7"};
{\ar@{-} "Z7";"Z8"};
{\ar@{-} "Z8";"Z1"};
{\ar@{-} "Y8";"Y1"};
{\ar@{-} "X1";"Y1"};
{\ar@{-} "X2";"Y2"};
{\ar@{-} "X3";"Y3"};
{\ar@{-} "X5";"Y5"};
{\ar@{-} "X6";"Y6"};
{\ar@{-} "X7";"Y7"};
{\ar@{-} "X8";"Y8"};
{\ar@{-} "Y1";"Y1"+(-2.2,0)};
{\ar@{-} "Y2";"Y2"+(-2.2,0)};
{\ar@{-} "Y3";"Y3"+(-2.2,0)};
{\ar@{-} "Y5";"Y5"+(-2.2,0)};
{\ar@{-} "Y6";"Y6"+(-2.2,0)};
{\ar@{-} "Y7";"Y7"+(-2.2,0)};
{\ar@{-} "Y8";"Y8"+(-2.2,0)};
{\ar@{-} "Z1";"Z1"+(2.2,0)};
{\ar@{-} "Z2";"Z2"+(2.2,0)};
{\ar@{-} "Z3";"Z3"+(2.2,0)};
{\ar@{-} "Z5";"Z5"+(2.2,0)};
{\ar@{-} "Z6";"Z6"+(2.2,0)};
{\ar@{-} "Z7";"Z7"+(2.2,0)};
{\ar@{-} "Z8";"Z8"+(2.2,0)};
"Y1"+(-2.5,0)*{\ldots};
"Y2"+(-2.5,0)*{\ldots};
"Y3"+(-2.5,0)*{\ldots};
"Y4"+(-2.5,0)*{\ldots};
"Y5"+(-2.5,0)*{\ldots};
"Y6"+(-2.5,0)*{\ldots};
"Y7"+(-2.5,0)*{\ldots};
"Y8"+(-2.5,0)*{\ldots};
"Z1"+(-0,0.7)*[o][F]{\phantom{s}}="a1";
"a1"+(-1,0)*{\ldots}="a2";
"a2"+(-1,0)*[o][F]{\phantom{s}}="a3";
"a3"+(-1,0)*[o][F]{\phantom{s}}="a4";
{\ar@{-} "Z1";"a1"};
{\ar@{-} "a1";"a1"+(-0.6,0)};
{\ar@{-} "a3";"a3"+(0.6,0)};
{\ar@{-} "a3";"a4"};
"X1"+(0,0.7)*{\bullet}="b1";
"b1"+(1,0)*{\ldots}="b2";
"b2"+(1,0)*{\bullet}="b3";
"b3"+(1,0)*{\bullet}="b4";
{\ar@{-} "X1";"b1"};
{\ar@{-} "b1";"b1"+(0.6,0)};
{\ar@{-} "b3";"b3"+(-0.6,0)};
{\ar@{-} "b3";"b4"};
"a4"+(0,-0.4)*{x};
"Z5"+(0,-0.4)*{y};
"X1"+(0.3,0.2)*{_{v_0}};
"X8"+(0.3,0.1)*{_{v_1}};
"X7"+(0.3,0)*{_{v_2}};
"X6"+(0.3,-0.1)*{_{v_3}};
"X5"+(0,-0.3)*{_{v_4}};
"X3"+(-0.4,0.25)*{_{v_{n-2}}};
"X2"+(0,0.4)*{_{v_{n-1}}};
"b1"+(0,0.3)*{_{w_r}};
"b3"+(0,0.3)*{_{w_1}};
"b4"+(0,0.3)*{_{w_0}};
"a1"+(-1.5,0.5)*{\overbrace{\hspace{6pc}}^{n/2-2}};
"Y5"+(-1,-0.5)*{\underbrace{\hspace{14pc}}_{3n/2~{\rm copies~of~}C_n}};
\endxy$\hfill\mbox{}
\vspace{-2ex}
\caption{The gadget for the case of even $n$ where $r=(-n/2-2)~{\rm
mod}~(j-1)$.\label{fig:even-case}}
\vspace{-3ex}
\end{figure}

We prove that $\bK_{n/2}\models\Psi$ if and only if $\bC_n\models\Psi'$.  First, we
show that $\Psi'$ correctly simulates the universal quantification of $\Psi$.
The argument for this is essentially the same as in the case of odd $n$.
Next, we need to analyse possible assignments to the vertices $v_0,\ldots,v_{n-1}$.
There are two possibilities:  either the values chosen for $v_0, \ldots,
v_{n-1}$ are all distinct, or not. In the former case, we show that Prover can
complete the homomorphism to $\bC_n$ if and only $\bK_{n/2}\models\Psi$.  All
other cases are degenerate and we address them separately.\vspace{0.5ex}\\
\noindent (We skip further details for lack of space; please consult the
appendix.) \end{proof}\vspace{-1ex}

\begin{proofof}{Theorem~\ref{thm:cycles}}
The case (i) is proved as Proposition \ref{prop:cycles-i}, and the case (ii)
follows from \cite{HellNesetril}. Finally, the case (iii) is proved as
Proposition \ref{prop:cycles-iii}.
\end{proofof}\vspace{-3ex}

\section{Extensions of the CSP}\label{sec:extensions}
\vspace{-0.5ex}

In this section we consider single-quantifier extensions of the classical
CSP$(\bB)$, \mbox{i.e.}, the evaluation of $X$-pp sentences, where $X:=\{1,j\}$
for some $1<j\leq |B|$.\vspace{-2ex}

\subsection{Bipartite graphs}\vspace{-1ex}

In the case of (irreflexive, undirected) graphs, it is known that
$\{1\}$-CSP$(\bH)=\mbox{CSP}(\bH)$ is in L if $\bH$ is bipartite and is
NP-complete otherwise \cite{HellNesetril} (for membership in L, one needs also
\cite{RheingoldJACM}). It is also known that something similar holds for
$\{1,|H|\}$-CSP$(\bH)=\mbox{QCSP}(\bH)$ -- this problem is in L if $\bH$ is
bipartite and is NP-hard otherwise \cite{CiE2006}. Of course, the fact that
$\{1,j\}$-CSP$(\bH)$ is hard on non-bipartite $\bH$ is clear, but we will see
that it is not always easy on~bipartite~$\bH$.

First, we look at complete bipartite graphs (in a more general statement).
\vspace{-1ex}

\begin{proposition}
\label{prop:complete-bipartite}
Let $\bK_{k,l}$ be the complete bipartite graph with partite sets of size $k$
and $l$. Then $\{1,\ldots,k+l\}$-CSP$(\bK_{k,l})$ is in $\mathrm{L}$.
\end{proposition}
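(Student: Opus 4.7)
The plan is to give a combinatorial characterization of when $\bK_{k,l}\models\Psi$ and verify each condition in L. Assume WLOG $k\leq l$, and write $L,R$ for the two partite classes of $\bK_{k,l}$ of sizes $k$ and $l$. Since $\bK_{k,l}$ is bipartite and every cross-edge is present, any homomorphism $\mathcal{D}_\psi\to\bK_{k,l}$ is just a proper 2-colouring, and the game $\mathscr{G}(\Psi,\bK_{k,l})$ projects to a game on sides $\{L,R\}$: at a variable $x_i$ quantified by $\exists^{\geq\lambda_i}$, Prover's set $B_{x_i}$ can lie entirely in $L$ iff $\lambda_i\leq k$, entirely in $R$ iff $\lambda_i\leq l$, and must contain elements of both sides iff $\lambda_i>l$. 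I then classify each $x_i$ as \emph{controllable} (C: $\lambda_i\leq k$, forceable to either side), \emph{mid} (M: $k<\lambda_i\leq l$, forceable only to $R$), or \emph{uncontrollable} (U: $\lambda_i>l$, Adversary picks the side).

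The characterization I will establish is: $\bK_{k,l}\models\Psi$ iff (i) $\mathcal{D}_\psi$ is bipartite; (ii) every connected component of $\mathcal{D}_\psi$ contains at most one U-vertex; (iii) in each component that contains a U-vertex $u$, $u$ precedes in the quantifier order every other vertex of that component, and the component contains no M-vertex; (iv) in each component, all M-vertices lie on the same side of the (up-to-swap unique) bipartition.

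For necessity I will exhibit Adversary's killing move against each violation. A U-vertex with a preceding neighbour, or with an M- or U-neighbour anywhere, lets Adversary match $u$'s side to the neighbour's (forced or independently-picked) side and break the edge. Two U-vertices in the same component let Adversary pick their sides independently and violate parity along any bipartition-path between them. An M-vertex in a U-component pins the component's orientation before $u$ is played, and Adversary's free choice of $u$'s side then clashes with that orientation. Two M-vertices on opposite bipartition sides cannot both be forced to $R$. For sufficiency, Prover handles each component independently: a U-free component is oriented, using (iv), so that all M-vertices lie on~$R$, and every variable is then committed by Prover to a single-sided $B_{x_i}$ matching the orientation; a U-component has by (iii) a unique first-quantified $u$ with only C-vertices otherwise, so Prover offers the forced mixed set $\{L,R\}$ for $u$, Adversary's pick $s$ fixes the orientation of the component, and each subsequent C-vertex is then committed to the side dictated by the bipartition (known since $\Psi$ is known in advance).

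Conditions (i)--(iv) are all decidable in L: bipartiteness and connected-component/reachability queries are in L by~\cite{RheingoldJACM}, the C/M/U classification is a trivial numerical comparison on $\lambda_i$, and the quantifier-order check on U-vertices and the same-side check on M-vertices reduce to standard L-computable reachability and bipartite-colour queries. The main obstacle will be justifying the optimality of the ``committed'' Prover strategy: I must argue that offering a mixed set $\{L,R\}$ at a C- or M-vertex only hands Adversary a free side-choice and cannot strictly help Prover, so no more exotic strategy can defeat a configuration in which (i)--(iv) fail. This reduction from general strategies to committed ones is what makes the structural characterization tight.
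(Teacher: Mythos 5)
Your proof is correct but takes a genuinely different route from the paper's. The paper proceeds by a short reduction to QCSP$(\bK^1_2)$ (the two-element clique with a distinguished constant $1$): replace $\exists^{\geq j}$ by $\exists$ when $j \leq \min\{k,l\}$, by $\forall$ when $j > \max\{k,l\}$, and replace the variable itself by the constant $1$ when $\min\{k,l\} < j \leq \max\{k,l\}$; then appeal to the known tractability of the two-element QCSP with constants. You instead derive a direct combinatorial characterisation of the yes-instances via your C/M/U partition and verify each of conditions (i)--(iv) in $\mathrm{L}$. The core insight is the same in both --- the game on $\bK_{k,l}$ projects to a game on the two partite sides, and which side-commitments Prover may make at a variable is determined by comparing $\lambda_i$ with $k$ and $l$ --- but you carry the projected game through to an explicit characterisation, whereas the paper black-boxes it as QCSP$(\bK^1_2)$. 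Your version is more self-contained (you effectively reprove the needed $\bK_2$-with-constant result) and correspondingly longer. Two small remarks: the ``committed strategy is optimal'' step you flag as the main obstacle is actually immediate --- vertices on a common side of $\bK_{k,l}$ are interchangeable under automorphisms, so the game state is determined by the sequence of sides chosen, and offering a mixed witness set merely hands Adversary the side-choice and is therefore dominated by the better of the two one-sided sets; and your condition (ii) is already implied by (iii), so the list is slightly redundant.
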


\begin{proof}
We reduce to QCSP$(\bK^1_2)$, where $\bK^1_2$ indicates $\bK_2$ with one vertex
named by a constant, say $1$. QCSP$(\bK^1_2)$ is equivalent to QCSP$(\bK_2)$
(identify instances if $1$ to a single vertex) and both are well-known to be in
L (see, e.g., \cite{CiE2006}). Let $\Psi$ be input to
$\{1,\ldots,k+l\}$-CSP$(\bK_{k,l})$. Produce $\Psi'$ by substituting quantifiers
$\exists^{\geq j}$ with $\exists$, if $j \leq \mbox{min}\{k,l\}$, or with
$\forall$, if $j > \mbox{max}\{k,l\}$. Variables quantified by $\exists^{\geq
j}$ for $\mbox{min}\{k,l\} < j \leq \mbox{max}\{k,l\}$ should be replaced by the
constant $1$.  It is easy to see that $\bK_{k,l} \models \Psi$ iff $\bK_2
\models \Psi'$, and the result follows.
\end{proof}\vspace{-3.2ex}

\begin{proposition}\label{prop:12C2j-extra}
For each $j$, there exists $m$ \mbox{s.t.} $[2^m1^*]$-CSP$(\bC_{2j})$ is
$\mathrm{NP}$-complete.
\end{proposition}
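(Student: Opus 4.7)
The plan is to reduce from the retraction problem $\mathrm{Ret}(\bC_{2j})$, which is NP-complete for $j \geq 3$ by the list-homomorphism dichotomy of Feder, Hell and Huang (the case $j=2$ is implicitly excluded since $\bC_4 = \bK_{2,2}$ is already tractable under the full $\{1,\ldots,4\}$-quantifier logic by Proposition~\ref{prop:cycles-i}). Fix $m = 2j$, so that we have one $\exists^{\geq 2}$ quantifier available for each vertex of the named cycle to be simulated.

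Given an instance $\bA \supseteq \bC_{2j}$ of retraction with named vertices $c_0,\ldots,c_{2j-1}$, build a sentence $\Phi$ whose outer prefix is $\exists^{\geq 2} c_0\, \exists^{\geq 2} c_1 \cdots \exists^{\geq 2} c_{2j-1}$, whose inner prefix is $\exists x_1 \cdots \exists x_N$ for the remaining vertices of $\bA$, and whose quantifier-free part consists of (a)~the edges of $\bA$ interpreted on the new variables, including the cycle edges $E(c_i,c_{(i+1)\bmod 2j})$, and (b)~an anti-backtracking gadget forcing the $c_i$'s to actually embed $\bC_{2j}$. The cycle edges alone merely ensure that the chosen values $c_0^*,\ldots,c_{2j-1}^*$ trace a closed walk of length $2j$ in $\bC_{2j}$; in the bipartite cycle of girth $2j \geq 6$, such walks are either automorphic embeddings (one of the $4j$ dihedral symmetries) or backtrack at some step with $c_{i-1}^* = c_{i+1}^*$. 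The gadget rules out the latter, exploiting the pp-expressible fact that in $\bC_{2j}$ for $j \geq 3$, two vertices share a common neighbor iff they are at distance exactly $2$, so certain pp-constraints can force distance-$2$ vertices to be genuinely distinct.

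Once backtracking is ruled out, every Adversary play fixes an automorphism $\sigma_\pi \in \mathrm{Aut}(\bC_{2j})$ carrying the named cycle to the $c_i^*$'s, and the remaining $\exists x$-subformula is the $\sigma_\pi$-image of the original retraction instance. By invariance of retraction under target automorphisms, it is satisfiable iff $\bA$ retracts to $\bC_{2j}$. Forward direction: given a retraction $r:\bA \to \bC_{2j}$, Prover plays consistent pairs for the $c_i$'s and, in each Adversary branch $\pi$, answers the $\exists x_i$'s by $\sigma_\pi \circ r$. Reverse direction: any one Adversary play paired with Prover's $x$-values gives a homomorphism $\bA \to \bC_{2j}$ that, composed with $\sigma_\pi^{-1}$, is the identity on the named cycle and hence a retraction.

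The main obstacle is designing the anti-backtracking gadget to be robust against all $2^{2j}$ Adversary plays simultaneously rather than just one favourable branch; this requires inserting the distance-$2$ distinctness pp-constraint uniformly for every consecutive triple $(c_{i-1}, c_i, c_{i+1})$, and checking that the resulting combinatorics do not overconstrain Prover's pair choices at any step. Membership in NP is easy: since $m = 2j$ and $|B| = 2j$ are constants, a full Prover strategy (the pairs at the $\exists^{\geq 2}$'s together with the assignments in each of the $2^m$ Adversary branches) has polynomial description and is verified in polynomial time by checking the resulting CSP instance on $\bC_{2j}$ in each branch.
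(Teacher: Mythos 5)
Your high-level plan (reduce from $\mathrm{Ret}(\bC_{2j})$, use the $\exists^{\geq 2}$ quantifiers to pin down a copy of the named cycle) is a natural guess, but the crucial anti-backtracking gadget you postulate cannot exist, and this is not a detail that can be patched.

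First, the claimed pp-expressible fact is false. In $\bC_{2j}$ two vertices share a common neighbour iff they are at distance $0$ \emph{or} $2$; a vertex shares both of its neighbours with itself. More fundamentally, no pp-formula $\phi(a,b)$ over $\bC_{2j}$ can be satisfied by a distance-$2$ pair yet fail on a coincident pair: $\bC_{2j}$ has a folding endomorphism $h$ (e.g.\ for $\bC_6$, $0\mapsto 0$, $1,5\mapsto 1$, $2,4\mapsto 2$, $3\mapsto 3$) that identifies distance-$2$ vertices, and pp-definable relations are preserved by endomorphisms, so $\phi(1,5)$ entails $\phi(h(1),h(5))=\phi(1,1)$. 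Thus ``distance exactly $2$'' and disequality are not pp-definable on $\bC_{2j}$, and the gadget variables, being $\exists$-quantified, give you exactly pp-definability and nothing more.

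Second, even granting the gadget, your prefix overconstrains Prover. Once Adversary has fixed $c_0$ and $c_1$, the edge constraint $E(c_1,c_2)$ restricts $c_2$ to the two neighbours of $c_1$, one of which is $c_0$; forbidding $c_2=c_0$ leaves a unique admissible value, so Prover cannot offer a two-element witness set for $\exists^{\geq 2}c_2$ without including a losing option. You gesture at this (``do not overconstrain Prover's pair choices'') but it is not a checkable detail---it is exactly where the plan breaks. The paper's proof sidesteps both issues: it reduces from $\mathrm{CSP}(\bK_j)$ using the Feder--Hell--Huang gadget (a Cartesian product of $\bC_{2j}$ with a long path), quantifies $\exists^{\geq 2}$ only on a sparse, evenly spaced subset of the cycle variables (about $j+3$ of them, with the intermediate vertices $\exists$-quantified so Prover always has a genuine two-element set of walk-endpoints to offer), and rather than forbidding degenerate assignments of $v_0,\ldots,v_{2j-1}$ it shows Prover can always complete the gadget homomorphism in those branches, so they never help Adversary.
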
\vspace{-1ex}

\begin{proof}
Membership in NP follows because $m$ is bounded -- one may try all possible
evaluations to the $\exists^{\geq 2}$ variables. NP-hardness follows as in the
proof of case $(iii)$ of Theorem~\ref{thm:cycles}, but we are reducing from
CSP$(\bK_{n/2})$ not QCSP$(\bK_{n/2})$. As a consequence, the only instances of
$\exists^{\geq 2}$ we need to consider are those used to isolate the cycle
$\bC_{2j}$ (one may take $m:=j+3$).
\end{proof}\vspace{-3.2ex}

\begin{corollary}\label{cor:twins}
If $\bH$ is bipartite and (for $j \geq 3$) contains some $\bC_{2j}$ but no
smaller cycle, then exists $m$ \mbox{s.t.} $[2^{m}1^*]$-CSP$(\bH)$ is
$\mathrm{NP}$-complete.
\end{corollary}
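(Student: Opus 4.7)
The plan is to reduce from $[2^{m_0}1^*]$-CSP$(\bC_{2j})$, which by Proposition~\ref{prop:12C2j-extra} is NP-complete for some $m_0 = j+3$. The reduction I would use is essentially syntactic: the same sentence $\Psi$ that serves as a $\bC_{2j}$-instance is resubmitted as a $\bH$-instance, and the claim to establish is $\bC_{2j} \models \Psi \Leftrightarrow \bH \models \Psi$ (hence NP-hardness for $[2^{m_0} 1^*]$-CSP$(\bH)$, so one may take $m = m_0$).

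Fix a subgraph $C$ of $\bH$ isomorphic to $\bC_{2j}$ (it exists by hypothesis). The forward implication $\bC_{2j} \models \Psi \Rightarrow \bH \models \Psi$ is immediate: a winning Prover strategy in $\mathscr{G}(\Psi, \bC_{2j})$ composes with the inclusion $C \hookrightarrow \bH$ to yield a winning strategy in $\mathscr{G}(\Psi, \bH)$, and injectivity of the inclusion ensures Prover's $2$-element witness sets remain $2$-element after embedding.

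The hard part will be the backward implication, which relies on the girth of $\bH$. The hardness instance $\Psi$ produced in Proposition~\ref{prop:12C2j-extra} (inherited from the even-$n$ construction of Proposition~\ref{prop:cycles-iii}, Figure~\ref{fig:even-case}) contains a scaffold cycle $v_0,\ldots,v_{n-1}$ (with $n = 2j$) together with a Cartesian-product gadget based on $\bC_n \times P_{3n/2}$ that encodes $\bK_{n/2}$-colourings via a distance-based mechanism. That product contains many $4$-cycles and other cycles shorter than $2j$; since $\bH$ has girth $2j \geq 6$, no such short cycle can embed in $\bH$, so any homomorphism from the gadget into $\bH$ must \emph{collapse} these cycles. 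I would then check that the only consistent way to collapse all the $4$-cycles is to identify ``diagonally opposite'' vertices throughout, forcing the homomorphism to factor through a map $g : \bC_n \to \bH$. By the girth condition, $g$ is either (i) an embedding onto a $\bC_n$ subgraph of $\bH$, in which case the gadget behaves exactly as it does on $\bC_n$ and a winning Prover strategy on $\bH$ restricts to one on $\bC_{2j}$, or (ii) a folding onto a tree-like region, which collapses the gadget's distance encoding and so trivialises the $\bK_{n/2}$-edge constraint it represents. A careful case analysis of (ii) should show that any such degenerate strategy can only succeed on $\bH$ when it already succeeds on $\bC_{2j}$, yielding the backward direction.

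Finally, NP-membership is immediate from the bounded number of $\exists^{\geq 2}$-quantifiers: one enumerates the polynomially many Prover strategies for the $m$ outer quantifiers and verifies each against every Adversary response path, the remaining existential sub-CSP reducing to a standard polynomial-time homomorphism check into the bipartite $\bH$. Combined with the NP-hardness reduction above, $[2^m 1^*]$-CSP$(\bH)$ is NP-complete.
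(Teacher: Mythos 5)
Your approach---feed the \emph{same} sentence $\Psi$ from Proposition~\ref{prop:12C2j-extra} to $\bH$ and argue $\bC_{2j}\models\Psi\Leftrightarrow\bH\models\Psi$---is not the paper's approach, and the backward implication fails. The paper builds a genuinely \emph{different} instance $\Psi'$: it prepends a diameter-spanning $\exists^{\geq 2}$-path $v_1,\ldots,v_{d+1}$ (where $d$ is the diameter of $\bH$), attaches $d-j+1$ distinct $\exists^{\geq 2}$-quantified copies of $\bC_{2j}$ along this path, and duplicates the whole downstream gadgetry once per cycle copy, arguing that at least one of these cycles \emph{must} be evaluated onto a genuine $\bC_{2j}$-subgraph of $\bH$.

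The gap in your case (ii) is not a matter of ``careful case analysis'': once the scaffold cycle folds onto a tree-like region of $\bH$, the Cartesian-product gadget's distance encoding is destroyed and the $\bK_{n/2}$-edge constraint becomes vacuously satisfiable, so $\bH\models\Psi$ can hold \emph{regardless} of whether the source CSP$(\bK_{n/2})$ instance is satisfiable. Crucially, Prover can \emph{force} such a fold on $\bH$ (and not on $\bC_{2j}$): if some cycle vertex of $\bH$ carries two pendant neighbours $p,q$, Prover may offer the witness set $\{p,q\}$ for a scaffold $\exists^{\geq 2}$-variable, and then whichever Adversary picks, the walk must immediately backtrack, collapsing the scaffold. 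On $\bC_{2j}$ itself this escape is unavailable because every vertex has exactly two neighbours, both on the cycle, so the only 2-element set Prover can offer is the pair of genuine cycle neighbours and Adversary can keep the scaffold non-degenerate. The paper's appendix explicitly exhibits $\bH^{\mathrm{hairy}}_6$ (the $6$-cycle with twin pendant edges at each vertex) precisely to show why the naive single-cycle isolation is insufficient and why the diameter-path-with-many-cycles construction is needed. Your ``trivial re-submission'' reduction therefore does not establish the backward direction, and the corollary needs the paper's modified construction.
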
\vspace{-1ex}

\begin{proof}
Membership and reductions for hardness follow similarly to
Proposition~\ref{prop:12C2j-extra}. The key part is in isolating a copy of the
cycle, but we can not do this as easily as before. If $d$ is the diameter of
$\bH$ (the maximum of the minimal distances between two vertices) then we begin
the sentence $\Psi'$ of the reduction with $\exists^{\geq 2}
v_1,\ldots,v_{d+1}$, and then, for each $i \in \{1,\ldots,d-j+1\}$ we add
$\exists^{\geq 2} x_i,x'_i, \ldots, x^{\prime\ldots\prime}$ ($j-1$ dashes) and join
$v_i,\ldots,v_{i+j},x^{\prime\ldots\prime}_i,\ldots,x_i$ in a $2i$-cycle (with
$E(x_i,v_i)$ also). For each of these $d-j+1$ cycles $\bC_{2j}$ we build a
separate copy of the rest of the reduction. We can not be sure which of these
cycles is evaluated truly on some $\bC_{2j}$, but at least one of them must be.
\end{proof}\vspace{-2.2ex}

\noindent (See the appendix for an example of why the above construction is
necessary.)
\vspace{0.2ex}

In passing, we note the following simple propositions.
\vspace{-1.2ex}

\begin{proposition}
\label{prop:2ton-3}
If $j \in \{2,...,n-3\}$ then one may exhibit a bipartite $\bH_j$ of size $n$
such that $\{1,j\}$-CSP$(\bH_j)$ is $\mathrm{Pspace}$-complete.
\end{proposition}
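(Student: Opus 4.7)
My plan is to build $\bH_j$ from an even cycle padded with isolated vertices, exploiting the Pspace-completeness of $\{1,j\}$-CSP on even cycles from Proposition~\ref{prop:cycles-iii}. Concretely, set $m := \max\{3,j\}$ and define $\bH_j$ to be the disjoint union of the cycle $\bC_{2m}$ with $n - 2m$ isolated vertices. This graph is bipartite and has $n$ vertices provided $2m \leq n$, which holds for $j \leq n/2$ (with $n \geq 6$).

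The reduction from $\{1,j\}$-CSP$(\bC_{2m})$ (which is Pspace-complete by Proposition~\ref{prop:cycles-iii}, as $j \in \{2,\ldots,m\}$ and $2m \geq 6$) to $\{1,j\}$-CSP$(\bH_j)$ is the identity map on formulas. The forward direction is immediate: any winning Prover strategy on $\bC_{2m}$ also wins on the larger template $\bH_j$. For the converse, I observe that the isolated vertices of $\bH_j$ have no neighbours, so if Prover ever offers an isolated vertex for a variable $x$ appearing in an edge atom $E(x,y)$, Adversary can pick it and no assignment to $y$ can then satisfy the edge. Hence any winning Prover strategy on $\bH_j$ must assign only cycle vertices to variables incident to any edge atom, so it restricts to a winning strategy on $\bC_{2m}$.

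The principal obstacle is the range $j > n/2$, which (since $j \leq n-3$) forces $n \geq 7$. Here $\bC_{2j}$ does not fit in $n$ vertices, so the above construction breaks. A more delicate bipartite template is needed: I envisage a smaller even cycle $\bC_{2m}$ (with $m < j$) augmented with a bipartite gadget that exploits the large counting threshold $j$ to simulate universal quantification on the cycle, in the spirit of Proposition~\ref{prop:ijCSPn}'s clique-based reduction. The reduction would then proceed from QCSP on a bipartite Pspace-hard template (again, a sufficiently long even cycle), with auxiliary variables quantified $\exists^{\geq j}$ forcing each cycle vertex to be a possible value for the simulated universal variables. Verifying bipartiteness of the gadget, and that it produces no spurious Prover strategies while interacting correctly with the outer $\exists^{\geq j}$ quantifier, is where I expect the technical work to concentrate.
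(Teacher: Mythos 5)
Your first construction — padding $\bC_{2\max\{3,j\}}$ with $n-2\max\{3,j\}$ isolated vertices and reducing via the identity map — is sound for the range where it makes sense, and your observation that Adversary punishes any witness set containing an isolated vertex on an edge-incident variable is exactly right. But you correctly identify, and then do not close, the gap for $j>n/2$: this is not a corner case but roughly half the range of the statement, which extends all the way to $j=n-3$. Sketching that a ``more delicate bipartite template'' is needed, in the spirit of Proposition~\ref{prop:ijCSPn}, is not a proof; as written the argument establishes the proposition only when $2\max\{3,j\}\leq n$.

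The paper resolves this with a single construction that works uniformly for all $j\geq 3$ and sidesteps the split you ran into. Rather than growing the cycle to match $j$, keep $\bC_6$ fixed and grow one side of its bipartition: take $\bH_j$ to be $\bC_6$ together with $j-3$ new independent vertices, each adjacent to vertices $1$, $3$, $5$ of $\bC_6$. This is bipartite with classes $\{0,2,4\}$ (size $3$) and $\{1,3,5\}\cup\{\text{new vertices}\}$ (size $j$), and it has only $j+3\leq n$ vertices. Any $\exists^{\geq j}$ variable that has a predecessor in $\mathcal{D}_\psi$ must be valued in a fixed bipartition class of $\bH_j$, and the only class of size $\geq j$ is the large one, so Prover is forced to offer the entire large class — which in particular contains all of $\{1,3,5\}$. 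This lets the even-cycle reduction from Proposition~\ref{prop:cycles-iii} (from QCSP$(\bK_3)$) go through with $\bH_j$ in place of $\bC_6$: the quantifier $\exists^{\geq j}$ now does the job that $\exists^{\geq 2}$ did there, forcing Adversary's choice to range over a full side of the hexagon. Because the template size grows linearly in $j$ rather than as $2j$, there is no barrier at $n/2$; the bound $j\leq n-3$ is exactly what keeps the construction inside $n$ vertices. This gadget is the missing piece your second half was groping toward, and you could have used it from the outset for all $j\geq 3$ instead of the cycle-doubling trick.
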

\vspace{-1ex}

\begin{proof}
The case $j=2$ follows from Theorem~\ref{thm:cycles}; assume $j\geq 3$. Take the
graph $\mathcal{C}_6$ and construct $\bH_j$ as follows. Augment $\mathcal{C}_6$
with $j-3$ independent vertices each with an edge to vertices $1$, $3$ and $5$
of $\mathcal{C}_6$. Apply the proof of
Theorem~\ref{thm:cycles}~with~$\bH_j$.~\end{proof}
\vspace{-3ex}

\begin{proposition}
\label{prop:bipartition}
Let $\bH$ be bipartite with largest partition in a connected component of size
$<j$. Then $\{1,j\}$-CSP$(\bH)$ is in $\mathrm{L}$.
\end{proposition}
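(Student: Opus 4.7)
The plan is to reduce deciding $\bH \models \Psi$ to $\mathrm{CSP}(\bH)$, which is in $\L$ for bipartite $\bH$ by~\cite{HellNesetril} together with~\cite{RheingoldJACM}, augmented by a couple of $\L$-checkable structural conditions on $\Psi$. The key observation from the hypothesis is that $|N_\bH(v)|<j$ for every $v\in\bH$, since the neighbourhood of $v$ lies inside the opposite bipartition class of $v$'s connected component. So whenever Prover is obliged to supply $j$ common neighbours of an already-committed vertex, she fails. I would prove that $\bH\models\Psi$ iff (i)~$\bH\models\Psi^\exists$, where $\Psi^\exists$ is obtained from $\Psi$ by replacing every $\exists^{\geq j}$ by $\exists$; (ii)~every $\exists^{\geq j}$-quantified variable $u$ is the first vertex under $\prec$ of its connected component of $\mathcal{D}_\psi$; and (iii)~if $\bH$ has fewer than $j$ non-isolated vertices, then every $\exists^{\geq j}$-quantified variable is isolated in $\mathcal{D}_\psi$.

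For necessity of (ii): if some variable $z\neq u$ of $u$'s connected component $C$ of $\mathcal{D}_\psi$ is quantified before $u$, then $z$ has been committed to some $b_z\in\bH$ by the time $u$ is played, and any winning homomorphism must place $u$ in a single bipartition class of $b_z$'s component of $\bH$ (the class determined by the parity of the $C$-distance from $z$ to $u$); but this class has size less than $j$, so Prover cannot offer $|S_u|=j$ valid witnesses. The same argument also rules out two $\exists^{\geq j}$-variables in a common component. Necessity of (iii) is simpler: if $u$ has any $\mathcal{D}_\psi$-neighbour, every element of $S_u$ must be non-isolated in $\bH$, else Adversary picks the isolated one and traps the game. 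For sufficiency, Prover treats each component of $\mathcal{D}_\psi$ independently; in a component containing the unique, first $\exists^{\geq j}$-variable $u$, she offers $j$ non-isolated vertices of $\bH$ (possible by (iii)), and after Adversary picks $a\in S_u$ she selects any $b\in N_\bH(a)$ and folds the remainder of $C$ onto the edge $ab$, using that $\mathcal{D}_\psi$ is bipartite (forced by (i)). Components without $\exists^{\geq j}$-variables are handled by folding onto any fixed edge of $\bH$, which exists by (i).

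Complexity-wise, (i) is $\mathrm{CSP}(\bH)$, in $\L$ for bipartite $\bH$; (ii) reduces to undirected reachability in $\mathcal{D}_\psi$, in $\L$ by~\cite{RheingoldJACM}; (iii) is a fixed template-dependent test plus a syntactic scan of $\Psi$. The main obstacle is establishing (ii) in its stronger \emph{first in component} form rather than the naive \emph{no earlier-quantified neighbour} form: for instance, on $\bH=2K_2$ with $j=2$, the sentence $\exists x\,\exists^{\geq 2}u\,\exists v\,E(x,v)\wedge E(v,u)$ has $u$ with no direct predecessor yet is false, because Prover's choice for $x$ propagates bipartite constraints through $v$ and forces $u$'s admissible image into a single bipartition class of size below $j$. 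Making this indirect propagation rigorous over arbitrarily long paths through Prover-controlled $\exists$-variables is the technical heart of the argument.
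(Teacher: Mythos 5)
Your argument is correct and follows essentially the same route as the paper's: characterise yes-instances by the structural condition that every $\exists^{\geq j}$-variable is the $\prec$-first of its $\mathcal{D}_\psi$-component (what the paper calls a \emph{trivial} $\exists^{\geq j}$-variable), together with the ordinary bipartiteness check, and observe that all tests reduce to undirected reachability, in $\L$ by \cite{RheingoldJACM}. You are, however, more careful than the paper in two places, and one of them matters. The paper asserts that once every $\exists^{\geq j}$-variable is trivial the instance ``is readily seen to be satisfiable,'' but this is not quite true when $\bH$ has isolated vertices: on $\bH = \bK_2 \uplus 3\bK_1$ with $j=3$ the sentence $\exists^{\geq 3} u\, \exists v\; E(u,v)$ has $u$ trivial in the paper's sense (and $\mathcal{D}_\psi$ bipartite), yet is a no-instance, since any $3$-element witness set for $u$ must contain an isolated vertex of $\bH$, which Adversary then selects. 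Your condition (iii) patches exactly this; note it is vacuous whenever $\bH$ is connected (there $|H|\geq j$ forces at least $j$ non-isolated vertices), which is presumably why the paper overlooked it, but the proposition as stated allows disconnected $\bH$. Your $2\bK_2$ example illustrating why ``first in component'' cannot be weakened to ``no earlier direct predecessor'' is also apt, though the paper's ``path'' formulation does already capture that correctly. In short: same decomposition and same core lemma, with an explicit base-CSP check (your (i)) and a genuine corner-case repair (your (iii)).
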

\vspace{-1ex}

\begin{proof}
We will consider an input $\Psi$ to $\{1,j\}$-CSP$(\bH)$ of the form $Q_1 x_1 $
$Q_2 x_2 \ldots $ $Q_m x_m \ \psi(x_1,x_2,\ldots,x_m)$. An instance of an
$\exists^{\geq j}$ variable is called \emph{trivial} if it has neither a path to
another (distinct) $\exists^{\geq j}$ variable, nor a path to an $\exists$
variable that precedes it in the natural order on $\mathcal{D}_\psi$.  The key
observation here is that any non-trivial $\exists^{\geq j}$ variable \emph{must}
be evaluated on more than one partition of a connected component. If in $\Psi$
there is a non-trivial $\exists^{\geq j}$ variable, then $\Psi$ must be a
no-instance (as $\exists^{\geq j}$s must be evaluated on more than one partition
of a connected component, and a path can not be both even and odd in length).
All other instances are readily seen to be satisfiable. Detecting if $\Psi$
contains a non-trivial $\exists^{\geq j}$ variable is in L by
\cite{RheingoldJACM}, and the result follows.
\end{proof}

We note that Proposition~\ref{prop:2ton-3} is tight, namely in that
$\{1,j\}$-CSP$(\bH)$ is in L if $j \in \{1,|H|-2,|H|-1,|H|\}$. (For the proof,
please consult the appendix.)
\vspace{-0.5ex}

\begin{proposition}
\label{prop:cont-c4}
If $\bH$ is bipartite and contains $\bC_4$, then $\Psi \in
\{1,2\}$-CSP$(\bC_{4})$ iff the underlying graph $\mathcal{D}_\psi$ of $\Psi$ is
bipartite. In particular,  $\{1,2\}$-CSP$(\bH)$ is in L.
\end{proposition}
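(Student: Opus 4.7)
The plan is to prove the stronger statement that $\bH\models\Psi$ if and only if $\mathcal{D}_\psi$ is bipartite, which in particular matches the yes-instances of $\{1,2\}$-CSP$(\bC_4)$ already described inside the proof of Proposition~\ref{prop:cycles-i}. This immediately yields the complexity bound, since bipartiteness is testable in $\mathrm{L}$ by~\cite{RheingoldJACM}.

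For the forward direction I would fix any deterministic Adversary (for example, one that always picks the lexicographically smallest element of Prover's set) and follow through Prover's winning play in the game $\mathscr{G}(\Psi,\bH)$. The winning condition from Lemma~\ref{lem:game} forces the resulting assignment to be a homomorphism $h:\mathcal{D}_\psi\to\bH$, and composing $h$ with a $2$-colouring of $\bH$ gives a homomorphism $\mathcal{D}_\psi\to\bK_2$, so $\mathcal{D}_\psi$ is bipartite.

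For the backward direction, suppose $\mathcal{D}_\psi$ is bipartite with bipartition $(A,B)$ and fix a $\bC_4$-subgraph of $\bH$ on vertices $a_1,a_2,a_3,a_4$ taken around the cycle. Then $\{a_1,a_3\}$ and $\{a_2,a_4\}$ form the two sides of a $\bK_{2,2}$ inside $\bH$, so every vertex of one is adjacent in $\bH$ to every vertex of the other. Prover's strategy is: for $x\in A$ offer $\{a_1\}$ if $x$ is quantified by $\exists^{\geq 1}$ and $\{a_1,a_3\}$ if by $\exists^{\geq 2}$; symmetrically for $x\in B$ using $a_2,a_4$. Since every edge of $\mathcal{D}_\psi$ crosses the bipartition $(A,B)$, it is mapped to an edge between $\{a_1,a_3\}$ and $\{a_2,a_4\}$ whatever Adversary chooses, so Prover wins.

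No step appears to be a serious obstacle; the only conceptual remark is that the $\bC_4$-subgraph behaves exactly as a $\bK_{2,2}$ for the purposes of the game, which is what allows the diagonal pair $\{a_1,a_3\}$ to be safely offered as an $\exists^{\geq 2}$-set no matter how Adversary previously resolved a neighbour on the opposite side of $(A,B)$. This is essentially the same Prover strategy that works on $\bC_4$ in Proposition~\ref{prop:cycles-i}, just transplanted to the larger host $\bH$.
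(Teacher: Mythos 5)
Your proof is correct and takes essentially the same approach as the paper: necessity follows because any winning play for Prover yields a homomorphism $\mathcal{D}_\psi\to\bH$ and hence a $2$-colouring, and sufficiency follows from Prover playing the canonical $\bC_4$ (i.e.\ $\bK_{2,2}$) strategy on a fixed copy of $\bC_4$ in $\bH$, with bipartiteness testable in $\mathrm{L}$ by Reingold's algorithm. You have simply spelled out the details that the paper's terse "necessity is clear; sufficiency follows by the canonical evaluation on a fixed copy of $\bC_4$" leaves implicit.
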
\vspace{-1ex}

\begin{proof}
Necessity is clear; sufficiency follows by the canonical
evaluation of $\exists^{\geq 1}$ and $\exists^{\geq 2}$ on a fixed copy of
$\bC_4$ in $\bH$. Membership in L follows from
\cite{RheingoldJACM}.~\end{proof}\vspace{-3ex}

\begin{proposition}
\label{prop:12forest-extra}
Let $\bH$ be a forest, then $[2^{m}1^*]$-CSP$(\bH)$ is in $\mathrm{P}$ for all $m$.
\end{proposition}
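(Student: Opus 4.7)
The plan is to combine the game characterisation of Lemma~\ref{lem:game} with the boundedness of $m$, reducing the problem to constantly many calls to a polynomial-time list-homomorphism algorithm for the forest $\bH$.

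An input $\Psi$ has the form $\exists^{\geq 2} x_1 \ldots \exists^{\geq 2} x_{m'} \exists y_1 \ldots \exists y_k \,\psi$ with $m' \leq m$. A Prover strategy for the $\exists^{\geq 2}$ prefix is a decision tree of depth $m'$ whose internal nodes (indexed by the Adversary's prior responses) are labelled by $2$-subsets of $H$; since $|H|$ and $m$ are constants, there are at most $\binom{|H|}{2}^{2^m-1}$ such strategies, and any fixed strategy admits at most $2^m$ distinct Adversary responses. The algorithm I propose iterates over all Prover strategies, and for each one verifies that Prover wins at every leaf of the induced play tree.

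At such a leaf, the Adversary's choices have fixed values $c_1,\ldots,c_{m'}\in H$ for $x_1,\ldots,x_{m'}$, and Prover wins iff $\bH \models \exists y_1 \ldots \exists y_k \,\psi(c_1,\ldots,c_{m'},y_1,\ldots,y_k)$. This is exactly a list-homomorphism instance from $\mathcal{D}_\psi$ to $\bH$, with singleton lists $\{c_i\}$ on the $x_i$-vertices and full lists $H$ on the $y_j$-vertices; note that in the trivial case where some $c_i$ is isolated in $\bH$ but the corresponding $x_i$ has neighbours in $\mathcal{D}_\psi$, the leaf is already lost for Prover.

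The main obstacle, and the step I would invoke as a known fact, is that list-homomorphism to a forest is polynomial-time solvable: every tree is an interval bigraph and hence a bi-arc graph, so this falls on the tractable side of the bipartite list-homomorphism dichotomy of Feder, Hell and Huang; alternatively, a direct component-wise dynamic programming argument exploiting the tree structure of each component of $\bH$ suffices (root each tree component of $\bH$, and in each component of $\mathcal{D}_\psi$ propagate along a spanning tree using the unique-path structure of $\bH$ to reconcile multi-path constraints). Once this is in place, the overall running time is a constant (depending on $|H|$ and $m$) times a polynomial in $|\Psi|$, giving $[2^m1^*]$-CSP$(\bH) \in \mathrm{P}$.
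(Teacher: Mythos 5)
Your proof is correct and follows essentially the same route as the paper: branch on the constantly many ways the $\exists^{\geq 2}$ prefix can play out, then solve a CSP-with-named-constants over $\bH$ at each resulting leaf. The only cosmetic difference is the reference invoked for the inner subproblem — the paper cites that $\mathrm{Ret}(\bH)$ is in P for forests (via the retractions-to-pseudoforests result), whereas you invoke the slightly stronger fact that list-homomorphism to a forest is polynomial (via the bi-arc graph dichotomy or a direct tree DP); both suffice since the leaves yield exactly retraction instances (singleton or full lists).
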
\vspace{-1ex}

\begin{proof}
We evaluate each of the $m$ variables bound by $\exists^{\geq 2}$ to all
possible pairs, and what we obtain in each case is an instance of CSP$(\bH')$
where $\bH'$ is an expansion of $\bH$ by some constants, \mbox{i.e.}, 
equivalent to the retraction problem. It is known that Ret$(\bH)$ is in P for
all forests $\bH$ \cite{Pseudoforests}, and the result follows. 
\end{proof}
\vspace{-2ex}

We bring together some previous results into a classification theorem.
\vspace{-1.5ex}

\begin{theorem} \label{thm:selling-point}
Let $\bH$ be a graph. Then\vspace{-1.5ex}
\begin{enumerate}[--]
\item $[2^{m}1^*]$-CSP$(\bH)\in{\rm P}$ for all $m$, if $\bH$ is a forest or a
bipartite graph~containing~$\bC_4$
\item $[2^{m}1^*]$-CSP$(\bH)$ is NP-complete from some $m$, if otherwise.
\end{enumerate}
\vspace{-2ex}
\end{theorem}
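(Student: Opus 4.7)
The plan is to split into four cases according to a natural dichotomy on $\bH$: (a) $\bH$ is a forest; (b) $\bH$ is bipartite and contains $\bC_4$; (c) $\bH$ is bipartite, contains a cycle, but contains no $\bC_4$; (d) $\bH$ is non-bipartite. Cases (a) and (b) will give the tractable bullet, while (c) and (d) will give the NP-complete bullet. All four cases reduce to previously proven facts, so the work is really in checking that these results cover every graph $\bH$ and align with the $[2^m1^*]$-syntax.

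For the tractable bullet: case (a) is exactly Proposition~\ref{prop:12forest-extra}, which gives $[2^m1^*]$-CSP$(\bH)\in{\rm P}$ for every $m$ by guessing the $\exists^{\geq 2}$-witnesses (there are $O(|H|^{2m})$ choices) and invoking the polynomial-time algorithm for Ret$(\bH)$ on forests. For case (b), Proposition~\ref{prop:cont-c4} shows that $\{1,2\}$-CSP$(\bH)$ is in L when $\bH$ is bipartite and contains $\bC_4$; since each logic $[2^m1^*]$-pp is a syntactic fragment of $\{1,2\}$-pp, the problem $[2^m1^*]$-CSP$(\bH)$ trivially inherits the L upper bound (hence is in P) for every $m$.

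For the NP-complete bullet: in case (d), $\bH$ non-bipartite, Hell--Nesetril gives that CSP$(\bH)=\{1\}$-CSP$(\bH)$ is already NP-complete, and since $\{1\}$-pp coincides with $[2^01^*]$-pp, taking $m=0$ suffices. In case (c), $\bH$ is bipartite, not a forest, and contains no $\bC_4$; let $2j$ (with $j\geq 3$) be the girth of $\bH$, so $\bH$ contains $\bC_{2j}$ but no shorter cycle. This is precisely the hypothesis of Corollary~\ref{cor:twins}, which supplies an $m$ for which $[2^m1^*]$-CSP$(\bH)$ is NP-complete; membership in NP is automatic because the $\exists^{\geq 2}$-prefix has bounded length, and hardness comes from adapting the girth-based isolation gadget of Proposition~\ref{prop:12C2j-extra} to the large-girth bipartite setting.

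The only genuine obstacle is verifying that the case split is exhaustive and that the $[2^m1^*]$-restriction is compatible with each cited result. Exhaustiveness is easy: every graph is either a forest or contains a cycle; if it contains a cycle it is either non-bipartite, or bipartite with a shortest cycle of length $4$, or bipartite with a shortest cycle of length $\geq 6$. Compatibility is equally straightforward: the tractability results apply to the larger logic $\{1,2\}$-pp and restrict downwards, while the hardness reductions in Corollary~\ref{cor:twins} and Proposition~\ref{prop:12C2j-extra} are already cast in the $[2^m1^*]$ syntax, using a bounded number of $\exists^{\geq 2}$ quantifiers outermost followed by $\exists$ quantifiers only. With these two observations the four cases assemble into the stated dichotomy.
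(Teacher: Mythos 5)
Your proof is correct and follows essentially the same route as the paper: the tractable bullet is discharged by Propositions~\ref{prop:12forest-extra} and~\ref{prop:cont-c4}, NP-hardness for non-bipartite $\bH$ comes from Hell--Ne\v{s}et\v{r}il with $m=0$, and the remaining bipartite case invokes Corollary~\ref{cor:twins} via the girth. The only difference is that you spell out the exhaustiveness of the case split and the downward compatibility of the tractability results with the $[2^m1^*]$-syntax, which the paper leaves implicit.
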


\begin{proof}
Membership of NP follows since $m$ is fixed. The cases in P follow from
Propositions~\ref{prop:12forest-extra} and \ref{prop:cont-c4}.  Hardness for
non-bipartite graphs follows from \cite{HellNesetril} and for the remaining
bipartite graphs it follows from Corollary~\ref{cor:twins}. 
\end{proof}
\vspace{-4ex}

\section{The complexity of QCSP$(\bC^*_4)$}\label{sec:new}
\vspace{-1ex}

Let $\bC^*_4$ be the reflexive $4$-cycle. The complexities of Ret$(\bC_6)$ and
Ret$(\bC^*_4)$ are both hard (NP-complete) \cite{FederHellHuang99,FederHell98},
and retraction is recognised to be a ``cousin'' of QCSP (see
\cite{SurHomSurvey}). The problem QCSP$(\bC_6)$ is known to be in L (see
\cite{CiE2006}), but the complexity of QCSP$(\bC^*_4)$ was hitherto unknown.
Perhaps surprisingly, we show that is is markedly different from that of
QCSP$(\bC_6)$, being Pspace-complete.
\vspace{-1ex}

\begin{proposition}\label{prop:c4*}
\label{prop:C1111}
$\{1,2,3,4\}$-CSP$(\bC^*_4)$ is $\mathrm{Pspace}$-complete.
\end{proposition}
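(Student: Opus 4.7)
Membership in Pspace is immediate from the definition. For Pspace-hardness, the plan is to reduce from a known Pspace-complete problem; natural candidates are QCSP$(\bK_3)$, $\{1,2\}$-CSP$(\bK_3)$ (Pspace-complete by Proposition \ref{prop:ijCSPn}), or QCSP$(\bB_{\mathrm{NAE}})$.

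The structural feature of $\bC^*_4$ I would exploit is that every closed neighbourhood $N[i] = \{i-1,\,i,\,i+1\} \pmod 4$ has size exactly three, omitting only the antipodal vertex $i+2$. Thus $\exists^{\geq 3} v\,E(v,x)$ \emph{forces} Prover to offer precisely $N[x]$, after which Adversary selects any non-antipode of $x$. This yields a ``universal over the closed neighbourhood'' primitive, which together with the true universal $\exists^{\geq 4}$ and the ``commit to a pair'' $\exists^{\geq 2}$ forms the toolkit for the reduction. The construction would then proceed variable by variable: translating source-$\exists x$ by $\exists x$, source-$\forall x$ by $\exists^{\geq 4} x$, and each source-atom (a $\bK_3$-edge or an NAE triple) by a gadget of fresh auxiliary variables quantified by a combination of $\exists$, $\exists^{\geq 2}$ and $\exists^{\geq 3}$. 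By Lemma~\ref{lem:game}, each gadget must be designed so that Prover has a winning strategy on it iff the intended semantics of the source atom is satisfied.

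The main obstacle is that $\bC^*_4$'s edge relation includes all self-loops, so it cannot on its own distinguish equality from genuine adjacency, and thus ``distinctness''---the key content of both $3$-colouring and NAE---is not directly pp-expressible. The compensating mechanism must be game-theoretic: from a pair/triple that Prover is forced to offer under $\exists^{\geq 2}$ or $\exists^{\geq 3}$, Adversary should always be able to pick a witness violating a subsequent edge atom \emph{unless} the two constrained variables encode distinct values under the intended translation. Verifying this game-theoretic equivalence is the crux of the argument. If a direct reduction from QCSP$(\bK_3)$ proves awkward, plausible backups are to adapt the path-replacement construction from the proof of Proposition~\ref{prop:cycles-iii} (itself based on \cite{FederHellHuang99}) to the reflexive setting of $\bC^*_4$, or to invoke the decomposition $\bC^*_4 \cong \bK_2^* \,\square\, \bK_2^*$ and reduce from a suitable two-sorted Boolean template instead.
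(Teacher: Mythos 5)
Your proposal correctly isolates the right toolkit---$\exists^{\geq 4}=\forall$, the ``offer-the-whole-closed-neighbourhood'' primitive coming from $\exists^{\geq 3}$ together with $|N[i]|=3$, and the obstacle that reflexive edges make distinctness non-pp-expressible---and your first backup option is in fact exactly the paper's route. But as written this is a plan, not a proof: no gadget is constructed and no game-theoretic equivalence is verified, so there is a genuine gap.

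For the record, the paper's actual reduction is from QCSP$(\bK_4)$, not QCSP$(\bK_3)$ or QCSP$(\bB_{\mathrm{NAE}})$---matching the fact (which the paper highlights in Section~\ref{sec:new}) that Feder--Hell's reduction to Ret$(\bC^*_4)$ goes through CSP$(\bK_4)$. Concretely, one adjoins to $\mathcal{D}_\psi$ a fresh copy of $\bC^*_4$ on $z_1,z_2,z_3,z_4$ and pins it (up to degenerate cases) with the quantifier prefix $\exists z_1\,\exists^{\geq 2}z_2\,\exists^{\geq 3}z_3\,\exists^{\geq 2}z_4$, which is precisely the refined version of your ``$\exists^{\geq 3}$ forces Prover to offer $N[x]$'' observation. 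Each source edge $(x,y)$ is then replaced by a chain of three nested reflexive $4$-cycles linking $x$, $y$ and the pinned $z_i$, following the Feder--Hell retraction gadget; source $\forall$ becomes $\exists^{\geq 4}$, source $\exists$ stays $\exists$, and all fresh gadget variables are innermost $\exists$. Your ``primary'' plan of translating each $\bK_3$-atom by a small $\{\exists,\exists^{\geq 2},\exists^{\geq 3}\}$-gadget is not shown to be achievable, and the paper avoids it precisely because enforcing distinctness on a reflexive template needs a long chain gadget plus a global pinned copy, not a local atom-by-atom translation.
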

\vspace{-2ex}

\begin{corollary}\label{cor:c4*}
QCSP$(\bC^*_4)$ is $\mathrm{Pspace}$-complete.
\end{corollary}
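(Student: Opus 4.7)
The plan is to reduce the Pspace-hard problem $\{1,2,3,4\}$-CSP$(\bC^*_4)$ of Proposition~\ref{prop:c4*} to QCSP$(\bC^*_4)=\{1,4\}$-CSP$(\bC^*_4)$. Membership in Pspace is immediate from the general upper bound for QCSP, so only hardness requires work. The task is to pp-simulate each counting quantifier $\exists^{\geq 2}$ and $\exists^{\geq 3}$ appearing in an input sentence by a gadget using only $\exists$ and $\forall$, in such a way that truth on $\bC^*_4$ is preserved.

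The structural features of $\bC^*_4$ that I would exploit are the loops (so every closed neighbourhood has size exactly~$3$), the $D_4$ symmetry, and the fact that any two antipodal vertices share the other antipodal pair as their common neighbourhood. These furnish natural $2$- and $3$-element subsets pp-definable from one or two ``selector'' vertices, which, together with $\forall$-ranging auxiliary variables playing the role of Adversary's choice, can be packaged into local gadgets. A sketch of the shape I have in mind for $\exists^{\geq 3}x\,\psi(x)$ is a block of the form $\forall x'\,\exists x\,\bigl(E(x',x)\wedge \psi(x)\bigr)$, together with further $\exists\forall$-structure refining Prover's commitment; for $\exists^{\geq 2}x\,\psi(x)$, one uses existentially quantified selectors cutting down to a common antipodal neighbourhood of size~$2$ before a $\forall$ resolves the choice. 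The Prover-Adversary game of Lemma~\ref{lem:game} provides the convenient framework for checking that the forward direction of the simulation goes through: a winning strategy on the original $\exists^{\geq j}x$-move straightforwardly lifts to a winning strategy on the gadget.

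The main obstacle will be the backward direction: one must argue that any $\{1,4\}$-pp winning strategy on the gadget in fact induces a legal choice of a $2$- or $3$-set in the original game, despite Prover gaining extra $\forall$-quantified moves that in principle let her respond differently on different adversarial branches. Establishing the necessary \emph{rigidity} of the gadgets - that Prover's effective witness set for the simulated counting quantifier has precisely the prescribed size, regardless of how Adversary plays through the $\forall$-blocks - is where most of the work lies, and must be done by a careful case analysis based on the neighbourhood structure of $\bC^*_4$ mentioned above. Once the gadgets for $\exists^{\geq 2}$ and $\exists^{\geq 3}$ are constructed and verified, substituting them throughout an input $\Psi$ of $\{1,2,3,4\}$-CSP$(\bC^*_4)$ yields in polynomial time an equivalent QCSP$(\bC^*_4)$-instance, establishing Pspace-hardness and hence the corollary.
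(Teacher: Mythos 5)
Your high-level plan is exactly the paper's: reduce from the just-proved $\{1,2,3,4\}$-CSP$(\bC^*_4)$ to QCSP$(\bC^*_4)$ by locally replacing each $\exists^{\geq 2}$ and $\exists^{\geq 3}$ with a short $\{\forall,\exists\}$-prefix gadget, exploiting that every closed neighbourhood in $\bC^*_4$ has size $3$. But the concrete gadgets you sketch are off. The block $\forall x'\,\exists x\,\bigl(E(x',x)\wedge\psi(x)\bigr)$ that you propose as the base for $\exists^{\geq 3}$ is in fact precisely the correct simulation of $\exists^{\geq 2}$: since $|N[x']|=3$ for every $x'$, a $2$-element set of witnesses can never be entirely missed, while a single witness is missed by taking $x'$ to be its antipode. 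The correct gadget for $\exists^{\geq 3}$ is $\forall x''\,\forall x'\,\exists x\,\bigl(E(x'',x)\wedge E(x',x)\wedge\psi(x)\bigr)$: any two closed neighbourhoods intersect in a set of size at least $2$ (size $3$ when $x''=x'$, size $2$ otherwise), so a $3$-element witness set always hits the intersection, while if only $2$ elements satisfy $\psi$ one can choose $x'',x'$ so that $N[x'']\cap N[x']$ avoids both.

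Your proposal for $\exists^{\geq 2}$ — ``existentially quantified selectors cutting down to a common antipodal neighbourhood of size $2$ before a $\forall$ resolves the choice'' — does not fit the positive-Horn format: a $\forall$ always ranges over the entire $4$-element domain and cannot be restricted to a pp-definable $2$-set without implication, which is unavailable. You also anticipate that the hard part will be a ``rigidity'' case analysis in the backward direction; with the right gadgets there is no such difficulty. The verification is a one-line neighbourhood count for each of the two gadgets, as above, and the corollary then follows by straightforward substitution throughout the input sentence (introducing fresh primed variables at each occurrence).
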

\vspace{-1ex}

\noindent The proofs of these claims are based on the hardness of the retraction
problem to reflexive cycles \cite{FederHell98} and are similar to our proof of
the even case of Proposition~\ref{prop:cycles-iii}.\\ (We skip the details for
lack of space; please consult the appendix.)

While QCSP$(\bC^*_4)$ has different complexity from QCSP$(\bC_6)$, we
remark that the better analog of the retraction complexities is perhaps that
$\{1,|C^*_4|\}$-CSP$(\bC^*_4)$ and $\{1,|C_6|/2\}$-CSP$(\bC_6)$ \textbf{do} have
the same complexities (recall the reductions to Ret$(\bC^*_4)$ and Ret$(\bC_6)$
involved CSP$(\mathcal{K}_{|C^*_4|})$ and CSP$(\mathcal{K}_{|C_6|/2})$,
respectively.
\vspace{-1ex}

\section{Conclusion}\label{sec:conclusion}
\vspace{-1ex}
We have taken first important steps to understanding the complexity of CSPs with
counting quantifiers, even though several interesting questions have resisted
solution.  We would like to close the paper with some open problems.
\newpage

In Section~\ref{sec:cliques}, the case $n=2j$ remains. When $j=1$ and $n=2$, we
have $\{1\}$-CSP$(\bK_2)$=CSP$(\bK_2)$ which is in L by \cite{RheingoldJACM}.
For higher $j$, the question of the complexity of $\{j\}$-CSP$(\bK_{2j})$ is
both challenging and open.

We would like to prove the following more natural variants of
Theorem~\ref{thm:selling-point}, whose involved combinatorics appear to be much
harder.

\begin{conjecture}
Let $\bH$ be a graph. Then 
\vspace{-0.5ex}
\begin{itemize}
\item $[2^*1^*]$-CSP$(\bH)$ is in P, if $\bH$ is a forest or a bipartite
graph containing $\bC_4$,
\item $[2^*1^*]$-CSP$(\bH)$ is NP-hard, if otherwise.
\end{itemize}
\end{conjecture}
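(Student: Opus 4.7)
The plan is to split the conjecture into a straightforward hardness half and a deep tractability half, leaning on Theorem~\ref{thm:selling-point} wherever possible.

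The hardness direction is essentially free: if $\bH$ is neither a forest nor a bipartite graph containing $\bC_4$, then Theorem~\ref{thm:selling-point} yields some $m$ for which $[2^m1^*]$-CSP$(\bH)$ is $\NP$-complete; the syntactic inclusion of $[2^m1^*]$-pp into $[2^*1^*]$-pp gives an identity reduction, so $[2^*1^*]$-CSP$(\bH)$ is $\NP$-hard. The tractability subcase of bipartite $\bH$ containing $\bC_4$ is equally immediate: Proposition~\ref{prop:cont-c4} shows $\{1,2\}$-CSP$(\bH) \in \L$, and $[2^*1^*]$-pp is a syntactic fragment of $\{1,2\}$-pp, so the upper bound is inherited a fortiori.

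All the real content is therefore in the case where $\bH$ is a forest and the number of $\exists^{\geq 2}$ quantifiers is \emph{unbounded}: the brute-force used for Proposition~\ref{prop:12forest-extra} no longer applies. I would view an input $\exists^{\geq 2} x_1\ldots\exists^{\geq 2} x_m\,\exists y_1\ldots\exists y_k\,\psi$ as a two-round game --- Prover commits to a pair $P_i \in \binom{H}{2}$ for each $x_i$; Adversary picks $b_i \in P_i$; Prover extends to a homomorphism (the last step being in $\Ptime$ for forests since $\mathrm{Ret}(\bH)\in\Ptime$ by \cite{Pseudoforests}) --- and design an arc-consistency-style propagation. Maintain for each $x_i$ a set $\mathcal{P}_i \subseteq \binom{H}{2}$ of surviving candidate pairs, initially all of $\binom{H}{2}$, and iteratively discard any pair that fails a suitable local test: for each $b \in P_i$ and each constraint neighbour of $x_i$, there must still be a surviving pair at the neighbour compatible with $b$. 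Termination in polynomial time is automatic from $|\mathcal{P}_i| = O(|H|^2)$ throughout.

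The main obstacle, and the combinatorial heart of the conjecture, is proving \textbf{completeness} of this propagation on forest templates: that non-emptiness of every $\mathcal{P}_i$ at the fixed point implies existence of a global winning Prover strategy. I expect this to require exploiting the algebraic structure of trees --- most naturally the presence of a near-unanimity or majority polymorphism --- to lift bounded-width local consistency to global consistency, in the spirit of bounded-width CSP dichotomy arguments. Ruling out subtle cascades, whereby an apparently innocuous pair chosen for some $x_i$ silently forces an impossible situation at a later $x_j$, is the delicate step, and it is precisely here that forests must behave better than arbitrary bipartite templates to support the dichotomy claimed by the conjecture.
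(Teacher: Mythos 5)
This statement is explicitly labelled a \emph{conjecture} in the paper and is left open: the authors say just above it that ``we would like to prove the following more natural variants of Theorem~\ref{thm:selling-point}, whose involved combinatorics appear to be much harder.'' The paper contains no proof of it, so there is nothing to compare your argument against on the paper's side.

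Your proposal correctly disposes of the three parts that do follow from results in the paper: the NP-hardness half via Theorem~\ref{thm:selling-point} and the syntactic inclusion $[2^m1^*]\text{-pp}\subseteq[2^*1^*]\text{-pp}$; and the bipartite-with-$\bC_4$ tractability half via Proposition~\ref{prop:cont-c4} and the inclusion $[2^*1^*]\text{-pp}\subseteq\{1,2\}\text{-pp}$. But you yourself acknowledge that the remaining forest case with unboundedly many $\exists^{\geq 2}$ quantifiers is ``the combinatorial heart of the conjecture'' and that the completeness of your propagation scheme is unestablished. That is exactly the step that is open, so what you have written is an approach sketch, not a proof.

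There is also a concrete modelling issue in your sketch worth flagging: you describe a \emph{simultaneous} game in which Prover commits to all pairs $P_1,\ldots,P_m$ up front and Adversary then chooses $b_i\in P_i$. The actual semantics of $\exists^{\geq 2}x_1\ldots\exists^{\geq 2}x_m$ (cf.\ Lemma~\ref{lem:game}) is \emph{adaptive}: Prover picks $P_1$, Adversary picks $b_1\in P_1$, Prover then picks $P_2$ depending on $b_1$, and so on. A Prover winning strategy is a strategy tree, not a tuple of pairs, and this makes the object being certified exponentially larger than your $\mathcal{P}_i$ sets suggest. An arc-consistency argument would at minimum have to show that on forests the adaptive game collapses to the simultaneous one, or else directly argue about strategy trees; neither is addressed. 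Until that completeness lemma (and the game-model reconciliation) is proved, the forest case remains open, and with it the conjecture.
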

\vspace{-1ex}

\begin{conjecture}
Let $\bH$ be a graph. Then 
\vspace{-0.5ex}
\begin{itemize}
\item $\{1,2\}$-CSP$(\bH)$ is in P, if $\bH$ is a forest or a bipartite
graph containing $\bC_4$,
\item $\{1,2\}$-CSP$(\bH)$ is NP-hard, if otherwise.
\end{itemize}
\end{conjecture}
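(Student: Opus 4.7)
The conjecture splits into a hardness part and a tractability part. The hardness part is essentially a consequence of what is already established in the paper: for non-bipartite $\bH$, $\mathrm{CSP}(\bH)=\{1\}$-CSP$(\bH)$ is NP-complete by \cite{HellNesetril} and is trivially contained in $\{1,2\}$-CSP$(\bH)$; for bipartite $\bH$ whose shortest cycle is $\bC_{2j}$ with $j \geq 3$, Corollary~\ref{cor:twins} gives NP-completeness of $[2^m 1^*]$-CSP$(\bH)$ for some fixed $m$, and since $\{1,2\}$-CSP$(\bH)$ syntactically subsumes each $[2^m 1^*]$-CSP$(\bH)$, NP-hardness transfers. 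The only verification needed is that every bipartite graph that is neither a forest nor contains $\bC_4$ has a shortest cycle of length at least $6$, which is immediate.

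The genuinely new content is the tractability direction. For bipartite graphs containing $\bC_4$ it is already given by Proposition~\ref{prop:cont-c4}. The missing case is that of forests $\bH$. Proposition~\ref{prop:12forest-extra} handles $[2^m1^*]$-CSP$(\bH)$ when $m$ is fixed, but here both the number of $\exists^{\geq 2}$ quantifiers and their interleaving with $\exists$ are unbounded, so a new argument is required. My plan is to design a polynomial-time algorithm that analyses the Prover--Adversary game $\mathscr{G}(\Psi,\bH)$ of Lemma~\ref{lem:game} directly. I would process the quantifier prefix left-to-right and, for each variable $x_i$, maintain a compact ``state'' recording the partial assignments consistent with Adversary's past choices and with the possibility of completing to a homomorphism. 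The intended invariant is that this state depends only on the values currently pinned down, together with the connectivity of the already-seen part of $\mathcal{D}_\psi$ projected onto $\bH$. Because $\bH$ is a forest, any conjunction of constraints of the form ``$x$ is adjacent to fixed vertices $v_{i_1},\ldots,v_{i_k}$'' forces $x$ into either the empty set, a single vertex, or a small subtree, so the state space admits a succinct polynomial-size encoding.

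The main obstacle will be the branching caused by $\exists^{\geq 2}$: when Prover offers a $2$-set, Adversary's choice splits Prover's future strategy in two, and iterating this naively produces an exponential game tree. The technical heart of the argument should be a forest-specific \emph{collapsing lemma}, stating that on a tree target Prover can always choose her $2$-set so that both of Adversary's continuations land in the same canonical state, allowing the two branches to be merged. I would try to prove this by an arc-consistency-style propagation combined with the symmetry of sibling vertices in $\bH$, using crucially that in a forest any two vertices are connected by a unique path (so the ``reachable zone'' in $\bH$ under past constraints is always a subtree). Once such a lemma is in place, standard dynamic programming over the resulting bounded state space yields a polynomial-time algorithm, and combining with the hardness discussion above settles the conjecture. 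I expect the combinatorics of the collapsing lemma, rather than the algorithmic bookkeeping, to be where the real difficulty lies.
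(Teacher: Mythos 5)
This statement is labelled a \emph{conjecture} in the paper, not a theorem. The authors state it alongside the $[2^*1^*]$ variant and remark that the ``involved combinatorics appear to be much harder'' than those of Theorem~\ref{thm:selling-point}; they supply no proof. So there is no paper proof against which to check your attempt, and you should treat what you wrote as a research plan rather than a verification.

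That said, your reduction of the problem to its genuinely open kernel is accurate and matches the structure the paper does establish. Hardness for non-bipartite $\bH$ comes from \cite{HellNesetril} via the inclusion $\{1\}\textrm{-}\mathrm{pp}\subseteq\{1,2\}\textrm{-}\mathrm{pp}$; hardness for bipartite non-forests not containing $\bC_4$ comes from Corollary~\ref{cor:twins}, again by syntactic inclusion of $[2^m1^*]\textrm{-}\mathrm{pp}$; and tractability for bipartite graphs containing $\bC_4$ is Proposition~\ref{prop:cont-c4}. The single missing piece is exactly the one you flag: membership in P of $\{1,2\}$-CSP$(\bH)$ for forests $\bH$ when neither the number of $\exists^{\geq 2}$ quantifiers nor their interleaving with $\exists$ is bounded, so that Proposition~\ref{prop:12forest-extra} does not apply.

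The gap is in the ``collapsing lemma.'' As you state it (``Prover can always choose her $2$-set so that both of Adversary's continuations land in the same canonical state''), it is almost certainly too strong: even on a short path such as $\bP_5$, the two elements of a natural $2$-set can sit at different depths or on opposite sides of a branch point, so the sets of vertices reachable by later variables under the two Adversary choices are genuinely different subtrees. What a proof would need is not that the two continuations are \emph{equal} but that they can be summarised by a state abstraction that is (a) small enough to be enumerated in polynomial time, (b) sound, and (c) complete against an adaptive Adversary whose choices compound multiplicatively over an unbounded number of $\exists^{\geq 2}$ quantifiers. Designing such an abstraction, and proving a propagation/consistency argument that keeps it closed under quantifier steps, is exactly the combinatorial difficulty the authors say stopped them. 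Until some form of this lemma is proved (or a different algorithmic route is found), the forest tractability half, and hence the conjecture, remains open; your proposal correctly locates the difficulty but does not resolve it.
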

\vspace{-3ex}

\newpage

\appendix

\section{Appendix}

\subsection{Omitted proofs}

\begin{proofof}{Proposition \ref{prop:jCSP2j+1}}
We show that $\bK_{\binom{2j+1}{j}} \models \Psi$ iff $\bK_{2j+1} \models
\Psi'$.  Observe there is a natural bijection $\pi$ from subsets of $j$ elements
of $\bK_n$ to vertices of $\bK_{\binom{2j+1}{j}}$. In the simulation of
QCSP$(\bK_{\binom{2j+1}{j}})$ in $\{j\}$-CSP$(\bK_{2j+1})$, Adversary may be
seen to take on the role of denying $\bK_{\binom{2j+1}{j}} \models \Psi$ while
Prover is asserting that it is true. Thus, Adversary may always be assumed to
play variables $v^1,\ldots,v^j$ such that $|\{v^1,\ldots,v^j\}|=j$, because
otherwise he is simply making the job of Prover easier (by the properties of the
gadget $\mathcal{G}_j$). The behaviour of existential quantification in the
simulation is easy to see, but we will consider more carefully the behaviour of
universal quantification. The additional
$v^{1,1},\ldots,v^{1,j+1},\ldots,v^{j,1},\ldots,v^{j,j+1}$ force that every
possible subset $\{a_1,\ldots,a_j\} \subset \{1,\ldots,2j+1\}$ can be forced by
Adversary on $v^1,\ldots,v^j$. Indeed, Adversary may force any single element on
$v^i$ by avoiding it in $v^{i,1},\ldots,v^{i,j+1}$.

($\Rightarrow$) Assume $\bK_{\binom{2j+1}{j}} \models \Psi$. However Prover plays
the variables in $\Psi'$ corresponding to universal variables of $\Psi$, she
will be able to find a witness set $\pi^{-1}(a)$ for the variables in $\Psi'$
corresponding to an existential variable $x$ in $\Psi$, precisely because that
existential variable has some witness $a \in K_{\binom{2j+1}{j}}$. 

($\Leftarrow$) Assume $\bK_{2j+1} \models \Psi'$. No matter how Prover plays to
win $\mathscr{G}(\Psi',\bK_{2j+1})$, she will have possible witnesses sets
$\{a_1,\ldots,a_j\}$ for variables $\{v^1,\ldots,v^j\}$ in $\Psi'$ corresponding
to an existential variable $v$ of $\Psi$, for all sets $\{b_1,\ldots,b_j\}
\subset \{1,\ldots,2j+1\}$ corresponding to universal variables
$\{u_1,\ldots,v_j\}$ of $\Psi$ (because of the behaviour of the universal
variable simulation). Thus the existential witness $\pi(\{a_1,\ldots,a_j\})$ may
be used in $\Psi$ for $v$, and the result follows. 
\end{proofof}\vspace{-1ex}

\begin{proofof}{the claims (1a-c) from Proposition \ref{prop:cycles-i}}

For (1a), let $y$ be a predecessor of $x$. Then for the value $i$ chosen by 
Adversary for $y$, Prover must offer a set of at least three vertices of
$\bC_n$ that are adjacent to $i$ in $\bC_n$. Since there are only two such vertices,
Adversary can always choose for $x$ a vertex non-adjacent to $i$ at which
point Prover loses.

For (1b), let $y$ be the first vertex of the connected component of ${\cal
D}_\psi$ that contains $x$. Assume $y\neq x$ and consider the path ${\cal P}$ between
$x$ and $y$ in ${\cal D}_\psi$.  Without loss of generality, assume that the
value $i$ chosen by Adversary for $y$ is even. Note that, because $n$ is
even, if the length of ${\cal P}$ is also even, then Adversary must choose an even
value for $x$, while if the length is odd, she must choose an odd value
(otherwise Prover loses).  However, as $j>n/2$, the set provided by 
Prover for $x$ contains both an even and an odd number.  Thus Adversary is
allowed to choose for $x$ the wrong parity and Prover loses.

For (1c), suppose that $y$ and $z$ with $y\prec z$ are predecessors of $x$ where
$z$ is quantified by $\exists^{\geq j'}$ for some $j'\geq 2$.  If $i$ is the
value chosen by Adversary for $y$, then Prover must offer for $z$ a set
of $j'\geq 2$ values which hence must contain at least one value different from
$i$.  Adversary then chooses this value $i'$ after which Prover must
offer for $x$ two distinct vertices $i'',i'''$ of $\bC_n$ adjacent to both $i$ and
$i'$. But then $i,i',i'',i''$ yield a 4-cycle in $\bC_n$, impossible if $n\neq 4$.
\end{proofof}

\begin{proofof}{the case of odd $n$ in Proposition \ref{prop:cycles-iii}}
We argue that $\bC_n\models\Psi$ if and only if $\bC_n\models\Psi'$.  To do
this, it suffices to show that $\Psi'$ correctly simulates the universal
quantifiers of $\Psi$.  Namely, it suffices to prove that $\bC_n\models
Q_x\pi_x$, and for each $\ell\in\{0\ldots n-1\}$, Adversary has a strategy on
$Q_x\pi_x$ that evaluates $x$ to $\ell$.

For the first part, we provide a strategy for Prover.  We treat $x$ as
$x^{n+1}_1$.  For $x^1_1$, Prover offers any set.  For $x^k_1$ where $k\geq
2$, let $i$ be the value chosen by Adversary for $x^{k-1}_1$. By Lemma
\ref{lem:additive}, we observe that there are exactly $j$ vertices in $\bC_n$
having a walk to $i$ of length $j-1$. Prover offers this set for $x^k_1$.
This allows her to choose values for $x^{k-1}_2\ldots x^{k-1}_{j-1}$ as the path
$x^{k-1}_1,\ldots x^{k-1}_{j-1},x^k_1$ encodes precisely the fact that there
exists a walk of length $j-1$ between the values chosen for $x^{k-1}_1$ and
$x^k_1$. Thus $\bC_n\models Q_x\pi_x$.

For the second part, consider any $\ell\in\{0\ldots n-1\}$.  We explain a
strategy for Adversary that allows him to choose $\ell$ for $x$.  First, 
Adversary chooses any value for $x^1_1$.  Let $i_0$ be this value, and by the
second part of Lemma \ref{lem:additive}, choose a sequence of $n$ numbers
$i_1,i_2,\ldots,i_n$ either all from $\{-1,+1\}$ if $j$ is odd, or all from
$\{-2,0,+2\}$ if $j$ is even, such that $i_0+i_1+i_2+\ldots+i_n=\ell$. After
this, consider inductively $k\geq 2$ and let $i$ be the value chosen by 
Adversary for $x^{k-1}_1$.  By Lemma \ref{lem:additive}, there are exactly $j$
possible values that Prover can offer if she does not want to lose. Thus
Prover is forced to offer all these values. In particular, if $j$ is even,
this set contains values $i+1$ and $i-1$ (mod $n$) while if $j$ is odd, the set
contains values $i+2$, $i$, and $i-2$ (mod $n$). Thus Adversary is allowed
to choose the value $i+i_{k-1}$ (mod $n$) for $x^k_1$.  This shows that 
Adversary is allowed to choose the value $i_0+i_1+\ldots+i_n=\ell$ for
$x^{n+1}_1=x$.

Thus, this proves that $\Psi'$ correctly simulates the universal quantifiers of
$\Psi$, and consequently $\bC_n\models\Psi$ if and only if $\bC_n\models\Psi'$.
For odd $n$, this completes the proof of the claim that $\{1,j\}$-CSP($\bC_n$)
is Pspace-hard.
\end{proofof}

\begin{proofof}{the case of even $n$ in Proposition \ref{prop:cycles-iii}}

We analyse possible assignments to the vertices $v_0,\ldots,v_{n-1}$.  For
clarity, we define $\alpha_k=kj-r-k-1$ and note that $n/2+1=\alpha_k$ for
$k=\big\lceil\frac{n+4}{2(j-1)}\big\rceil$.  By the symmetry of $\bC_n$, we
assume that Adversary chooses for $w_0$ the value $n-r-1$.  The next quantified
vertex is $v_{j-r-2}=v_{\alpha_1}$ in distance $j-1$ from $w_0$.  Thus, by Lemma
\ref{lem:additive}, there are exactly $j$ values that Prover can and must offer.
Among them, we find $j-r-2=\alpha_1$.  Similarly, for $2\leq k\leq
\big\lceil\frac{n+4}{2(j-1)}\big\rceil$, the vertex $v_{\alpha_{k-1}}$ is in
distance $j-1$ from $v_{\alpha_k}$, and hence, Prover is forced to offer a set
of $j$ values only depending on the value chosen for $v_{\alpha_{k-1}}$.  In
particular, if $\alpha_{k-1}$ was chosen for $v_{\alpha_{k-1}}$, then Adversary
can choose $\alpha_k$ for $v_{\alpha_k}$.  This argument also shows that if
Prover acts as we describe, then in every possible case she can complete the
homomorphism for the path $w_0,w_1,\ldots,w_r,v_0\ldots,v_{n/2+1}$.  Further,
she also has a way of assigning the values to $v_{n/2+2},\ldots,v_{n-1}$. This
can be seen as follows.  First, note that the distance betwen $v_{n/2+1}$ and
$v_0$ is $n/2-1$.  Thus, if $n/2$ is odd, then the values assigned to $v_0$ and
$v_{n/2+1}$ have the same parity because $n/2+1$ is even and we observe that
between any two vertices of the same parity in $\bC_n$ there exists a walk of
length $n/2-1$. Similarly, if $n/2$ is even, the values chosen for
$v_0,v_{n/2+1}$ have different parity and between any two vertices of $\bC_n$ of
different parity there is a walk of length $n/2-1$.

This concludes the argument for the vertices $v_0,\ldots,v_{n-1}$.  It implies
two possible types of outcomes:  either the values chosen for $v_0, \ldots,
v_{n-1}$ are all distinct, or  not.  To obtain the former case, for each $1\leq
k \leq \big\lceil \frac{n+4}{2(j-1)} \big\rceil$, Adversary chooses
$\alpha_k$ for $v_{\alpha_k}$. This forces assigning $i$ to $v_i$ for all
$i\in\{0, \ldots, n/2+1\}$ and thus consequently also for all the other $v_i$'s.
We shall assume this situation first. For all other (degenerate) cases we use a
different argument explained later.

Thus assuming that $v_0,\ldots,v_{n-1}$ get assigned values $0,\ldots,n-1$ in
that order and the values for the original variables of $\Psi$ are chosen, we
argue that Prover can finish the homomorphism if and only if the assignment
to the variables of $\Psi$ is a proper colouring for ${\cal D}_\psi$.  This
follows exactly as in \cite{FederHellHuang99}. Namely, in every gadget, each copy
of $\bC_n$ is forced to copy the assignment from the adjacent copy of $\bC_n$,
shifted by $+1$ or by $-1$ (mod $n$).  In particular, if $i$ is the value
assigned to~$y$, the vertex $z$ opposite $y$ in the last copy of $\bC_n$ is
necessarily assigned value $n/2+i$ (mod $n$).  This implies that the value
assigned to $x$ is different from $i$ as the path from $z$ to $x$ is too short
(of length less than $n/2$).  On the other hand, this path is long enough so
that any value of the same parity as $i$ but different from $i$ can be chosen
for $x$ such that the homomorphism can be completed.  This precisely simulates
the edge predicate of $\Psi$.  Finally, we observe that Prover can choose
whether consecutive copies of $\bC_n$ are shifted by $+1$ or $-1$ and there are
exactly $3n/2$ copies of $\bC_n$. Thus, by Lemma \ref{lem:additive}, every possible
odd number from $\{0,\ldots,n-1\}$ can be chosen for $y$ by a particular series
of shifts. It follows that $\{1,3,\ldots,n-1\}$ is precisely the set colours we use to
simulate QCSP($\bK_{n/2}$).
\medskip

Now, we discuss the degenerate cases.  Namely, we show that, regardless of the
assignment to $v_0,\ldots,v_{n-1}$, for each copy of the gadget (in Figure
\ref{fig:even-case}) there is a way to complete the homomorphism (by assigning
the values to the white vertices) in such a way that if $\ell$ is the value
assigned to~$y$, then the vertex opposite $y$ in the last copy of $C_n$ is
assigned value $\ell+n/2$ (mod $n$).  As this is exactly what happens in the
non-degenerate case, the rest will follow. Note that consideration of these degenerate cases is the reason we use a chain of $3n/2$ copies of $\bC_n$ in the gadget of Figure~\ref{fig:even-case}, instead of the $n/2$ used in the like gadget in \cite{FederHell98}.

Since we assume that the vertices $v_0,\ldots,v_{n-1}$ are assigned a proper
subset of $\{0,\ldots,n-1\}$, it can be seen that they are assigned a circularly
consecutive subset of these numbers, and this subset is of size at most $n/2+1$
as otherwise the assignment cannot be a homomorphism. (Recall that in the proof
we argue that we can assume that the assignment to $v_0,\ldots,v_{n-1}$ is a
homomorphism).

For simplicity, let $\lambda$ denote the assignment constructed so far, i.e., a
mapping from the assigned vertices to their assigned values. We explain how to
complete this assignment for the gadget so that it becomes a homomorphism to
$C_n$.

The gadget contains $3n/2$ copies of $C_n$. We consider them from the right to
left, namely $\{v_0,\ldots,v_{n-1}\}$ is the 1st copy, and the $3n/2$-th copy is
the one containing $y$.  With this in mind, we denote by
$v_0^i,\ldots,v_{n-1}^i$ the respective copies of $v_0,\ldots,v_{n-1}$ in the
$i$-th copy of $C_n$.  In particular, $y$ is the vertex $v^{3n/2}_{n/2}$.

We describe the assignment to the copies of $C_n$ in three phases.  In the first
phase, we assign values to the first $n/2$ copies.  Consider $1\leq i<n/2$, and
assume that the vertices $v_0^i,\ldots,v_{n-1}^i$ are assigned values between
$a$ and $b$ (inclusive) in the clock-wise order.  Then the assignment to the
$(i+1)$-st copy of $C_n$ is as follows.  For $k\in\{0,\ldots,n-1\}$, if
$\lambda(v_k^i)=a$, then we set $\lambda(v_k^{i+1})=a+1~({\rm mod}~n)$,
otherwise we set $\lambda(v_k^{i+1})=\lambda(v_k^i)-1~({\rm mod}~n)$.  It is
easy to verify that this constitutes a homomorphism to $C_n$. It follows that
$\big|\lambda\big(\{v^{n/2}_0,\ldots,v^{n/2}_{n-1}\}\big)\big|=2$.

Next, we explain the assignment to the second $n/2$ copies of $C_n$.  Let $\ell$
be the value assigned to $y$.  We choose the values for the second $n/2$ copies
in such a way that consecutive copies of $C_n$ are just shifted by $+1$ or $-1$. We can
choose an appropriate sequence of $+1,-1$ shifts so that the value assigned to
$v^n_{n/2}$ is exactly $\ell+n/2~({\rm mod}~n)$. (The argument about the parity
of these values is the same as in the non-degenerate case.) We further
conclude that $\big|\lambda\big(\{v^{n}_0,\ldots,v^{n}_{n-1}\}\big)\big|=2$.

The assignment to the final $n/2$ copies is as follows.  For $n\leq i<3n/2$,
again assume that the vertices $v_0^i,\ldots,v_{n-1}^i$ are assigned values
between $a$ and $b$ in the clockwise order.  Then for
$k\in\{0,\ldots,n-1\}\setminus\{n/2\}$, if $\lambda(v_k^i)=a+1~({\rm mod~}n)$
and $\lambda\big(v_{(k-1)~{\rm mod}~n}^i\big)=\lambda\big(v_{(k+1)~{\rm
mod}~n}^i\big)=a$, then we set $\lambda(v_k^{i+1})=a$, and otherwise we set
$\lambda(v_k^{i+1})=\lambda(v_k^i)+1$.  Again, we conclude that this constitutes
a homomorphism, and it follows that
$\big|\lambda\big(\{v^{3n/2}_0,\ldots,v^{3n/2}_{n-1}\}\big)\big|=n/2+1$.  In
particular,  we observe that $\lambda\big(y=v^{3n/2}_{n/2}\big)=\ell$ and
$\lambda\big(v^{3n/2}_0\big)=\ell+n/2~({\rm mod}~n)$.

That concludes the argument.
\end{proofof}\vspace{-1ex}

\begin{proofof}{Proposition \ref{prop:c4*}}
We will reduce from the problem QCSP$(\mathcal{K}_4)$ (known to be
Pspace-complete from, e.g., \cite{BBCJK}). We will borrow heavily from the
reduction of CSP$(\mathcal{K}_4)$ to Ret$(\bC^*_4)$ in \cite{FederHell98}. The
reduction has a very similar flavour to that used in Case $(ii)$ of
Theorem~\ref{thm:cycles}, but borrows from \cite{FederHell98} instead of
\cite{FederHellHuang99}.

For an input $\Psi:=Q_1 x_1 Q_2 x_2 \ldots Q_m x_m \ \psi(x_1,x_2,\ldots,x_m)$
for QCSP$(\bK_4)$ we build an input $\Psi'$ for $\{1,2,3,4\}$-CSP$(\bC^*_4)$ as
follows. We begin by considering the graph $\mathcal{D}_\psi$, from which we
first build a graph $\mathcal{G}':=\mathcal{D}_\psi \uplus \mathcal{C}^*_4$. Now
we build $\mathcal{G}''$ from $\mathcal{G}'$ by replacing every edge $(x,y) \in
\mathcal{D}_\psi$ with the following gadget (which connects also to the fixed
copy of $\mathcal{C}^*_4$ in $\mathcal{G}'$ -- induced by $\{z_1,\ldots,z_4\}$ -- as
drawn in the picture).

\begin{figure}[h!]
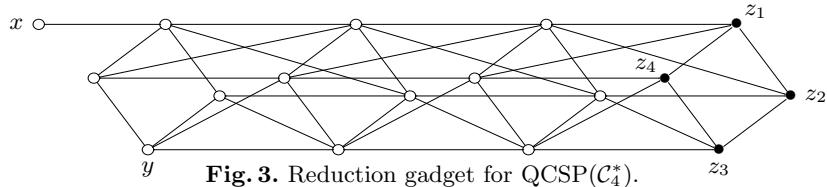

\vspace{-4ex}
\mbox{}\hfill$
\xy/r2pc/:
(9,0);(9,0);{\xypolygon4"X"{~>{-}~={82}[o]{\bullet}}};
(6,0);(6,0);{\xypolygon4"Y"{~>{-}~={82}[o][F]{\phantom{s}}}};
(3,0);(3,0);{\xypolygon4"W"{~>{-}~={82}[o][F]{\phantom{s}}}};
(0,0);(0,0);{\xypolygon4"Z"{~>{-}~={82}[o][F]{\phantom{s}}}};
{\ar@{-} "X1";"Y1"};
{\ar@{-} "X1";"Y2"};
{\ar@{-} "X2";"Y2"};
{\ar@{-} "X2";"Y3"};
{\ar@{-} "X3";"Y3"};
{\ar@{-} "X3";"Y4"};
{\ar@{-} "X4";"Y4"};
{\ar@{-} "X4";"Y1"};
{\ar@{-} "Y1";"W1"};
{\ar@{-} "Y1";"W2"};
{\ar@{-} "Y2";"W2"};
{\ar@{-} "Y2";"W3"};
{\ar@{-} "Y3";"W3"};
{\ar@{-} "Y3";"W4"};
{\ar@{-} "Y4";"W4"};
{\ar@{-} "Y4";"W1"};
{\ar@{-} "W1";"Z1"};
{\ar@{-} "W1";"Z2"};
{\ar@{-} "W2";"Z2"};
{\ar@{-} "W2";"Z3"};
{\ar@{-} "W3";"Z3"};
{\ar@{-} "W3";"Z4"};
{\ar@{-} "W4";"Z4"};
{\ar@{-} "W4";"Z1"};
"Z1"+(-2,0)*[o][F]{\phantom{s}}="a1";
{\ar@{-} "a1";"Z1"};
"a1"+(-0.35,0)*{x};
"Z3"+(0,-0.3)*{y};
"X1"+(0.3,0.2)*{{z_1}};
"X4"+(0.4,0)*{{z_2}};
"X3"+(0,-0.3)*{{z_3}};
"X2"+(-0.3,0.2)*{{z_4}};
\endxy$\hfill\mbox{}
\vspace{-4ex}
\caption{Reduction gadget for QCSP$(\mathcal{C}^*_4)$.\label{fig:c4star}}
\vspace{-2ex}
\end{figure}

\noindent $\phi_{\mathcal{G}''}$ will form the quantifier-free part of $\Psi'$;
we now explain the structure of the quantifiers. Let $z_1,\ldots,z_4$ correspond
in $\phi_{\mathcal{G}''}$ to the fixed copy of $\mathcal{C}^*_4$ in
$\mathcal{G}''$. $\Psi'$ begins $\exists z_1 \exists^{\geq 2} z_2 \exists^{\geq
3} z_3 \exists^{\geq 2} z_4$ ($z_1$ could equally be quantified with
$\exists^{\geq j}$, $j>1$). Now we continue in the quantifier order of $\Psi$.
When we meet an $\exists$ quantifier, we quantify with $\exists$ the
corresponding vertex in $\phi_{\mathcal{G}''}$. When we meet a $\forall$
quantifier, we quantify with $\forall=\exists^{\geq 4}$ the corresponding vertex
in $\phi_{\mathcal{G}''}$. Finally, we quantify with $\exists$ all remaining
variables, corresponding to vertices we added in gadgets in $\mathcal{G}''$. We
claim that $\bK_3 \models \Psi$ iff $\bC^*_4 \models \Psi'$. The proof of this
proceeds as with Theorem~\ref{thm:cycles} (though there are several more
degenerate cases to consider). 
\end{proofof}

\begin{proofof}{Corollary \ref{cor:c4*}}
\,
We give a reduction from $\{1,2,3,4\}$-CSP$(\bC^*_4)$ to QCSP$(\bC^*_4)$, using
the following shorthands ($x',x''$ must appear nowhere else in $\phi$, which may
contain other free variables).
\[
\begin{array}{l}
\exists^{\geq 1} x \ \phi(x) := \exists x \ \phi(x) \\
\exists^{\geq 2} x \ \phi(x) := \forall x' \exists x \ E(x',x) \wedge \phi(x) \\ 
\exists^{\geq 3} x \ \phi(x) := \forall x'' \forall x' \exists x \ E(x'',x) \wedge E(x',x) \wedge \phi(x) \\ 
\exists^{\geq 4} x \ \phi(x) := \forall x \ \phi(x) \\ 
\end{array}
\]
On $\mathcal{C}^*_4$, it is easy to verify that, for each $i \in [4]$,
$\exists^i x \ \phi(x)$ holds iff there exist at least $i$ elements $x$
satisfying $\phi$. The result follows easily (note that each use of shorthand
substitution involves new variables corresponding to $x$ and $x'$ above).

That concludes the proof.
\end{proofof}

\subsection{Example for Corollary \ref{cor:twins}}

As an example of why the construction of Corollary~\ref{cor:twins} was
necessary, consider the graph $\bH^{\mathrm{hairy}}_6$, depicted in
Figure~\ref{fig:hairy} on the right,  with $18$ vertices made from $\bC_6$ with
twin paths of length one added to each vertex. The first part of $\Psi'$ takes
the form depicted in Figure \ref{fig:hairy} on the left.

\begin{figure}[h!]
\vspace{-6ex}
\mbox{}\hfill
$\xy/r1.7pc/:
(0,0)*[o][F]{\phantom{s}}="v1";
(2,1)*[o][F]{\phantom{s}}="x1";
(2,0)*[o][F]{\phantom{s}}="v2";
(4,1)*[o][F]{\phantom{s}}="x1'";
(4,0)*[o][F]{\phantom{s}}="v3";
(4,-1)*[o][F]{\phantom{s}}="x2";
(6,1)*[o][F]{\phantom{s}}="x3";
(6,0)*[o][F]{\phantom{s}}="v4";
(6,-1)*[o][F]{\phantom{s}}="x2'";
(8,1)*[o][F]{\phantom{s}}="x3'";
(8,0)*[o][F]{\phantom{s}}="v5";
(10,0)*[o][F]{\phantom{s}}="v6";
{\ar@{-} "v1";"x1"};
{\ar@{-} "v1";"v2"};
{\ar@{-} "x1";"x1'"};
{\ar@{-} "v2";"v3"};
{\ar@{-} "v2";"x2"};
{\ar@{-} "x1'";"v4"};
{\ar@{-} "v3";"v4"};
{\ar@{-} "v3";"x3"};
{\ar@{-} "x3";"x3'"};
{\ar@{-} "x2";"x2'"};
{\ar@{-} "x2'";"v5"};
{\ar@{-} "v4";"v5"};
{\ar@{-} "x3'";"v6"};
{\ar@{-} "v5";"v6"};
"v1"+(0,-0.3)*{v_1};
"x1"+(0,0.3)*{x_1};
"x1'"+(0,0.35)*{x_1'};
"v2"+(0,-0.3)*{v_2};
"x2"+(0,-0.3)*{x_2};
"x2'"+(0,-0.3)*{x_2'};
"v3"+(0,-0.3)*{v_3};
"v4"+(0,-0.3)*{v_4};
"x3"+(0,0.3)*{x_3};
"x3'"+(0,0.35)*{x_3'};
"v5"+(0,-0.3)*{v_5};
"v6"+(0,-0.3)*{v_6};
\endxy
$
\qquad
$\xy/r1.7pc/:
(0,0);(0,0);{\xypolygon6"X"{~>{-}~={0}[o][F]{\phantom{s}}}};
(0,0);(0,0);{\xypolygon12"Y"{~:{(2,0):}~>{}~={15}[o][F]{\phantom{s}}}};
{\ar@{-} "X1";"Y1"};
{\ar@{-} "X1";"Y12"};
{\ar@{-} "X2";"Y3"};
{\ar@{-} "X2";"Y2"};
{\ar@{-} "X3";"Y5"};
{\ar@{-} "X3";"Y4"};
{\ar@{-} "X4";"Y7"};
{\ar@{-} "X4";"Y6"};
{\ar@{-} "X5";"Y9"};
{\ar@{-} "X5";"Y8"};
{\ar@{-} "X6";"Y11"};
{\ar@{-} "X6";"Y10"};
\endxy$
\hfill\mbox{}
\vspace{-2ex}
\caption{Finding $\bC_6$ in $\bH^{\mathrm{hairy}}_6$.\label{fig:hairy}}
\vspace{-4ex}
\end{figure}

\noindent In this case, each of $v_1,v_2,v_3,v_4,x'_1,x_1$ or
$v_2,v_3,v_4,v_5,x'_2,x_2$ must be evaluated to $\bC_6$
($v_3,v_4,v_5,v_6,x'_3,x_3$ need not). For an example where
$v_1,v_2,v_3,v_4,x'_1,x_1$ may fail to be evaluated to a $\bC_6$, but a later
cycle does not, consider two disjoint copies of $\bC_6$ joined by a path of
length two.

\subsection{Tightness of Proposition~\ref{prop:2ton-3}}
\vspace{-1ex}

\begin{proposition}
\label{prop:13P5}
$\{1,3\}$-CSP$(\bP_{5})$ is in $\mathrm{L}$.
\end{proposition}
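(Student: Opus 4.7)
The plan is to characterise yes-instances of $\{1,3\}$-CSP$(\bP_5)$ by a short list of structural conditions on $\mathcal{D}_\psi$, each verifiable in $\L$. Write $T_3$, resp.\ $T_1$, for the variables of $\Psi$ bound by $\exists^{\geq 3}$, resp.\ $\exists$.

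The characterisation I would prove is: $\bP_5\models\Psi$ iff (i) $\mathcal{D}_\psi$ is bipartite; (ii) every $y\in T_3$ has no predecessor in $\mathcal{D}_\psi$; (iii) in each connected component of $\mathcal{D}_\psi$, all $T_3$-variables lie in the same partite class; and (iv) no vertex of $\mathcal{D}_\psi$ is adjacent to two distinct $T_3$-variables. Conditions (i) and (ii) are forced exactly as in claim (1a) of Proposition~\ref{prop:cycles-i}: $\bP_5$ is bipartite and has maximum degree $2$, so Prover cannot produce three neighbours of any previously chosen value. For (iii), the only $3$-subset of a partite class of $\bP_5$ is $\{0,2,4\}$; since any $3$-offer in $\bP_5$ must contain an element of $\{0,2,4\}$, Adversary can force both $T_3$-variables into $\{0,2,4\}$, contradicting the odd $\mathcal{D}_\psi$-distance between them. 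For (iv), if $x$ is adjacent to distinct $T_3$-variables $y_1,y_2$, Adversary assigns $0$ to one and $4$ to the other, which have no common neighbour in $\bP_5$, leaving $x$ unassignable.

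For sufficiency I would exhibit an explicit winning strategy for Prover: in each component designate by $A$ the partite class containing its $T_3$-variables (arbitrary if none), and let $B$ be the other class. Prover offers $\{0,2,4\}$ for every $T_3$-variable; sets $h(x)=2$ for every $x\in T_1\cap A$; and sets $h(x)=3$ for $x\in T_1\cap B$ if its (unique, by (iv)) $T_3$-predecessor is valued $4$, otherwise $h(x)=1$. By (ii), every $T_3$-neighbour of a $T_1$-variable is a predecessor and thus already assigned when needed; $T_3$-variables in $A$ are never predecessors of $T_1\cap A$-variables since $A$ contains no $\mathcal{D}_\psi$-edges; and since $2$ is adjacent in $\bP_5$ to $\{1,3\}$, the rule respects every edge constraint.

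The main obstacle will be sufficiency: confirming that the purely local condition (iv), combined with (i)--(iii), suffices for global consistency of this simple strategy against every possible Adversary assignment in $\{0,2,4\}^{|T_3|}$ and every $\mathcal{D}_\psi$ (in particular with cycles). Once the characterisation is settled, the complexity bound is immediate: (ii) and (iv) are direct syntactic checks; (i), connected components, and (iii) (a parity-of-distance check between same-component $T_3$-pairs, via the bipartite double cover) all reduce to undirected reachability, in $\L$ by \cite{RheingoldJACM}.
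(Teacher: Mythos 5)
Your characterisation (i)--(iv) is equivalent to the paper's (bipartite; no two $\exists^{\geq 3}$ variables at odd or $<4$ distance; no $\exists$ variable preceding and adjacent to an $\exists^{\geq 3}$ variable), and the overall approach --- reduce to $\L$-checkable structural conditions and supply an explicit Prover strategy for sufficiency --- is essentially the same as the paper's. Your static strategy ($A\mapsto 2$, $B\mapsto\{1,3\}$ keyed to the value the unique $T_3$-predecessor received) is a clean closed-form version of the paper's adaptive ``centre-finding'' strategy and does verify globally (every edge of $\mathcal{D}_\psi$ falls into one of the few types you enumerate, each mapped to a $\bP_5$-edge), so the obstacle you flag for sufficiency is not in fact a gap.
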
\vspace{-1ex}

\begin{proof}
We will consider an input $\Psi$ to $\{1,3\}$-CSP$(\bP_{5})$ of the form $Q_1 x_1 Q_2 x_2 \ldots$ $Q_m x_m \ \psi(x_1,x_2,\ldots,x_m)$. An instance of an $\exists^{\geq 3}$ variable is called \emph{trivial} if it has neither a path to another (distinct) $\exists^{\geq 3}$ variable, nor a path to an $\exists$ variable that precedes it in the natural order on $\mathcal{D}_\psi$. A minimum requirement on a yes-instance $\Psi$ is that $\mathcal{D}_\psi$ be bipartite, which can be determined in L; henceforth we assume this. Note that in the game $\mathscr{G}(\Psi,\bP_{5})$, for $\exists$ variables, Prover advances a singleton set (and Adversary must choose that vertex); \mbox{i.e.} effectively only Prover plays.

In this case one partition is a vertex larger than the other, and this partition contains the vertices at both ends. Any non-trivial instance of an $\exists^{\geq 3}$ variable must be evaluable on all vertices of the larger partition. If in $\mathcal{D}_\psi$ there are two $\exists^{\geq 3}$ vertices with a path of odd length or length $< 4$ between them, then $\Psi$ must be a no-instance (recall these variables must be evaluable on all vertices of the larger partition, at extreme ends these are at distance $4$). Further, if in $\mathcal{D}_\psi$ there is an $\exists$ vertex preceding in the order an $\exists^{\geq 3}$ vertex, and these are joined by a path of length $< 2$, then $\Psi$ must be a no-instance. Otherwise, we claim $\Psi$ is a yes-instance, and Prover may witness this by using the following \emph{centre-finding} strategy. Let $3$ be the centre vertex of $\bP_{5}$ and let $2$ be one of its neighbours. Essentially, Prover is always trying to move towards $3$ and $2$ on the $\exists$ variables (recall she always suggests the three variables in the larger bipartition as the witness set for an $\exists^{\geq 3}$ variable). When given an $\exists$ variable, corresponding to a vertex $x$ in $\mathcal{D}_\psi$, to evaluate, Prover must look at all vertices at distance $\leq 2$ from $x$ in $\mathcal{D}_\psi$. If some of these are already evaluated, then Prover looks at the closest she can get to $3$ and $2$ with $x$ (bearing parity in mind). This will result in $x$ being played on one of the vertices $2,3,4$; and on $4$ only if $x$ is adjacent to a vertex already played on $5$. Otherwise, if none of these are already played, then Prover looks to see if there is a path in $\mathcal{D}_\psi$ to either an $\exists^{\geq 3}$ vertex, as yet unplayed, or any other vertex already played. If there is such a path to an already played vertex, then she plays on $3$ or $2$ according to the parity of the played vertex and length of the path. If there is such a path to an unplayed $\exists^{\geq 3}$ vertex, and it is of odd length, Prover plays $x$ on $2$. In all other cases, Prover plays $x$ on $3$. As this always provides a vertex on which Prover may play, it is seen to be a winning strategy for her.
\end{proof}
\vspace{-2ex}

\begin{proposition}
Let $\bH$ be bipartite. Then $\{1,j\}$-CSP$(\bH)$ is in $\mathrm{L}$ if $j \in \{1,|H|-2,|H|-1,|H|\}$.
\end{proposition}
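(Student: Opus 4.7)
The plan is to split the argument by the four cases $j \in \{1, |H|-2, |H|-1, |H|\}$ and handle each in turn. The two boundary cases are immediate: for $j = 1$, the problem is $\CSP(\bH)$, which for bipartite $\bH$ reduces to checking bipartiteness of $\mathcal{D}_\psi$ by \cite{HellNesetril} and is in $\L$ by \cite{RheingoldJACM}; for $j = |H|$, the problem is $\QCSP(\bH)$, which is in $\L$ on bipartite $\bH$ by \cite{CiE2006}.

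For the two near-universal cases $j = |H|-1$ and $j = |H|-2$, the plan is to apply Proposition~\ref{prop:bipartition} wherever possible. That proposition gives $\L$ membership whenever every connected component of $\bH$ has both bipartition parts of size strictly less than $j$. A simple counting argument then restricts the residual graphs $\bH$ to a small structural family: for $j = |H|-1$, only the star $\bH \cong \bK_{1, |H|-1}$ remains; for $j = |H|-2$, the residual possibilities are $\bK_{1, |H|-1}$ again, a star $\bK_{1, |H|-2}$ together with one extra isolated vertex, or a connected bipartite graph with bipartition $(|H|-2, 2)$.

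The core of the argument is an ad hoc treatment of each residual case. For $\bH = \bK_{1, n-1}$ with $n = |H|$, the key observation is that the $n-1$ leaves are indistinguishable for homomorphism purposes, so a homomorphism to $\bH$ amounts to a $2$-colouring of each connected component of $\mathcal{D}_\psi$ into ``centre'' and ``leaf'' sides. In the Prover--Adversary game, an $\exists^{\geq n-1}$ quantifier offers only two strategically meaningful offer shapes: ``all $n-1$ leaves'' (forcing the variable to be a leaf) or ``the centre together with $n-2$ leaves'' (leaving Adversary with a centre/leaf choice). A careful analysis of these two options in the presence of neighbours, together with the requirement that a fixed Prover strategy must commit to a single side designation per connected component, is expected to yield the following easy-to-check characterisation of yes-instances: $\mathcal{D}_\psi$ is bipartite, and for each connected component all $\exists^{\geq n-1}$ variables with any neighbour lie on a common side of its bipartition. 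Both conditions are checkable in $\L$ by \cite{RheingoldJACM}. For $\bH \cong \bK_{1, |H|-2}$ disjoint union with an isolated vertex $w$, the same analysis applies to the star component, while $w$ merely adds flexibility for variables that are already isolated in $\mathcal{D}_\psi$. Finally, for $\bH$ with bipartition $(|H|-2, 2)$, the vertices of the large side split into at most four types according to their neighbourhood in the two-element small side, and the game reduces (via symmetry) to a $\QCSP$ on a bipartite template of bounded size, which is in $\L$ by \cite{CiE2006}.

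The main obstacle will be the $(|H|-2, 2)$ case. Here the two small-side vertices may have genuinely different neighbourhoods, and $\exists^{\geq |H|-2}$ admits many strategically distinct offer shapes (excluding zero, one, or two large-side vertices, with various mixes). Carefully verifying that Prover's winning strategies collapse to a family amenable to a log-space check will require a delicate case analysis on how Adversary's choices propagate through $\mathcal{D}_\psi$; this is the principal technical challenge of the proof.
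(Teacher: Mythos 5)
Your opening two cases ($j=1$, $j=|H|$) and the counting argument that identifies the residual graphs after applying Proposition~\ref{prop:bipartition} match the paper's outline; the enumeration of residual cases for $j=|H|-2$ (the star $\bK_{1,|H|-1}$, the star $\bK_{1,|H|-2}$ plus an isolated vertex, and a connected bipartite graph with bipartition $(|H|-2,2)$) is correct. However, there are two points of divergence, and the second is a genuine gap.

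For the star $\bK_{1,|H|-1}$ you reconstruct from scratch a game-theoretic characterisation, but the paper simply invokes Proposition~\ref{prop:complete-bipartite}, which already puts $\{1,\ldots,k+l\}$-CSP$(\bK_{k,l})$ in $\L$ for every complete bipartite graph; this covers the star directly. Your hand-built argument is not wrong in spirit, but it is unnecessary and you never actually prove your ``easy-to-check characterisation'' -- you only say it is ``expected'' to hold.

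The real gap is the $(|H|-2,2)$ case for $j=|H|-2$. You explicitly defer this (``the principal technical challenge of the proof''), and the route you sketch -- collapsing the large side to at most four vertex types and then reducing to a $\QCSP$ on a bounded-size bipartite template handled by \cite{CiE2006} -- is not justified and does not match the paper. The collapse to $\QCSP$ in particular is not a given: $\exists^{\geq |H|-2}$ on a template with $|H|$ vertices is not the universal quantifier of any smaller template, so you would need to construct an actual reduction, not merely observe a symmetry. The paper instead proves Proposition~\ref{prop:13P5} directly (a concrete Prover strategy, the ``centre-finding'' strategy, showing $\{1,3\}$-CSP$(\bP_5)\in\L$) and observes that the residual $(|H|-2,2)$ cases not already caught by Propositions~\ref{prop:bipartition} or~\ref{prop:complete-bipartite} behave like $\bP_5$. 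Without Proposition~\ref{prop:13P5} or an explicit substitute argument, your proof does not close.
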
\vspace{-1ex}

\begin{proof}
When $j:=|H|$ we have QCSP$(\bH)$ and we refer for the result to \cite{CiE2006}. For $j:=|H|-1$, we argue as in the proof of Proposition~\ref{prop:bipartition} unless $\bH$ is the complete bipartite (star) $\bK_{1,l}$ (for some $l$), in which case we appeal to Proposition~\ref{prop:complete-bipartite}. The case $j:=|H|-2$ is not much more complicated. If we do not fall as in the proof of Proposition~\ref{prop:bipartition} or under Proposition~\ref{prop:complete-bipartite}, then we are equivalent to $\{1,3\}$-CSP$(\bP_5)$ and the result follows from Proposition~\ref{prop:13P5}.
\end{proof}

\end{document}